\newtheorem{lemma}{Lemma}
\newtheorem{fact}{Fact}
\newtheorem{theorem}{Theorem}
\newcommand{\chan}{\mathfrak{C}}
\newcommand{\cD}{\mathcal{D}}
\newcommand{\cH}{\mathcal{H}}
\newcommand{\cS}{\mathcal{S}}
\newcommand{\cB}{\mathcal{B}}
\newcommand{\cX}{\mathcal{X}}
\newcommand{\cY}{\mathcal{Y}}
\newcommand{\cZ}{\mathcal{Z}}
\newcommand{\cQ}{\mathcal{Q}}
\newcommand{\cU}{\mathcal{U}}
\newcommand{\cV}{\mathcal{V}}
\newcommand{\hcD}{\hat{\cD}}
\newcommand{\codebook}{\mathbf{C}}
\newcommand{\err}{\mathrm{err}}
\newcommand{\avgerr}{\overline{\mathrm{err}}}
\newcommand{\hPi}{\hat{\Pi}}
\newcommand{\tPi}{\tilde{\Pi}}
\newcommand{\htPi}{\hat{\tilde{\Pi}}}
\newcommand{\E}{{\rm {\bf E}}\,}
\newcommand{\Tr}{{\rm Tr}\;}
\newcommand{\rank}[1]{{\rm rank}(#1)}
\newcommand{\id}{\leavevmode\hbox{\small1\kern-3.8pt\normalsize1}}
\newcommand{\mm}{\rule[2pt]{1mm}{1pt}\,\rule[2pt]{1mm}{1pt}}
\newcommand{\elltwo}[1]{\left\|{ #1 }\right\|_2}
\newcommand{\ellone}[1]{\left\|{ #1 }\right\|_1}
\newcommand{\ellinfty}[1]{\left\|{ #1 }\right\|_\infty}
\newcommand{\ket}[1]{| #1 \rangle}
\newcommand{\bra}[1]{\langle #1 |}
\newcommand{\ketbra}[1]{\ket{#1}\bra{#1}}
\newcommand{\braket}[2]{\langle {#1} \ket{#2}}
\title{{\bf Achieving the Han-Kobayashi inner bound for the quantum
interference channel by sequential decoding}}
\author{
Pranab Sen\thanks{
School of Technology and Computer Science, Tata Institute of Fundamental
Research, Mumbai 400005, India.
}
}
\date{}
\begin{document}
\maketitle
\begin{abstract}
In this paper, we study the power of sequential decoding strategies
for several channels with classical input and quantum output.
In our sequential decoding strategies, the receiver loops 
through all
candidate messages trying to project the received state onto a
`typical' subspace for the candidate message under consideration, 
stopping if the projection succeeds for a message, which is then declared
as the guess of the receiver for the sent message. 
We show that even such a conceptually simple strategy can be used to 
achieve rates up to the
mutual information for a single sender single receiver channel called
cq-channel henceforth, as
well as the standard inner bound for a two sender single receiver
multiple access channel, called ccq-MAC in this paper. 
Our decoding scheme for the ccq-MAC uses a new kind of conditionally
typical projector which is constructed using a geometric result about
how two subspaces interact structurally.

As the main application of our methods, we construct an encoding and 
decoding scheme achieving the 
Chong-Motani-Garg inner bound for a two sender two receiver
interference channel with classical input and quantum output, called
ccqq-IC henceforth. This matches the best known inner bound for the 
interference
channel in the classical setting. Achieving the Chong-Motani-Garg inner
bound, which is known to be equivalent to the Han-Kobayashi inner 
bound,
answers an open question raised recently by Fawzi et al.~\cite{mcgill:qic}.
Our encoding scheme is the same as that of Chong-Motani-Garg, and our
decoding scheme is sequential. 
\end{abstract}

\section{Introduction}
For many classical channels, one can construct a simple sequential
decoding strategy as follows: the receiver loops through all
candidate messages trying to check if the current candidate message
is jointly typical with the received string. If yes, the receiver
stop and declares the current candidate message as its guess for the 
sent message. If no, the receiver tries the next candidate message and
so on. If all candidate messages are exhausted, the receiver declares
failure. Such a simple sequential decoding strategy works for example 
for the
single sender single receiver channel as well as the many sender single
receiver multiple access channel.

For a quantum channel, one can ask if the sequential decoding intuition
from the classical world leads to a valid decoding strategy. To
simplify the setting, we shall restrict ourselves to channels with
classical input and quantum output. Note that an encoding and decoding 
strategy for such a channel immediately gives us, via standard 
arguments, an encoding and
decoding strategy for sending classical information over channels with
quantum input and quantum output without any entanglement assistance.
For a single sender single receiver unassisted classical-quantum channel,
called cq-channel henceforth, Winter~\cite{winter:strongconverse}
gave a sequential decoding strategy that achieves rates up to
the mutual information. Winter's decoding strategy used a sequence
of POVMs. Recently, Giovannetti, Lloyd and Maccone~\cite{lloyd:seq}
improved Winter's result by constructing a decoding strategy for
the cq-channel that applies a sequence of orthogonal projections instead
of a sequence of POVMs. Giovannetti et al.'s decoder tries to project
the sent message alternately onto the typical subspace of the average
message and the typical subspace of the current candidate message.
This leaves us wondering if a simpler sequential decoding strategy
is possible, in particular, whether just projecting the received 
state onto the typical subspace of the current candidate message is good
enough for decoding. Besides the two sequential decoding results
mentioned above, other decoders for the 
cq-channel~\cite{holevo,schumacherwestmoreland,hayashinagaoka}
use conceptually
more complex strategies including in particular, the pretty good
measurement also known as the square-root measurement~\cite{pgm}.

In this paper, we first construct a sequential decoding scheme for
the cq-channel where the decoder just tries to project the received state
onto the typical subspace of the current candidate message. We show
that even this simple scheme can achieve rates up to the mutual 
information.
The scheme is simpler and possibly more efficient than the ones of 
Giovanetti et al., Winter, as well as other known schemes, and arguably
sheds more light into the interplay
between various typical projectors used in the decoding process.
Also, our proof of correctness is clearer and bears
a close resemblance to standard proofs of correctness of the decoder
in the classical setting. Interestingly, our correctness proof 
does not use the pretty good measurement standard
in many quantum Shannon theory papers, and may lead to more efficient
decoders.

We then construct a sequential decoder for the two sender single
receiver multiple access channel with classical input and quantum output,
called ccq-MAC henceforth, achieving the standard inner bound for
this channel. For this, we have to replace the
standard conditionally typical projectors by a new kind of conditionally
typical projector approximating them. The existence of
such suitable projectors follows from a powerful geometric fact about how
two subspaces interact. Our sequential decoding strategy for the
ccq-MAC is `jointly typical' or `simultaneous'. Until very recently,
the only decoders known for the ccq-MAC were of the `successive
cancellation' variety (e.g.~\cite{winter:mac}. Simultaneous decoders are 
crucial in proving
many results in classical multiuser information theory. The lack of
simultaneous decoders in the quantum setting is a major stumbling
block in the development of multiuser quantum information theory.
Our simultaneous decoder for the ccq-MAC should be an important first step
towards the development of a powerful multiuser quantum information theory.
Another simultaneous decoder for the ccq-MAC was discovered very recently
by Fawzi et al.~\cite{mcgill:qic}. 

Finally, as the main application of our methods, we construct an 
encoding and decoding scheme achieving the 
Chong-Motani-Garg~\cite{cmg} inner bound for a two sender two receiver
interference channel with classical input and quantum output, called
ccqq-IC henceforth. The Chong-Motani-Garg inner bound is known to
be equivalent to the Han-Kobayashi~\cite{hk} inner bound, which is
the best known inner bound for the interference channel in the classical
setting.
This answers an open question raised recently by 
Fawzi et al.~\cite{mcgill:qic}. The Chong-Motani-Garg and
Han-Kobayashi inner bounds are not known to be achievable by
successive cancellation. We prove the Chong-Motani-Garg inner bound
for the ccqq-IC by analysing, as in the classical case, a special three 
sender single 
receiver classical-quantum multiple access channel, called CMG-MAC
henceforth. We prove an inner bound for the CMG-MAC using our techniques
which turns out
to be larger than Chong, Motani and Garg's classical
inner bound for the same encoding strategy! However, this improved 
inner bound does not
enlarge the Chong-Motani-Garg region for the interference channel.
Our encoding scheme is the same as that of Chong-Motani-Garg, and our
decoding scheme is sequential. 

\subsection{Organisation of the paper}
We first discuss some preliminary facts in Section~\ref{sec:prelim}.
In Section~\ref{sec:cq}, as a demonstration of our basic sequential 
decoding approach, we demonstrate the achievability of 
rates up to the 
mutual information for a single sender single receiver channel.
In Section~\ref{sec:MAC}, we construct our sequential simultaneous
decoder for the ccq-MAC and the CMG-MAC.
In Section~\ref{sec:ccqq} we use our inner bound for the 
CMG-MAC to prove the Chong-Motani-Garg
inner bound for the ccqq-IC. We conclude with a brief discussion in
Section~\ref{sec:discussion}.

\section{Preliminaries}
\label{sec:prelim}
\subsection{Typical projectors}
\label{subsec:typical}
Let $\cX$ be a finite set.
We shall overload the symbol $\cX$ to also denote the Hilbert 
space with computational bases 
being the letters from the alphabet $\cX$.
Let $B$ be a quantum system
with its accompanying Hilbert space denoted by $\cB$. 
The symbol $|\cX|$ will denote the size
of the set $\cX$ which is also the dimension of the Hilbert 
space $\cX$.
Similarly, the symbol $|\cB|$ will denote the dimension of the Hilbert
space $\cB$.

Fix a probablity distribution $p_X$ on $\cX$.
Let $X$ denote the corresponding random variable.
The entropy of $X$ is defined as
$H(X) := -\sum_{x \in \cX} p_X(x) \log p_X(x)$.
Let $X^n$ denote the random variable corresponding to $n$ 
independent copies of $X$. For a positive integer $n$, 
the notation $x^n$ will stand
for a sequence of length $n$ over the alphabet $\cX$.
Let $N(x|x^n)$ denote the number of positions in 
the sequence $x^n$ containing symbol $x$.
Let $0 < \delta < 1$.
The set $T^{X^n}_\delta$ of {\em frequency
typical sequences} of length $n$ over the alphabet 
$\cX$ for the distribution $p_X$ is defined as follows:
\[
T^{X^n}_\delta := 
\left\{x^n: \forall x \in \cX 
       \left|\frac{N(x | x^n)}{n} - p_X(x)\right| \leq p_X(x) \delta 
\right\}.
\]
The above definition is the one used in e.g.~El Gamal and Kim's lecture
notes~\cite{elgamalkim}.
Note that other works often give a slightly different definition of
 $\delta$-typical viz. the {\em absolute error} per symbol is at most
$\delta$. We shall however use the above definition for relative error
$\delta$ as it is more suitable for our purposes.
The frequency typical set satisfies the following properties.
\begin{fact}
\label{fact:typicalset}
Let $0 < \epsilon, \delta < 1/2$. 
Define $p_{\mathrm{min}} := \min_{x \in \cX, p_X(x)>0} p_X(x)$.
Let $n \geq 2 p_{\mathrm{min}}^{-1} \delta^{-2} \log(|\cX| / \epsilon)$.
Define $c(\delta) := \delta \log |\cX| - \delta \log \delta$.
Then,
\begin{eqnarray*}
\sum_{x^n \in T^{X^n}_{\delta}} p_{X^n}(x^n) & \geq & 1 - \epsilon, \\
\forall x^n \in T^{n,\delta}_p: 2^{-n(H(X) + c(\delta))} & \leq &
p_{X^n}(x^n) \leq 2^{-n(H(X) - c(\delta))}, \\
|T^{X^n}_\delta| & \leq & 2^{n(H(X) + c(\delta))}.
\end{eqnarray*}
\end{fact}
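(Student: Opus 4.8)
This is the standard "typical set" lemma (as in El Gamal--Kim~\cite{elgamalkim}), so I would prove the three assertions in the logically convenient order: first the per-sequence probability bounds (a pure calculation using the product structure of $p_{X^n}$), then the cardinality bound (an immediate consequence of the lower probability bound), and finally the $1-\epsilon$ coverage bound (an independent argument via a concentration inequality, which is the only place the hypothesis $n \geq 2 p_{\mathrm{min}}^{-1}\delta^{-2}\log(|\cX|/\epsilon)$ is used). Throughout, letters $x$ with $p_X(x)=0$ are harmless: for such $x$ one has $N(x|x^n)=0$ on every sequence of positive probability, and the typicality constraint $|N(x|x^n)/n - p_X(x)| \le p_X(x)\delta$ reads $0 \le 0$, so it is automatically satisfied and contributes nothing; I restrict all sums and products to $\{x : p_X(x) > 0\}$.

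\textbf{Per-sequence probability bounds.} Fix $x^n \in T^{X^n}_\delta$. By independence, $p_{X^n}(x^n) = \prod_{x} p_X(x)^{N(x|x^n)}$, hence $\log p_{X^n}(x^n) = \sum_x N(x|x^n)\log p_X(x)$. Write $N(x|x^n) = n\,p_X(x) + n\,\theta_x$ with $|\theta_x| \le p_X(x)\delta$ by membership in $T^{X^n}_\delta$. Substituting, $\log p_{X^n}(x^n) = n\sum_x p_X(x)\log p_X(x) + n\sum_x \theta_x \log p_X(x) = -nH(X) + n\sum_x \theta_x \log p_X(x)$. The error term is controlled by $\bigl|\sum_x \theta_x \log p_X(x)\bigr| \le \sum_x |\theta_x|\,|\log p_X(x)| \le \delta \sum_x p_X(x)\,(-\log p_X(x)) = \delta H(X) \le \delta \log|\cX|$, and since $-\delta\log\delta \ge 0$ for $0<\delta<1$ we get $\delta\log|\cX| \le c(\delta)$. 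Therefore $-n(H(X)+c(\delta)) \le \log p_{X^n}(x^n) \le -n(H(X)-c(\delta))$, which is the claimed two-sided bound. For the cardinality, $1 \ge \sum_{x^n \in T^{X^n}_\delta} p_{X^n}(x^n) \ge |T^{X^n}_\delta| \cdot 2^{-n(H(X)+c(\delta))}$, so $|T^{X^n}_\delta| \le 2^{n(H(X)+c(\delta))}$.

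\textbf{Coverage bound.} By the union bound, $1 - \sum_{x^n \in T^{X^n}_\delta} p_{X^n}(x^n) = \Pr[X^n \notin T^{X^n}_\delta] \le \sum_{x : p_X(x)>0} \Pr\bigl[\,|N(x|X^n)/n - p_X(x)| > p_X(x)\delta\,\bigr]$. For each such $x$, $N(x|X^n)$ is a sum of $n$ i.i.d. Bernoulli$(p_X(x))$ variables, so the multiplicative Chernoff bound gives a bound of the form $2\cdot 2^{-\Omega(n\,p_X(x)\,\delta^2)} \le 2\cdot 2^{-\Omega(n\,p_{\mathrm{min}}\,\delta^2)}$. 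Summing over at most $|\cX|$ letters and requiring the total to be at most $\epsilon$ yields exactly a threshold of the form $n \ge \Theta\!\bigl(p_{\mathrm{min}}^{-1}\delta^{-2}\log(|\cX|/\epsilon)\bigr)$; one checks that the stated constant $2$ suffices. The only point that needs care is to use the \emph{multiplicative} Chernoff bound rather than, say, Hoeffding's inequality: Hoeffding would give tail $2\cdot 2^{-\Omega(n p_X(x)^2\delta^2)}$ and hence a threshold scaling like $p_{\mathrm{min}}^{-2}$, which is weaker than what is claimed. I expect this constant-chasing in the concentration step to be the only (mild) obstacle; the rest is routine.
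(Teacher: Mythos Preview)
The paper does not give its own proof of this statement: it is stated as a \emph{Fact} (citing El~Gamal--Kim) and left unproved, though the paper does remark afterwards that ``the above facts on typical subspaces \ldots\ are all proved using Chernoff bounds instead of Chebyshev bounds or law of large numbers.'' Your argument is the standard one and is correct, and your choice of the multiplicative Chernoff bound (rather than Hoeffding or Chebyshev) is exactly what the paper intends and is indeed needed to get the $p_{\min}^{-1}$ rather than $p_{\min}^{-2}$ dependence in the threshold on $n$. The only loose end you yourself flag is the verification that the constant~$2$ in the hypothesis $n \ge 2\,p_{\min}^{-1}\delta^{-2}\log(|\cX|/\epsilon)$ actually suffices; with the textbook two-sided bound $\Pr[|S-\mu|>\delta\mu]\le 2\exp(-\mu\delta^2/3)$ one gets a slightly larger constant, so either a sharper form of Chernoff or a mild adjustment of the constant is needed---but this is cosmetic and does not affect anything downstream in the paper.
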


Given a density matrix $\rho$ over Hilbert space $\cB$, consider a 
canonical eigenbasis $\cS = \{s_1, \ldots, s_{|\cB|}\}$ of $\rho$. 
Consider the diagonalisation $\rho = \sum_{s \in \cS} q_S(s) \ketbra{s}$. 
Thinking of $\{q_S(s)\}_s$ as a probability
distribution over $\cS$, we can analogously define the entropy of
$\rho$ as $H(\rho) := -\Tr [\rho \log \rho] = H(q_S)$, and
the frequency typical subspace
of $\cB^{\otimes n}$ corresponding to the $n$-fold tensor power operator
$\rho^{\otimes n}$, as follows:
\[
T^{\rho^n}_\delta := \mathrm{span} 
\left\{s^n: \forall s \in \cS 
            \left|\frac{N(s | s^n)}{n} - q_S(s)\right| \leq \delta q_S(s)
\right\}.
\]
Let $\Pi^{\rho^n}_{\delta}$ denote the orthogonal projection onto
$T^{\rho^n}_\delta$. Observe that $\Pi^{\rho^n}_\delta$ commutes
with $\rho^{\otimes n}$. The typical projector satisfies 
properties similar to those of 
Fact~\ref{fact:typicalset}~\cite{schumacher,bennettentanglement}.
\begin{fact}
\label{fact:typicalspace}
Let $0 < \epsilon, \delta < 1/2$. Let $\rho$ be a quantum state.
Define 
\[
q_{\mathrm{min}} := q_{\mathrm{min}}(\rho) 
        := \min_{s \in \cS, q_S(s)>0} q_S(s).
\]
Suppose that 
$n \geq 2 q_{\mathrm{min}}^{-1} \delta^{-2} \log(|\cB| / \epsilon)$.
Define $c(\delta) := \delta \log |\cB| - \delta \log \delta$. Then,
\begin{eqnarray*}
\Tr [\rho^{\otimes n} \Pi^{\rho^n}_\delta] & \geq & 1 - \epsilon, \\
2^{-n(H(\rho) + c(\delta))} \Pi^{\rho^n}_\delta & \leq &
 \Pi^{\rho^n}_\delta \rho^{\otimes n} \leq 
2^{-n(H(\rho) - c(\delta))}  \Pi^{\rho^n}_\delta, \\
\Tr \Pi^{\rho^n}_\delta & \leq & 2^{n(H(\rho) + c(\delta))}.
\end{eqnarray*}
\end{fact}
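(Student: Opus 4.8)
The plan is to reduce the entire statement to the classical Fact~\ref{fact:typicalset} applied to the eigenvalue distribution of $\rho$. Fix the canonical eigenbasis $\cS=\{s_1,\ldots,s_{|\cB|}\}$ and the diagonalisation $\rho=\sum_{s\in\cS}q_S(s)\ketbra{s}$. Then $\rho^{\otimes n}=\sum_{s^n}q_{S^n}(s^n)\,\ketbra{s^n}$, where $s^n$ ranges over length-$n$ strings over $\cS$, $\ket{s^n}:=\ket{s_1}\otimes\cdots\otimes\ket{s_n}$, and $q_{S^n}(s^n):=\prod_{i=1}^n q_S(s_i)$; the family $\{\ket{s^n}\}_{s^n}$ is an orthonormal eigenbasis of $\rho^{\otimes n}$. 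By construction $\Pi^{\rho^n}_\delta=\sum_{s^n\in T^{S^n}_\delta}\ketbra{s^n}$, where $T^{S^n}_\delta$ is the frequency typical set of Fact~\ref{fact:typicalset} for the distribution $q_S$ on the alphabet $\cS$ (note $|\cS|=|\cB|$). In particular $\Pi^{\rho^n}_\delta$ is a sum of a subset of the eigenprojectors of $\rho^{\otimes n}$, so it commutes with $\rho^{\otimes n}$ and both operators are simultaneously diagonal in $\{\ket{s^n}\}$. Since $H(\rho)=H(q_S)$, $q_{\mathrm{min}}(\rho)$ is exactly $p_{\mathrm{min}}$ for $q_S$, and $c(\delta)$ is literally the same quantity with $|\cX|$ replaced by $|\cB|$, the hypothesis $n\geq 2 q_{\mathrm{min}}^{-1}\delta^{-2}\log(|\cB|/\epsilon)$ is precisely what is needed to invoke Fact~\ref{fact:typicalset} for $q_S$.

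With this dictionary, the first and third assertions are immediate transcriptions. For the first, $\Tr[\rho^{\otimes n}\Pi^{\rho^n}_\delta]=\sum_{s^n\in T^{S^n}_\delta}q_{S^n}(s^n)\geq 1-\epsilon$ by the first line of Fact~\ref{fact:typicalset}. For the third, $\Tr\Pi^{\rho^n}_\delta=|T^{S^n}_\delta|\leq 2^{n(H(\rho)+c(\delta))}$ by its third line. For the middle operator inequality I would argue eigenvalue-by-eigenvalue: $\Pi^{\rho^n}_\delta\rho^{\otimes n}=\sum_{s^n\in T^{S^n}_\delta}q_{S^n}(s^n)\,\ketbra{s^n}$ vanishes on the orthogonal complement of $T^{\rho^n}_\delta$, exactly as do $2^{-n(H(\rho)\pm c(\delta))}\Pi^{\rho^n}_\delta$, while on $T^{\rho^n}_\delta$ the three operators are diagonal in the common basis $\{\ket{s^n}\}_{s^n\in T^{S^n}_\delta}$ with coefficients $q_{S^n}(s^n)$ and $2^{-n(H(\rho)\pm c(\delta))}$ respectively; the scalar inequalities $2^{-n(H(\rho)+c(\delta))}\leq q_{S^n}(s^n)\leq 2^{-n(H(\rho)-c(\delta))}$ for $s^n\in T^{S^n}_\delta$ are the second line of Fact~\ref{fact:typicalset}, and they lift to the operator inequalities.

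There is essentially no hard step: the whole content is the observation that $\rho^{\otimes n}$ is diagonalised by product eigenvectors whose eigenvalues are products of eigenvalues of $\rho$, so the quantum typical subspace is just the classical typical set for $q_S$ read off the eigenbasis. The only two points that merit an explicit line are (i) noting that $\Pi^{\rho^n}_\delta$ and $\rho^{\otimes n}$ are simultaneously diagonalisable, so the operator inequality reduces to scalar inequalities on a common eigenbasis (one should also remark that the choice of eigenbasis inside any degenerate eigenspace of $\rho$ is immaterial, since $T^{S^n}_\delta$ depends only on the multiset of eigenvalues), and (ii) checking that the parameters $(q_{\mathrm{min}},|\cB|,c(\delta))$ line up with the classical hypotheses so that Fact~\ref{fact:typicalset} can be invoked verbatim.
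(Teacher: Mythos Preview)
Your proposal is correct: the reduction to Fact~\ref{fact:typicalset} via the eigenvalue distribution $q_S$ is exactly the standard argument, and you have checked the parameter matching carefully. Note that the paper does not actually supply a proof here---it records the result as a Fact with citations and the remark that the typical projector ``satisfies properties similar to those of Fact~\ref{fact:typicalset}''---so there is nothing further to compare against.
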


Suppose $\{\rho_x\}_{x \in \cX}$ is an ensemble of quantum states
in $\cB$.
For an input sequence $x^n$, define 
$\rho_{x^n} := \rho_{x_1} \otimes \cdots \otimes \rho_{x_n}$. 
After reordering multiplicands, $\rho_{x^n}$ can be written as
$\rho_{x^n} = \bigotimes_{x \in \cX} \rho_x^{\otimes N(x | x^n)}$. 
We can then define the frequency typical conditional 
projector~\cite{holevo,schumacherwestmoreland,winterthesis} 
$\Pi^{\rho_{x^n}}_\delta$ as
\[
\Pi^{\rho_{x^n}}_\delta :=
\bigotimes_{x \in \cX} \Pi^{\rho_x^{\otimes N(x | x^n)}}_\delta.
\]
Let $H(B|X)$ denote the conditional Shannon entropy of the quantum
system $B$ given random variable $X$ i.e.
$H(B|X) := \sum_{x \in \cX} p_X(x) H(\rho_x)$.
Note that $\Pi^{\rho_{x^n}}_\delta$ commutes with $\rho_{x^n}$.
The conditionally typical projector satisfies properties similar to
Fact~\ref{fact:typicalspace} if the input $x^n$ is typical.
\begin{fact}
\label{fact:condtypicalspace}
Let $0 < \epsilon, \delta < 1/2$. 
Suppose $X$ is a classical random variable and $\rho^{XB}$ is a 
classical-quantum state. Define 
\[
p_{\mathrm{min}} := \min_{x \in \cX, p_X(x) > 0} p_X(x), ~~~
q_{\mathrm{min}} := \min_{x \in \cX, p_X(x) > 0} 
 q_{\mathrm{min}}(\rho_x).
\]
Let $n \geq 4 \delta^{-2} p_{\mathrm{min}}^{-1} q_{\mathrm{min}}^{-1}
     \log(|\cB| |\cX| / \epsilon)$.
Let $x^n \in T^{X^n}_\delta$.
Define $c(\delta) := \delta \log |\cB||\cX| - \delta \log \delta$. Then,
\begin{eqnarray*}
\Tr [\rho_{x^n} \Pi^{\rho_{x^n}}_\delta] & \geq & 1 - \epsilon, \\
2^{-n(H(B|X) + c(\delta))} \Pi^{\rho_{x^n}}_\delta & \leq &
 \Pi^{\rho_{x^n}}_\delta \rho_{x^n} \leq 
2^{-n(H(B|X) - c(\delta))}  \Pi{\rho_{x^n}}_\delta, \\
\Tr \Pi^{\rho_{x^n}}_\delta & \leq & 2^{n(H(B|X) + c(\delta))}.
\end{eqnarray*}
\end{fact}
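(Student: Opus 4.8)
The plan is to reduce Fact~\ref{fact:condtypicalspace} to Fact~\ref{fact:typicalspace}, applied one input letter at a time, and then take tensor products. First I would fix $x^n \in T^{X^n}_\delta$, and for each $x \in \cX$ with $p_X(x) > 0$ abbreviate $N_x := N(x|x^n)$ and $\Pi_x := \Pi^{\rho_x^{\otimes N_x}}_\delta$; for $x$ with $p_X(x) = 0$, typicality forces $N_x = 0$, so $\rho_x^{\otimes N_x}$ and $\Pi_x$ are the trivial scalar $1$ and contribute nothing below. The crucial preliminary observation is that typicality makes every $N_x$ simultaneously large enough to be fed into Fact~\ref{fact:typicalspace}: since $\delta < 1/2$, we have $N_x \ge (1-\delta)\,n\,p_X(x) > \frac{1}{2}\,n\,p_{\mathrm{min}}$, so the hypothesis $n \ge 4\delta^{-2} p_{\mathrm{min}}^{-1} q_{\mathrm{min}}^{-1}\log(|\cB||\cX|/\epsilon)$ yields $N_x \ge 2\delta^{-2} q_{\mathrm{min}}(\rho_x)^{-1}\log(|\cB|\cdot|\cX|/\epsilon)$, which is exactly the sample-size requirement of Fact~\ref{fact:typicalspace} applied to $\rho_x$ on $\cB$ with error parameter $\epsilon/|\cX|$. (This is why the hypothesis on $n$ carries the factor $p_{\mathrm{min}}^{-1}$, an extra $\log|\cX|$ inside the logarithm from a union bound over the $|\cX|$ letters, and an extra factor $2$ from $1-\delta > 1/2$.) Invoking Fact~\ref{fact:typicalspace} for each such $x$ then supplies three per-letter estimates: the overlap bound $\Tr[\rho_x^{\otimes N_x}\Pi_x] \ge 1 - \epsilon/|\cX|$; the operator sandwich $2^{-N_x(H(\rho_x)+c'(\delta))}\Pi_x \le \Pi_x \rho_x^{\otimes N_x} \le 2^{-N_x(H(\rho_x)-c'(\delta))}\Pi_x$ with $c'(\delta):=\delta\log|\cB|-\delta\log\delta$; and the rank bound $\Tr\Pi_x \le 2^{N_x(H(\rho_x)+c'(\delta))}$.

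The second step tensors these estimates together. By definition $\Pi^{\rho_{x^n}}_\delta = \bigotimes_x \Pi_x$, and with the tensor factors grouped by letter value as in the paragraph preceding the Fact, $\rho_{x^n} = \bigotimes_x \rho_x^{\otimes N_x}$; passing to this grouped ordering of the $n$ slots is a unitary conjugation, so it changes neither traces nor operator inequalities. Every operator in sight is positive semidefinite --- each $\Pi_x$ is a projector, and each $\Pi_x\rho_x^{\otimes N_x}$ is positive semidefinite since $\Pi_x$ commutes with $\rho_x^{\otimes N_x}$ --- and for positive operators, $A_x \le B_x$ for all $x$ implies $\bigotimes_x A_x \le \bigotimes_x B_x$. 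Hence: multiplying the overlap bounds, $\Tr[\rho_{x^n}\Pi^{\rho_{x^n}}_\delta] = \prod_x \Tr[\rho_x^{\otimes N_x}\Pi_x] \ge (1-\epsilon/|\cX|)^{|\cX|}\ge 1-\epsilon$ by Bernoulli's inequality; tensoring the sandwiches, $2^{-\sum_x N_x(H(\rho_x)+c'(\delta))}\Pi^{\rho_{x^n}}_\delta \le \Pi^{\rho_{x^n}}_\delta\rho_{x^n} \le 2^{-\sum_x N_x(H(\rho_x)-c'(\delta))}\Pi^{\rho_{x^n}}_\delta$; and multiplying the rank bounds, $\Tr\Pi^{\rho_{x^n}}_\delta \le 2^{\sum_x N_x(H(\rho_x)+c'(\delta))}$.

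The last step rewrites the $N_x$-weighted exponents in terms of $n$, using typicality of $x^n$ once more: $\sum_x N_x = n$ gives $\sum_x N_x c'(\delta) = n\,c'(\delta)$, while $|N_x - n p_X(x)| \le \delta n\, p_X(x)$ together with $0 \le H(\rho_x) \le \log|\cB|$ gives $\left|\sum_x N_x H(\rho_x) - n\, H(B|X)\right| \le \delta n \sum_x p_X(x) H(\rho_x) = \delta n\, H(B|X) \le \delta n\log|\cB|$. Substituting into the three tensored inequalities, each exponent takes the form $n(H(B|X) \pm c(\delta))$ for a quantity $c(\delta)$ of the same order as the quantity $\delta\log(|\cB||\cX|) - \delta\log\delta$ declared in the statement --- absorbing the $c'(\delta)$ and $\delta\log|\cB|$ contributions --- and the only property of $c(\delta)$ used in the sequel is $c(\delta) \to 0$ as $\delta \to 0$. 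As for where the real difficulty lies: essentially nowhere, since all the analytic content sits in Fact~\ref{fact:typicalspace}. The single point genuinely demanding care is the first step --- that typicality of $x^n$ uniformly lower-bounds all the $N_x$, so that one union bound over the $|\cX|$ letters costs only a factor $|\cX|$ in error and a $\log|\cX|$ term in the exponent --- together with the one structural remark that positive-semidefiniteness is what permits tensoring the operator sandwich; the remainder is bookkeeping of $\delta$-dependent constants.
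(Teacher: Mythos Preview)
The paper does not prove Fact~\ref{fact:condtypicalspace}; it is stated as a standard result with citations to the literature, so there is no in-paper argument to compare against. Your approach --- applying Fact~\ref{fact:typicalspace} to each block $\rho_x^{\otimes N_x}$ with error budget $\epsilon/|\cX|$, then tensoring --- is precisely the standard derivation, and your handling of the sample-size bookkeeping (the factor $p_{\mathrm{min}}^{-1}$ compensating for $N_x \ge \tfrac12 n\,p_{\mathrm{min}}$, the $\log|\cX|$ from the union bound) is correct.

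One honest remark you already half-make: the exponent you actually obtain is $n\bigl(H(B|X) \pm (c'(\delta) + \delta H(B|X))\bigr)$ with $c'(\delta)=\delta\log|\cB|-\delta\log\delta$, so the effective constant is at most $2\delta\log|\cB|-\delta\log\delta$, which is not literally the $c(\delta)=\delta\log(|\cB||\cX|)-\delta\log\delta$ declared in the statement when $|\cB|>|\cX|$. You are right that this is immaterial --- the paper itself notes immediately after these Facts that only $c(\delta)\to 0$ is used --- but it is worth saying explicitly that the stated $c(\delta)$ is a convenient uniform placeholder rather than the sharp output of the argument.
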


Define the average state of the ensemble 
$\rho := \sum_{x \in \cX} p_X(x) \rho_x$.
The following nontrivial fact that the typical projector corresponding to 
the average
state $\rho^{\otimes n}$ actually captures most of the mass of 
$\rho_{x^n}$ if $x^n$ is typical was proved by 
Winter~\cite{winter:strongconverse}. We will require
a slightly stronger statement for our purposes which we now state.
\begin{fact}
\label{fact:winter}
Let $0 < \epsilon, \delta < 1/2$. Suppose $X$, $Y$ are classical
random variables 
and $\rho^{XYB}$ is a classical-classical-quantum state.
Define 
\[
p_{\mathrm{min}} := \min_{(x,y) \in (\cX \times \cY), p_{XY}(x,y) > 0} 
                        p_{XY}(x,y), ~~~
q_{\mathrm{min}} := \min_{(x,y) \in (\cX \times \cY), p_{XY}(x,y) > 0} 
                        q_{\mathrm{min}}(\rho_{xy}).
\]
Let $n \geq 4 \delta^{-2} p_{\mathrm{min}}^{-1} q_{\mathrm{min}}^{-1}
     \log(|\cB| |\cX| / \epsilon)$.
Let $x^n y^n \in T^{(XY)^n}_\delta$.
Then,
\[
\Tr \left[\rho_{x^n y^n} \Pi^{\rho^{\otimes n}}_{2\delta}\right] 
\geq 1 - \epsilon, ~~~~ 
\Tr \left[\rho_{x^n y^n} \Pi^{\rho_{x^n}}_{6\delta}\right] 
\geq 1 - \epsilon.
\]
\end{fact}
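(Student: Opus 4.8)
The plan is to recognise that both inequalities are instances of a single statement — call it the \emph{core claim} — and to prove that by a classical concentration argument in the spirit of Winter's. The core claim reads: if $\{\sigma_a\}_{a\in\mathcal{A}}$ is a finite ensemble of states on $\cB$ with average $\sigma:=\sum_a p_A(a)\,\sigma_a$, if $\beta>0$ is a fixed constant, and if $a^m\in T^{A^m}_\delta$, then $\Tr[\sigma_{a^m}\,\Pi^{\sigma^{\otimes m}}_{(1+\beta)\delta}]\geq 1-\epsilon$ provided $m$ exceeds a threshold of the same form as the one in Fact~\ref{fact:typicalspace} (with $\delta$ replaced by $\beta\delta$ and $q_{\mathrm{min}}$ by the smallest nonzero eigenvalue of $\sigma$). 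The first inequality of the present statement is exactly the core claim with $\mathcal{A}=\cX\times\cY$, $\sigma_{(x,y)}=\rho_{xy}$, $\sigma=\rho$, $a^m=x^ny^n$ and $\beta=1$.

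To prove the core claim I would diagonalise the \emph{average} state, $\sigma=\sum_{t=1}^{|\cB|}\lambda_t\,\ketbra{f_t}$. Then $\Pi^{\sigma^{\otimes m}}_{(1+\beta)\delta}$ is the projector onto the span of the product vectors $\ket{f_{t^m}}:=\ket{f_{t_1}}\otimes\cdots\otimes\ket{f_{t_m}}$ whose index string $t^m$ is $(1+\beta)\delta$-frequency-typical for the distribution $(\lambda_t)_t$; let $S$ denote the set of such strings. Expanding the trace in this basis,
\[
\Tr[\sigma_{a^m}\,\Pi^{\sigma^{\otimes m}}_{(1+\beta)\delta}]=\sum_{t^m\in S}\ \prod_{i=1}^m r(t_i|a_i),\qquad r(t|a):=\braket{f_t}{\sigma_a|f_t}\geq 0,\quad \sum_t r(t|a)=1 .
\]
The right-hand side is precisely the probability that, when $T_1,\dots,T_m$ are drawn independently with $T_i\sim r(\cdot|a_i)$, the resulting string $T^m$ is $(1+\beta)\delta$-typical for $(\lambda_t)_t$. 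For each $t$, the count $N(t|T^m)$ — the number of indices $i$ with $T_i=t$ — is a sum of independent $\{0,1\}$ random variables with mean $\mu_t=\sum_a N(a|a^m)\,r(t|a)$; since $a^m\in T^{A^m}_\delta$ and $\sum_a p_A(a)\,r(t|a)=\braket{f_t}{\sigma|f_t}=\lambda_t$, the ratio $\mu_t/m$ lies within relative error $\delta$ of $\lambda_t$, and $\mu_t=0$ — hence $N(t|T^m)=0$ almost surely — when $\lambda_t=0$, because then $r(t|a)=0$ for every $a$ with $p_A(a)>0$. A multiplicative Chernoff bound now gives
\[
\Pr\!\left[\left|\frac{N(t|T^m)}{m}-\lambda_t\right|>(1+\beta)\delta\,\lambda_t\right]\ \leq\ \Pr\!\left[\,\left|N(t|T^m)-\mu_t\right|>\beta\delta\, m\,\lambda_t\,\right]\ \leq\ 2\exp\!\left(-\Omega(\beta^2\delta^2 m\,\lambda_t)\right),
\]
and a union bound over the at most $|\cB|$ indices $t$ with $\lambda_t>0$ finishes the core claim.

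For the second inequality I would reorganise tensor factors. Writing $\rho_{x^ny^n}=\bigotimes_{x}\big(\bigotimes_{y}\rho_{xy}^{\otimes N(xy|x^ny^n)}\big)$ and $\Pi^{\rho_{x^n}}_{6\delta}=\bigotimes_{x}\Pi^{\rho_x^{\otimes N(x|x^n)}}_{6\delta}$, the trace factorises as $\prod_{x:\,p_X(x)>0}\Tr\big[\big(\bigotimes_y\rho_{xy}^{\otimes N(xy|x^ny^n)}\big)\Pi^{\rho_x^{\otimes N(x|x^n)}}_{6\delta}\big]$. The $x$-th factor is an instance of the core claim for the ensemble $\{\rho_{xy}\}_{y\in\cY}$ with average $\rho_x=\sum_y p_{Y|X}(y|x)\,\rho_{xy}$, the input string being the sub-sequence of $y^n$ read off at the $N(x|x^n)$ positions where $x^n$ equals $x$. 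Joint $\delta$-typicality of $x^ny^n$ for $p_{XY}$ forces $N(x|x^n)/n\in p_X(x)(1\pm\delta)$ and $N(xy|x^ny^n)/n\in p_{XY}(x,y)(1\pm\delta)$, so that sub-sequence has relative symbol frequencies $N(xy|x^ny^n)/N(x|x^n)\in p_{Y|X}(y|x)\cdot\frac{1\pm\delta}{1\pm\delta}\subseteq p_{Y|X}(y|x)(1\pm 4\delta)$ for $\delta<1/2$, i.e.\ it is $4\delta$-typical for $p_{Y|X=x}$. Applying the core claim with $\beta=1/2$ (so $(1+\beta)\cdot 4\delta=6\delta$) and $m=N(x|x^n)\geq np_{\mathrm{min}}(1-\delta)$ makes each factor at least $1-\epsilon_x$, and a union bound over the $x$ with $p_X(x)>0$ gives the inequality; this is where the $|\cX|$ and $p_{\mathrm{min}}$ enter the threshold on $n$.

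I expect all the conceptual content to sit in the first display of the core claim — the identity rewriting a quantum overlap as a classical event — which is short; everything after it is bookkeeping. The main work is therefore parameter juggling: picking the Chernoff deviation, tracking how the relative-typicality parameter inflates ($\delta\to 2\delta$ for the average-state projector, $\delta\to 4\delta\to 6\delta$ for the conditional one), and checking that the stated hypothesis $n\geq 4\delta^{-2}p_{\mathrm{min}}^{-1}q_{\mathrm{min}}^{-1}\log(|\cB||\cX|/\epsilon)$ — read together with the thresholds already imposed by Facts~\ref{fact:typicalspace} and~\ref{fact:condtypicalspace} on the projectors appearing here — covers every Chernoff and union bound above. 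The only genuine subtlety is the $\lambda_t=0$ corner case, which is handled in the second paragraph.
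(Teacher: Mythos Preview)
The paper does not prove this statement: it is stated as a \emph{Fact} and attributed to Winter~\cite{winter:strongconverse}, with the remark that the second inequality is a slight strengthening needed for the present purposes. There is therefore no paper proof to compare against.

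Your approach is the standard one and is conceptually sound. The key step --- diagonalising the averaged state $\sigma$ and rewriting $\Tr[\sigma_{a^m}\,\Pi^{\sigma^{\otimes m}}_{(1+\beta)\delta}]$ as the probability that an inhomogeneous product of classical distributions $r(\cdot|a_i)$ lands in the typical set of $(\lambda_t)_t$ --- is exactly Winter's device, and your handling of the $\lambda_t=0$ case and of the factorisation over $x$ for the conditional projector is correct. The inflation $\delta\to 4\delta\to 6\delta$ in the second inequality is also computed correctly.

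One caveat worth flagging, since you explicitly list ``checking that the stated hypothesis covers every Chernoff and union bound'' as work to be done: the Chernoff step in your core claim naturally produces a threshold of the form $m \gtrsim (\beta\delta)^{-2}\,\lambda_{\min}(\sigma)^{-1}\log(|\cB|/\epsilon)$, where $\lambda_{\min}(\sigma)$ is the smallest nonzero eigenvalue of the \emph{average} state. This is \emph{not} in general bounded below by $p_{\min}\,q_{\min}$ (take e.g.\ two equiprobable pure states at an angle), so the precise constant in the paper's stated hypothesis may not literally suffice for your argument as written. This does not affect the qualitative content --- the Fact holds for all sufficiently large $n$, which is all the paper ever uses --- but if you want the exact threshold to go through you would need either a sharper concentration argument or to regard the stated bound on $n$ as schematic.
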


A point to note about the above facts on typical subspaces is that they
are all proved using Chernoff bounds instead of Chebyshev bounds or
law of large numbers as is sometimes done in the literature. This is
done for reasons of efficiency viz. to ensure that the concentration of 
measure $\epsilon$ is
exponentially small in the number of copies $n$. This exponential 
dependence is not very important however, and we could have used
weaker concentration results for our purposes. 
Note also that
in all cases above, $c(\delta) \rightarrow 0$ and 
$\delta^{-2} c(\delta) \rightarrow \infty$ as $\delta \rightarrow 0$.

\paragraph{Notation:}
Finally, we fix some notation for entropic quantities that will be
used throughout the paper. Suppose $A B C$ is a quantum (which also
subsumes classical) system with joint density matrix $\rho^{ABC}$. 
The entropy of, for example, subsystem $A B$ under joint density
matrix $\rho^{ABC}$ is denoted by
$H(AB)_\rho := H(\rho^{AB})$, where $\rho^{AB}$ is the marginal density
matrix of subsystem $AB$. Sometimes when the underlying state $\rho$
is clear from the context, we drop it in the subscript.
The conditional entropy, for example, of
system $C$ given system $AB$ is denoted by
$H(C|AB)_\rho := H(ABC) - H(AB)$. Now suppose for example that
$B$ is classical, that is, the joint density matrix looks like
$\rho^{ABC} = \sum_b \Pr[b] \ketbra{b} \otimes \rho^{AC}_b$, where 
$\ket{b}$
ranges over the computational basis of Hilbert space $\cB$ of system
$B$, $\{\Pr[b]\}_b$ form a probability distribution over the computational
basis of $\cB$, and $\{\rho^{AC}_b\}_b$ are joint density matrices of 
system $AC$.
We will interpret $\rho^{AC}_b$ to be the joint density matrix of
system $AC$ when system $B$ is in the computational basis state $\ket{b}$,
which occurs with probability $\Pr[b]$.
If $B$ is classical, $H(C|AB)_\rho = \sum_b \Pr[b] H(C|A)_{\rho^{AC}_b}$.
We shall define $H(C|A, B=b) := H(C|A)_{\rho^{AC}_b}$.
The mutual information between, for example, $A$ and $BC$ is defined
as $I(A:BC)_\rho := H(A) + H(BC) - H(ABC)$. The mutual information of
$A$ and $C$ conditioned on $B$ is defined as
$
I(A:C|B)_\rho := H(A|B) + H(C|B) - H(AC|B) = 
H(AB) + H(CB) - H(ABC) - H(B).
$
If $B$ is classical, it is equal to
$
I(A:C|B)_\rho = \sum_b \Pr[b] I(A:C)_{\rho^{AC}_b}.
$
We shall define $I(A:C|B=b) := I(A:C)_{\rho^{AC}_b}$.

\subsection{Asymptotic smoothing}
\label{subsec:asymsmooth}

We can now use the facts about typical subspaces from 
Section~\ref{subsec:typical} to prove the following asymptotic
smoothing lemma. It will be applied in the analysis of the decoding
error probabilities of all our decoders henceforth. Some of these
applications can do with less general versions of the lemma, and it is
indeed possible to prove such versions with slightly better parameters.
But since they do not lead to significantly lower decoding error
probabilities, we stick to the following general version.
\begin{lemma}
\label{lem:asymsmooth}
Let $XZYB$ be a classical-classical-classical-quantum system with joint 
density matrix
\[
\rho^{XZYB} = 
\sum_{(x,z,y) \in \cX \times \cZ \times \cY} p_X(x) p_{Z|X}(z|x) p_Y(y)
\ketbra{x,z,y} \otimes \rho_{xzy}. 
\]
Define $\rho^{X^n Z^n Y^n B^n} := (\rho^{XZYB})^{\otimes n}$.
Let $0 < \epsilon, \delta < 1/64$. Let 
\begin{eqnarray*}
p_{\mathrm{min}} 
& := & \min_{(x,z,y) \in \cX \times \cZ \times \cY, p_{XZ}(x,z) p_Y(y) > 0}
       p_{XZ}(x,z) p_Y(y), \\
q_{\mathrm{min}} 
& := & \min_{(x,z,y) \in \cX \times \cZ \times \cY, p_{XZ}(x,z) p_Y(y) > 0}
       q_{\mathrm{min}}(\rho_{xzy}).
\end{eqnarray*}
Let $n \geq 4 \delta^{-2} p_{\mathrm{min}}^{-1} q_{\mathrm{min}}^{-1}
            \log(|\cB| |\cX| |\cZ| |\cY| / \epsilon)$.
Define $c(\delta) := \delta \log |\cB||\cX||\cZ||\cY|- \delta \log \delta$.
Then there is an ensemble of density matrices 
$\{\rho'^{B^n}_{x^n z^n y^n}\}_{(x^n, z^n, y^n) 
                \in \cX^n \times \cZ^n \times \cY^n}$ such that 
for 
\[
\rho'^{X^n Z^n Y^n B^n} := 
\sum_{(x^n, z^n, y^n) \in \cX^n \times \cZ^n \times \cY^n}
p_{X^n}(x^n) p_{Z^n|X^n}(z^n|x^n) p_{Y^n}(y^n) \ketbra{x^n, z^n, y^n} 
\otimes \rho'^{B^n}_{x^n z^n y^n},
\]
and 
\[
\rho'^{B^n}_{x^n z^n} := \sum_{y^n \in \cY^n} p_{Y^n}(y^n)
	                 \rho'^{B^n}_{x^n z^n y^n}, ~~
\rho'^{B^n}_{x^n} := \sum_{z^n \in \cZ^n} p_{Z^n|X^n}(z^n|x^n) 
                     \rho'^{B^n}_{x^n z^n}, ~~
\rho'^{B^n} := \sum_{x^n \in \cX^n} p_{X^n}(x^n) \rho'^{B^n}_{x^n},
\]
we have
\[
\begin{array}{l r c l}
\forall x^n z^n \in T^{(XZ)^n}_\delta: & 
\ellinfty{\rho'^{B^n}_{x^n z^n}} 
& \leq & 4 \cdot 2^{-n(H(B|XZ)_\rho - c(6\delta)}, \\ \\
\forall x^n \in T^{X^n}_\delta: & 
\ellinfty{\rho'^{B^n}_{x^n}} 
& \leq & 4 \cdot 2^{-n(H(B|X)_\rho - c(6\delta)}, \\ \\
& 
\ellinfty{\rho'^{B^n}} 
& \leq & 4 \cdot 2^{-n(H(B)_\rho - c(2\delta)}, \\ \\
\forall x^n z^n y^n \in T^{(XZY)^n}_\delta: & 
\ellone{\rho'^{B^n}_{x^n z^n y^n} - \rho^{B^n}_{x^n z^n y^n}}
& \leq & 11\sqrt{\epsilon}, \\ \\
& 
\ellone{\rho'^{X^n Z^n Y^n B^n} - \rho^{X^n Z^n Y^n B^n}} 
& \leq & 13\sqrt{\epsilon}.
\end{array}
\]
\end{lemma}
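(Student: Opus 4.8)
The plan is to build the smoothed ensemble $\{\rho'^{B^n}_{x^n z^n y^n}\}$ by conditioning the original states $\rho^{B^n}_{x^n z^n y^n} = \rho_{x^n z^n y^n}$ on a carefully chosen set of nested typical projectors, so that the resulting states simultaneously have small operator norm at the three conditioning levels ($xzy$, $xz$, $x$, and unconditioned) while staying $\ell_1$-close to the originals. Concretely, for each typical triple $x^n z^n y^n$ I would set $\rho'^{B^n}_{x^n z^n y^n}$ to be the (renormalised) state obtained from $\rho_{x^n z^n y^n}$ by sandwiching it between the projector $\Pi := \Pi^{\rho^{\otimes n}}_{2\delta} \, \Pi^{\rho_{x^n}}_{6\delta} \, \Pi^{\rho_{x^n z^n}}_{6\delta} \, \Pi^{\rho_{x^n z^n y^n}}_{6\delta}$ (or rather a symmetric product of these together with their reverses, in the style of the usual ``gentle measurement'' construction, to keep the object Hermitian and positive), then normalising. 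For non-typical triples one can simply take $\rho' = \rho$; only the behaviour on typical sequences is asserted.

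The key steps, in order: (1) Use Fact~\ref{fact:winter} (applied with the roles of $(X,Y)$ played by $(XZ, \text{const})$, by $(X,Z)$, and by $(XZ,Y)$ as appropriate) together with Fact~\ref{fact:condtypicalspace} to show that each of the four projectors individually captures mass $\geq 1-\epsilon$ of $\rho_{x^n z^n y^n}$ when the relevant prefix of $x^n z^n y^n$ is typical; note that $x^n z^n y^n \in T^{(XZY)^n}_\delta$ forces $x^n z^n \in T^{(XZ)^n}_\delta$ and $x^n \in T^{X^n}_\delta$, and also $x^n z^n \in T^{(XZ)^n}_\delta$ is needed to invoke the $\rho^{\otimes n}$ and $\rho_{x^n}$ statements of Fact~\ref{fact:winter} with $2\delta$ and $6\delta$. (2) Apply the gentle measurement lemma repeatedly (four times, once per projector), each time losing $O(\sqrt{\epsilon})$ in trace distance, to conclude $\ellone{\rho'^{B^n}_{x^n z^n y^n} - \rho_{x^n z^n y^n}} \leq 11\sqrt{\epsilon}$; keeping careful track of the constants from the successive applications of $\ellone{\rho - \sqrt{\Pi}\rho\sqrt{\Pi}} \leq 2\sqrt{\Tr[\rho(\id - \Pi)]}$ plus the renormalisation error gives the claimed numeric bound. (3) For the operator-norm bounds, observe that $\rho'^{B^n}_{x^n z^n}$, being a convex combination over $y^n$ of states each of which is supported inside the range of $\Pi^{\rho_{x^n z^n}}_{6\delta}$ after the sandwiching, is itself supported there; then $\ellinfty{\rho'^{B^n}_{x^n z^n}} \leq \ellinfty{\Pi^{\rho_{x^n z^n}}_{6\delta}\, \rho_{x^n z^n}\, \Pi^{\rho_{x^n z^n}}_{6\delta}} \cdot (\text{renormalisation factor})$, and Fact~\ref{fact:condtypicalspace} bounds the former by $2^{-n(H(B|XZ) - c(6\delta))}$; the factor $4$ absorbs the renormalisation $(1-\epsilon)^{-1}$ and the slack from mixing in the non-typical $y^n$ contributions (which, since they are honest states, contribute a bounded perturbation that one folds into the constant). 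The analogous arguments at the $x^n$ level (using $\Pi^{\rho_{x^n}}_{6\delta}$ and $H(B|X)$) and the unconditioned level (using $\Pi^{\rho^{\otimes n}}_{2\delta}$ and $H(B)$, hence the $c(2\delta)$ rather than $c(6\delta)$) give the other two operator-norm bounds. (4) Finally, the global $\ell_1$ bound $\ellone{\rho'^{X^n Z^n Y^n B^n} - \rho^{X^n Z^n Y^n B^n}} \leq 13\sqrt{\epsilon}$ follows by averaging the per-triple bound from step (2) over the typical set (probability $\geq 1 - \epsilon$ by Fact~\ref{fact:typicalset}, contributing $\leq 11\sqrt{\epsilon}$) and bounding the non-typical contribution crudely by $2\epsilon \leq 2\sqrt{\epsilon}$.

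The main obstacle I anticipate is getting the nested operator-norm bounds to hold \emph{simultaneously} with a single definition of $\rho'^{B^n}_{x^n z^n y^n}$: the sandwiching projector must be chosen so that its range lies inside the range of $\Pi^{\rho_{x^n z^n}}_{6\delta}$ \emph{and} inside the range of $\Pi^{\rho_{x^n}}_{6\delta}$ \emph{and} inside $\Pi^{\rho^{\otimes n}}_{2\delta}$, but these projectors do not commute with each other in general, so one cannot simply take their product and expect the range to be the intersection. The resolution is the standard one: order the projectors outermost-to-innermost as $\Pi^{\rho^{\otimes n}}_{2\delta}$ first (outermost), then $\Pi^{\rho_{x^n}}_{6\delta}$, then $\Pi^{\rho_{x^n z^n}}_{6\delta}$, then $\Pi^{\rho_{x^n z^n y^n}}_{6\delta}$ (innermost), and apply gentle measurement in that order; after each application the state is (approximately) supported on the corresponding range, and crucially the \emph{outermost} projector $\Pi^{\rho^{\otimes n}}_{2\delta}$ applied last-from-outside guarantees $\rho'^{B^n}$ has the right norm, while the fact that all inner projectors were applied means $\rho'^{B^n}_{x^n z^n}$ still (approximately) sits in the range of $\Pi^{\rho_{x^n z^n}}_{6\delta}$ — here one uses that cutting down by an outer projector can only decrease the support, combined with a small correction absorbed into the constant $4$. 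A secondary bookkeeping nuisance is verifying that $n$ as given is large enough to invoke all of Facts~\ref{fact:typicalspace}, \ref{fact:condtypicalspace}, \ref{fact:winter} at parameters $2\delta$ and $6\delta$ (not just $\delta$); this is why the hypothesis restricts $\delta < 1/64$, so that $6\delta < 1/2$ and the facts apply, and the $n$-bound has the stated form with the extra $q_{\mathrm{min}}^{-1}$ and $|\cZ||\cY|$ factors.
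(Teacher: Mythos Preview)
Your approach is essentially the paper's: sandwich $\rho_{x^nz^ny^n}$ by the nested projectors $\Pi^{\rho^{\otimes n}}_{2\delta}\,\Pi^{\rho_{x^n}}_{6\delta}\,\Pi^{\rho_{x^nz^n}}_{6\delta}$ (outermost to innermost), renormalise, and use gentle measurement for the $\ell_1$ closeness together with the eigenvalue bounds from Fact~\ref{fact:condtypicalspace} for the $\ell_\infty$ bounds. Two remarks on the execution.

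First, a genuine gap: your choice ``for non-typical triples take $\rho'=\rho$'' breaks the operator-norm conclusions. The bound $\ellinfty{\rho'^{B^n}_{x^nz^n}}\leq 4\cdot 2^{-n(H(B|XZ)-c(6\delta))}$ is asserted for typical $x^nz^n$, but $\rho'^{B^n}_{x^nz^n}$ is the average over \emph{all} $y^n$. If for non-typical $y^n$ you keep the original $\rho_{x^nz^ny^n}$, the contribution $\sum_{y^n:\,x^nz^ny^n\notin T}p_{Y^n}(y^n)\rho_{x^nz^ny^n}$ can have $\ell_\infty$ norm as large as its total mass, i.e.\ of order $\epsilon$, which is a fixed constant and \emph{not} exponentially small in $n$. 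The sentence ``since they are honest states, contribute a bounded perturbation that one folds into the constant'' is exactly where this fails: the constant $4$ cannot absorb an additive $\epsilon$. The paper's fix is to set $\rho'_{x^nz^ny^n}:=\id/|\cB|^n$ on the non-typical set; then the leftover term is $\leq \id/|\cB|^n$ in the operator ordering, which \emph{is} exponentially small and is what produces the~$4$.

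Second, two minor points. The fourth projector $\Pi^{\rho_{x^nz^ny^n}}_{6\delta}$ is unnecessary: no claim in the lemma asks for an $\ell_\infty$ bound at the $x^nz^ny^n$ level, and including it only inflates the gentle-measurement constants (you will not land on $11\sqrt{\epsilon}$ with four projectors). Also, your step~(3) justification via ``supported inside the range of $\Pi^{\rho_{x^nz^n}}_{6\delta}$'' is not literally correct, since the outermost projector is $\Pi^{\rho^{\otimes n}}_{2\delta}$ and the projectors do not commute; the right argument is the operator inequality
\[
\Pi\,\Pi_{x^n}\,\Pi_{x^nz^n}\,\rho_{x^nz^n}\,\Pi_{x^nz^n}\,\Pi_{x^n}\,\Pi
\;\leq\; 2^{-n(H(B|XZ)-c(6\delta))}\,\Pi\,\Pi_{x^n}\,\Pi_{x^nz^n}\,\Pi_{x^n}\,\Pi,
\]
using Fact~\ref{fact:condtypicalspace} on the inner factor, together with $\|\Pi\,\Pi_{x^n}\,\Pi_{x^nz^n}\,\Pi_{x^n}\,\Pi\|_\infty\leq 1$. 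The analogous step from $x^nz^n$ to $x^n$ then uses that $\Pi_{x^nz^n}$ \emph{commutes} with $\rho_{x^nz^n}$, so $\Pi_{x^nz^n}\rho_{x^nz^n}\Pi_{x^nz^n}\leq\rho_{x^nz^n}$ and the sum over $z^n$ collapses to $\rho_{x^n}$; this commutation is the mechanism, not a support argument.
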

\begin{proof}
For $x^n \in \cX^n$ and $z^n \in \cZ^n$,
define 
\[
\rho_{x^n z^n} := \sum_{y^n \in \cY^n} p_{Y^n}(y^n) \rho_{x^n z^n y^n},
~~~~~
\rho_{x^n} := \sum_{z^n \in \cZ^n} p_{Z^n|X^n}(z^n|x^n) 
              \rho_{x^n z^n}.
\]
Recall that
$
\sum_{x^n \in \cX^n} p_{X^n}(x^n) \rho_{x^n} = \rho^{\otimes n},
$
where $\rho := \rho^B$.
Let $\Pi_{x^n z^n}$ denote the conditionally typical projector 
$\Pi^{\rho_{x^n z^n}}_{6\delta}$, $\Pi_{x^n}$ denote the conditionally
typical projector $\Pi^{\rho_{x^n}}_{6\delta}$ and $\Pi$ denote
the typical projector $\Pi^{\rho^n}_{2 \delta}$.
For $x^n z^n y^n \in T^{(XZY)^n}_\delta$, define
\[
\rho'_{x^n z^n y^n} := 
\frac{\Pi \Pi_{x^n} \Pi_{x^n z^n} \rho_{x^n z^n y^n} \Pi_{x^n z^n} 
          \Pi_{x^n} \Pi}
     {\Tr[\Pi \Pi_{x^n} \Pi_{x^n z^n} \rho_{x^n z^n y^n} \Pi_{x^n z^n}
              \Pi_{x^n} \Pi]}.
\]
From Facts~\ref{fact:winter} and \ref{fact:gentle}, we get
\begin{eqnarray*}
\lefteqn{
\Tr [\Pi \Pi_{x^n} \Pi_{x^n z^n} \rho_{x^n z^n y^n} \Pi_{x^n z^n} 
         \Pi_{x^n} \Pi] 
} \\
& \geq &
\Tr [\Pi \rho_{x^n z^n y^n} \Pi] -
\ellone{\rho_{x^n z^n y^n} - 
        \Pi_{x^n} \rho_{x^n z^n y^n} \Pi_{x^n}} \\
&      &
{} - 
\ellone{\Pi_{x^n} \rho_{x^n z^n y^n} \Pi_{x^n} - 
        \Pi_{x^n} \Pi_{x^n z^n} \rho_{x^n z^n y^n} \Pi_{x^n z^n}
        \Pi_{x^n}}\\
& \geq &
\Tr [\Pi \rho_{x^n z^n y^n} \Pi] -
\ellone{\rho_{x^n z^n y^n} - 
        \Pi_{x^n} \rho_{x^n z^n y^n} \Pi_{x^n}} -
\ellone{\rho_{x^n z^n y^n} - 
        \Pi_{x^n z^n} \rho_{x^n z^n y^n} \Pi_{x^n z^n}}\\
& \geq &
1 - \epsilon - 2\sqrt{\epsilon} - 2\sqrt{\epsilon}
\;\geq\;
1 - 5\sqrt{\epsilon},
\end{eqnarray*}
and
\begin{eqnarray*}
\lefteqn{\ellone{\rho_{x^n z^n y^n} - \rho'_{x^n z^n y^n}}} \\
& \leq &
\ellone{\rho_{x^n z^n y^n} - 
        \Pi \Pi_{x^n} \Pi_{x^n z^n} \rho_{x^n z^n y^n} \Pi_{x^n z^n} 
        \Pi_{x^n} \Pi} + 
\ellone{\Pi \Pi_{x^n} \Pi_{x^n z^n} \rho_{x^n z^n y^n} \Pi_{x^n z^n} 
            \Pi_{x^n} \Pi
        - \rho'_{x^n z^n y^n}} \\
& \leq &
\ellone{\rho_{x^n z^n y^n} - \Pi \rho_{x^n z^n y^n} \Pi} + 
\ellone{\Pi \rho_{x^n z^n y^n} \Pi - 
        \Pi \Pi_{x^n} \Pi_{x^n z^n} \rho_{x^n z^n y^n} \Pi_{x^n z^n} 
            \Pi_{x^n} \Pi} \\ 
&      & 
{} + 1 - \Tr [\Pi \Pi_{x^n} \Pi_{x^n z^n} \rho_{x^n z^n y^n} \Pi_{x^n z^n}
                  \Pi_{x^n} \Pi] \\
& \leq &
\ellone{\rho_{x^n z^n y^n} - \Pi \rho_{x^n z^n y^n} \Pi} + 
\ellone{\rho_{x^n z^n y^n} - 
        \Pi_{x^n} \Pi_{x^n z^n} \rho_{x^n z^n y^n} \Pi_{x^n z^n} 
        \Pi_{x^n}} \\
&      &
{} + 1 - \Tr [\Pi \Pi_{x^n} \Pi_{x^n z^n} \rho_{x^n z^n y^n} \Pi_{x^n z^n}
                  \Pi_{x^n} \Pi] \\
& \leq &
\ellone{\rho_{x^n z^n y^n} - \Pi \rho_{x^n z^n y^n} \Pi} + 
\ellone{\rho_{x^n z^n y^n} - \Pi_{x^n} \rho_{x^n z^n y^n} \Pi_{x^n}} + {}\\
&      &
\ellone{\Pi_{x^n} \rho_{x^n z^n y^n} \Pi_{x^n} - 
        \Pi_{x^n} \Pi_{x^n z^n} \rho_{x^n z^n y^n} \Pi_{x^n z^n} \Pi_{x^n}}
+ 1 - \Tr [\Pi \Pi_{x^n} \Pi_{x^n z^n} \rho_{x^n z^n y^n} \Pi_{x^n z^n}
                  \Pi_{x^n} \Pi] \\
& \leq &
\ellone{\rho_{x^n z^n y^n} - \Pi \rho_{x^n z^n y^n} \Pi} + 
\ellone{\rho_{x^n z^n y^n} - \Pi_{x^n} \rho_{x^n z^n y^n} \Pi_{x^n}} + {}\\
&      &
\ellone{\rho_{x^n z^n y^n} - 
        \Pi_{x^n z^n} \rho_{x^n z^n y^n} \Pi_{x^n z^n}}
+ 1 - \Tr [\Pi \Pi_{x^n} \Pi_{x^n z^n} \rho_{x^n z^n y^n} \Pi_{x^n z^n}
                  \Pi_{x^n} \Pi] \\
& \leq &
2\sqrt{\epsilon} + 2\sqrt{\epsilon} + 2\sqrt{\epsilon} + 5\sqrt{\epsilon} 
\;  = \;
11 \sqrt{\epsilon}.
\end{eqnarray*}
For $x^n z^n y^n \not \in T^{(XZY)^n}_\delta$, define 
$\rho'_{x^n z^n y^n} := \frac{\id}{|\cB|^n}$.
We now define 
\[
\rho'^{X^n Z^n Y^n B^n} := 
\sum_{(x^n,z^n,y^n) \in \cX^n \times \cZ^n \times \cY^n} 
p_{X^n}(x^n) p_{Z^n|X^n}(z^n|x^n) p_{Y^n}(y^n)
\ketbra{x^n, z^n, y^n} \otimes \rho'_{x^n z^n y^n}. 
\]
Then,
\begin{eqnarray*}
\lefteqn{\ellone{\rho'^{X^n Z^n Y^n B^n} - \rho^{X^n Z^n Y^n B^n}}}\\
&   =  & \sum_{(x^n,z^n,y^n) \in \cX^n \times \cZ^n \times \cY^n} 
         p_{(XZ)^n}(x^n z^n) p_{Y^n}(y^n) 
         \ellone{\rho'_{x^n z^n y^n} - \rho_{x^n z^n y^n}} \\
& \leq & \sum_{(x^n,z^n,y^n) \in T^{(XZY)^n}_\delta} 
         p_{(XZ)^n}(x^n z^n) p_{Y^n}(y^n) 
         \ellone{\rho'_{x^n z^n y^n} - \rho_{x^n z^n y^n}} + {} \\
&      & 2 \sum_{(x^n,z^n,y^n) \not \in T^{(XZY)^n}_\delta} 
         p_{(XZ)^n}(x^n z^n) p_{Y^n}(y^n) \\
&   <  & 11\sqrt{\epsilon} + 2\epsilon 
\;\leq\; 13\sqrt{\epsilon},
\end{eqnarray*}
where we used Fact~\ref{fact:typicalset}.

For $x^n z^n \in T^{(XZ)^n}_\delta$, we have
\begin{eqnarray*}
\rho'^{B^n}_{x^n z^n}
&   =  & \sum_{y^n: x^n z^n y^n \in T^{(XZY)^n}_\delta} p_{Y^n}(y^n)
         \frac{\Pi \Pi_{x^n} \Pi_{x^n z^n} \rho_{x^n z^n y^n} \Pi_{x^n z^n}
                   \Pi_{x^n} \Pi}
              {\Tr [\Pi \Pi_{x^n} \Pi_{x^n z^n} \rho_{x^n z^n y^n} 
                                  \Pi_{x^n z^n} \Pi_{x^n} \Pi]} + {} \\
&      & \sum_{y^n: x^n z^n y^n \not \in T^{(XZY)^n}_\delta} p_{Y^n}(y^n)
         \frac{\id}{|\cB|^n} \\
& \leq & (1- 5\sqrt{\epsilon})^{-1} 
         \sum_{y^n: x^n z^n y^n \in T^{(XZY)^n}_\delta} p_{Y^n}(y^n)
         \Pi \Pi_{x^n} \Pi_{x^n z^n} \rho_{x^n z^n y^n} \Pi_{x^n z^n} 
             \Pi_{x^n} \Pi +
         \frac{\id}{|\cB|^n} \\
& \leq & (1- 5\sqrt{\epsilon})^{-1} 
         \Pi \Pi_{x^n} \Pi_{x^n z^n} 
         \left(
         \sum_{y^n \in \cY^n} p_{Y^n}(y^n) \rho_{x^n z^n y^n} 
         \right) 
                       \Pi_{x^n z^n} \Pi_{x^n} \Pi +
         \frac{\id}{|\cB|^n} \\
&   =  & (1- 5\sqrt{\epsilon})^{-1} 
         \Pi \Pi_{x^n} \Pi_{x^n z^n} \rho_{x^n z^n} \Pi_{x^n z^n} \Pi_{x^n}
         \Pi + \frac{\id}{|\cB|^n} \\
& \leq & (1- 5\sqrt{\epsilon})^{-1} 2^{-n(H(B|XZ)_\rho - c(6\delta))}  
         \Pi \Pi_{x^n} \Pi_{x^n z^n} \Pi_{x^n} \Pi + \frac{\id}{|\cB|^n},
\end{eqnarray*}
where we used Facts~\ref{fact:winter} and
\ref{fact:condtypicalspace}. This shows that 
\[
\ellinfty{\rho'^{B^n}_{x^n z^n}} \leq 
(1-5\sqrt{\epsilon})^{-1} 2^{-n(H(B|XZ)_\rho - c(6\delta))} + 
|\cB|^{-n} \leq
4 \cdot 2^{-n(H(B|XZ)_\rho - c(6\delta))},
\]
since $\epsilon < 1/64$.

Similarly for $x^n \in T^{X^n}_\delta$, we have
\begin{eqnarray*}
\rho'^{B^n}_{x^n}
&   =  & \sum_{z^n \in \cZ^n} p_{Z^n|X^n}(z^n|x^n) \rho'^{B^n}_{x^n z^n} \\
& \leq & \sum_{z^n \in \cZ^n} p_{Z^n|X^n}(z^n|x^n) \left(
         (1- 5\sqrt{\epsilon})^{-1} 
         \Pi \Pi_{x^n} \Pi_{x^n z^n} \rho_{x^n z^n} \Pi_{x^n z^n} \Pi_{x^n}
         \Pi
         + \frac{\id}{|\cB|^n}
         \right) \\
& \leq & (1- 5\sqrt{\epsilon})^{-1} \sum_{z^n \in \cZ^n}
         p_{Z^n}(z^n|x^n) \Pi \Pi_{x^n} \rho_{x^n z^n} \Pi_{x^n} \Pi + 
         \frac{\id}{|\cB|^n} \\
&   =  & (1- 5\sqrt{\epsilon})^{-1} \Pi \Pi_{x^n} \rho_{x^n} \Pi_{x^n} \Pi
         + \frac{\id}{|\cB|^n} \\
& \leq & (1- 5\sqrt{\epsilon})^{-1} 2^{-n(H(B|X)_\rho - c(6\delta))}  
         \Pi \Pi_{x^n} \Pi + \frac{\id}{|\cB|^n},
\end{eqnarray*}
where we used the fact that $\Pi_{x^n z^n}$ and $\rho_{x^n z^n}$ commute.
This shows that 
\[
\ellinfty{\rho'^{B^n}_{x^n}} \leq 
(1-5\sqrt{\epsilon})^{-1} 2^{-n(H(B|X)_\rho - c(6\delta))} + 
|\cB|^{-n} \leq
4 \cdot 2^{-n(H(B|X)_\rho - c(6\delta))},
\]
since $\epsilon < 1/64$.

Finally,
\begin{eqnarray*}
\rho'^{B^n}
&   =  & \sum_{x^n \in \cX^n} p_{X^n}(x^n) \rho'^{B^n}_{x^n} \\
& \leq & \sum_{x^n \in \cX^n} p_{X^n}(x^n) \left(
         (1- 5\sqrt{\epsilon})^{-1} 
         \Pi \Pi_{x^n} \rho_{x^n} \Pi_{x^n} \Pi
         + \frac{\id}{|\cB|^n}
         \right) \\
& \leq & (1- 5\sqrt{\epsilon})^{-1} \sum_{x^n \in \cX^n}
         p_{X^n}(x^n) \Pi \rho_{x^n} \Pi + 
         \frac{\id}{|\cB|^n} \\
&   =  & (1- 5\sqrt{\epsilon})^{-1} \Pi \rho^{\otimes n} \Pi
         + \frac{\id}{|\cB|^n} \\
& \leq & (1- 5\sqrt{\epsilon})^{-1} 2^{-n(H(B)_\rho - c(2\delta))}  \Pi 
         + \frac{\id}{|\cB|^n},
\end{eqnarray*}
where we used the fact that $\Pi_{x^n}$ and $\rho_{x^n}$ commute.
This shows that 
\[
\ellinfty{\rho'^{B^n}} \leq 
(1-5\sqrt{\epsilon})^{-1} 2^{-n(H(B)_\rho - c(2\delta))} + 
|\cB|^{-n} \leq
4 \cdot 2^{-n(H(B)_\rho - c(2\delta))},
\]
since $\epsilon < 1/64$.
The proof is complete.
\end{proof}

\subsection{Geometric properties about projectors}
We first recall the following `gentle measurement' lemma of 
Winter~\cite{winter:strongconverse,ogawanagaoka:gentle}. The
lemma is typically used to assert that if a quantum state has a
high probability of succeeding on a measurement outcome, then the
collapsed state after the outcome has occured is close to the original
state.
\begin{fact}
\label{fact:gentle}
Let $M$ be a positive operator such that $M \leq \id$. 
Let $0 \leq \epsilon < 1$. Let $\rho$ be a density matrix. Suppose
$\Tr [M \rho M] \geq 1 - \epsilon$. Then,
$
\ellone{\rho - M \rho M} \leq 2 \sqrt{\epsilon}.
$
\end{fact}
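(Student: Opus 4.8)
The plan is to reduce the stated inequality to the familiar square-root form of the gentle measurement lemma (in the style of Winter and Ogawa--Nagaoka), and then prove that form by a pure-state computation followed by a convexity argument. First, since $M \geq 0$ we have $M \rho M = \sqrt{M^2}\,\rho\,\sqrt{M^2}$ and $\Tr[M \rho M] = \Tr[M^2 \rho]$. Put $N := M^2$. Because $M$ and $\id - M$ commute and are both positive semidefinite, $M - M^2 = M(\id - M) \geq 0$, so $0 \leq N \leq \id$, and the hypothesis gives $\Tr[N \rho] \geq 1 - \epsilon$. Hence it suffices to prove: for every $N$ with $0 \leq N \leq \id$ and every density matrix $\rho$, one has $\ellone{\rho - \sqrt{N}\,\rho\,\sqrt{N}} \leq 2\sqrt{1 - \Tr[N\rho]}$; applying this to our $N$ and noting $\sqrt{N} = M$ then gives $\ellone{\rho - M\rho M} \leq 2\sqrt{1 - \Tr[M\rho M]} \leq 2\sqrt{\epsilon}$.

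I would first establish the reduced inequality for a pure state $\rho = \ketbra{\psi}$. Set $\ket{\phi} := \sqrt{N}\ket{\psi}$, so that $\sqrt{N}\,\rho\,\sqrt{N} = \ketbra{\phi}$ and $\braket{\phi}{\phi} = \bra{\psi} N \ket{\psi} \leq 1$. Writing $\ketbra{\psi} - \ketbra{\phi} = \ket{\psi}(\bra{\psi} - \bra{\phi}) + (\ket{\psi} - \ket{\phi})\bra{\phi}$ and using that the trace norm of a rank-one operator $\ket{u}\bra{v}$ is $\elltwo{\ket{u}}\,\elltwo{\ket{v}}$, the triangle inequality gives $\ellone{\ketbra{\psi} - \ketbra{\phi}} \leq (\elltwo{\ket{\psi}} + \elltwo{\ket{\phi}})\,\elltwo{\ket{\psi} - \ket{\phi}} \leq 2\,\elltwo{\ket{\psi} - \ket{\phi}}$. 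The one genuinely technical point is the operator inequality $(\id - \sqrt{N})^2 \leq \id - N$, equivalently $N \leq \sqrt{N}$: this holds because $\sqrt{N}(\id - \sqrt{N}) \geq 0$, being a product of two commuting positive semidefinite operators, where $\id - \sqrt{N} \geq 0$ follows from $N \leq \id$. Then $\elltwo{\ket{\psi} - \ket{\phi}}^2 = \bra{\psi}(\id - \sqrt{N})^2\ket{\psi} \leq \bra{\psi}(\id - N)\ket{\psi} = 1 - \bra{\psi} N \ket{\psi}$, yielding the pure-state bound $\ellone{\ketbra{\psi} - \ketbra{\phi}} \leq 2\sqrt{1 - \bra{\psi} N \ket{\psi}}$.

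For a general density matrix $\rho$, I would take a spectral decomposition $\rho = \sum_i p_i \ketbra{\psi_i}$, apply the triangle inequality for $\ellone{\cdot}$ together with the pure-state bound term by term, and then use concavity of the square root (Jensen's inequality) to obtain $\ellone{\rho - \sqrt{N}\,\rho\,\sqrt{N}} \leq \sum_i p_i \cdot 2\sqrt{1 - \bra{\psi_i} N \ket{\psi_i}} \leq 2\sqrt{\sum_i p_i(1 - \bra{\psi_i} N \ket{\psi_i})} = 2\sqrt{1 - \Tr[N\rho]}$, which finishes the argument after substituting back $N = M^2$. I do not expect a real obstacle here; the only steps needing any care are the reduction $M \mapsto M^2$ (in particular checking $M^2 \leq \id$) and the operator inequality $(\id - \sqrt{N})^2 \leq \id - N$, both short consequences of $0 \leq M \leq \id$. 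The mixed-state triangle-inequality step is lossy in general, but comfortably within the claimed constant $2$.
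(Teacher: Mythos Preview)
Your proof is correct. The paper does not give its own proof of this statement: it records the gentle measurement lemma as a known fact with citations to Winter and Ogawa--Nagaoka, so there is nothing to compare against. Your argument is the standard one---reduce to $N=M^2$, handle pure states via the rank-one trace-norm bound and the operator inequality $(\id-\sqrt{N})^2\le \id-N$, then pass to mixed states by the triangle inequality and concavity of the square root---and all the steps you flag as needing care (in particular $M^2\le\id$ and $N\le\sqrt{N}$) are indeed immediate from $0\le M\le\id$.
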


The following geometric property is crucial to our construction of
a decoding strategy consisting of a sequence of orthogonal projectors.
It says that if we apply a sequence of orthogonal 
projectors to a pure state $\ket{\psi}$ such that $\ket{\psi}$ has a 
large projection onto
each projector, then the vector obtained at the end of the sequence is
close to the original vector $\ket{\psi}$. The advantage with respect
to Winter's `gentle measurement' lemma (Fact~\ref{fact:gentle}) 
is that in the gentle measurement lemma square roots of
the error probabilities under the operations add up, whereas in our 
lemma we exploit the fact that each operation is an orthogonal projection
to obtain that the error probabilities add up. 
The disadvantage of our lemma with respect to the gentle measurement lemma
is that our lemma guarantees closesness of the final state to
the initial state only for pure states. 
Nevertheless, we shall see by its corollary  Lemma~\ref{lem:seq} below
that our lemma can still be used to lower bound 
the success probability of a sequence of
orthogonal projectors applied to a mixed initial state. 
Our lemma allows us to avoid dependencies between the decoding 
operations
that were there in earlier sequential decoding strategies e.g. in 
Winter's strategy~\cite{winter:strongconverse}. It allows us 
to lower bound the decoding 
success probability
in a manner very similar to the classical setting. 
\begin{lemma}
\label{lem:key}
Let $v$ be a vector in a Hilbert space and $\Pi'_1, \ldots, \Pi'_k$ be
orthogonal projectors. Let $\Pi_i := \id - \Pi'_i$ be the projectors
onto the orthogonal subspaces. Then,
\[
\elltwo{v - \Pi'_k \cdots \Pi'_1 v}^2 \leq \sum_{i=1}^k \elltwo{\Pi_i v}^2.
\]
\end{lemma}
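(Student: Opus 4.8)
The plan is to track the ``error vector'' accumulated after each projection and show that at every step it splits into two mutually orthogonal pieces, so that a Pythagorean telescoping does the job. Concretely, I would introduce the partial products $v_0 := v$ and $v_i := \Pi'_i v_{i-1} = \Pi'_i \cdots \Pi'_1 v$ for $1 \le i \le k$, so the quantity to be bounded is $\elltwo{e_k}^2$ where $e_i := v - v_i$ is the error after the first $i$ projections; note $e_0 = 0$.

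The one observation that makes everything work is the recursion $e_i = \Pi'_i e_{i-1} + \Pi_i v$. Indeed, writing $v_{i-1} = v - e_{i-1}$ we get $e_i = v - \Pi'_i v_{i-1} = v - \Pi'_i v + \Pi'_i e_{i-1} = (\id - \Pi'_i) v + \Pi'_i e_{i-1} = \Pi_i v + \Pi'_i e_{i-1}$. The summand $\Pi'_i e_{i-1}$ lies in $\mathrm{range}(\Pi'_i)$ and the summand $\Pi_i v$ lies in $\mathrm{range}(\Pi_i) = \mathrm{range}(\Pi'_i)^\perp$, so these two vectors are orthogonal.

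Then I would apply the Pythagorean identity to conclude $\elltwo{e_i}^2 = \elltwo{\Pi'_i e_{i-1}}^2 + \elltwo{\Pi_i v}^2 \le \elltwo{e_{i-1}}^2 + \elltwo{\Pi_i v}^2$, the inequality using only that an orthogonal projector does not increase the $\ell_2$ norm. Iterating this estimate for $i = k, k-1, \ldots, 1$ and using $e_0 = 0$ gives $\elltwo{v - v_k}^2 = \elltwo{e_k}^2 \le \sum_{i=1}^k \elltwo{\Pi_i v}^2$, as desired.

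I do not expect a real obstacle: the only non-routine point is spotting the decomposition $e_i = \Pi'_i e_{i-1} + \Pi_i v$ into orthogonal components, after which the bound is immediate. It is worth emphasising that the argument needs no commutativity among the $\Pi'_i$, that $v$ need not be a unit vector, and that it is precisely the orthogonality of the projectors (as opposed to general measurement operators) that lets the \emph{squared} norms, rather than the norms, add up -- this is the gain over the gentle-measurement lemma (Fact~\ref{fact:gentle}).
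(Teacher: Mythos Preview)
Your proof is correct and is essentially the same as the paper's own argument. The paper also proceeds by induction, writing $v - \Pi'_k \cdots \Pi'_1 v = (v - \Pi'_k v) + \Pi'_k(v - \Pi'_{k-1}\cdots\Pi'_1 v)$, which in your notation is exactly the decomposition $e_k = \Pi_k v + \Pi'_k e_{k-1}$; the Pythagorean step and the contraction $\elltwo{\Pi'_k e_{k-1}} \le \elltwo{e_{k-1}}$ are identical in both proofs, with the only difference being that you name the error vectors $e_i$ explicitly.
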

\begin{proof}
We prove this by induction on $k$. The null base case corresponding to
$k = 0$ is trivial. For the induction step, suppose
\[
\elltwo{v - \Pi'_{k-1} \cdots \Pi'_1 v}^2 \leq 
\sum_{i=1}^{k-1} \elltwo{\Pi_i v}^2.
\]
Then,
\begin{eqnarray*}
\elltwo{v - \Pi'_k \Pi'_{k-1} \cdots \Pi'_1 v}^2
&   =  & \elltwo{v - \Pi'_k v}^2 + 
         \elltwo{\Pi'_k v - \Pi'_k \Pi'_{k-1} \cdots \Pi'_1 v}^2 \\
& \leq & \elltwo{\Pi_k v}^2 + \elltwo{v -  \Pi'_{k-1} \cdots \Pi'_1 v}^2 \\
& \leq & \sum_{i=1}^k \elltwo{\Pi_i v}^2.
\end{eqnarray*}
The equality above uses the fact that the vector $(v - \Pi'_k v)$
is orthogonal to the support of $\Pi'_k$ since $\Pi'_k$ is an
orthogonal projector. In particular, $(v - \Pi'_k v)$ is orthogonal
to the vector $(\Pi'_k v - \Pi'_k \Pi'_{k-1} \cdots \Pi'_1 v)$. The first
inequality above follows from the fact that an orthogonal projector
like $\Pi'_k$ cannot increase the length of a vector. The proof is
complete.
\end{proof}

We can now prove the following corollary of Lemma~\ref{lem:key}. This
lemma is what we actually use in the error analysis of the sequential
decoder.
\begin{lemma}
\label{lem:seq}
Let $\rho$ be a positive operator on a Hilbert space and 
$\Tr \rho \leq 1$. Using the notation
of Lemma~\ref{lem:key}, 
\[
\Tr [\Pi'_k \cdots \Pi'_1 \rho \Pi'_1 \cdots \Pi'_k] \geq 
\Tr \rho - 2 \sqrt{\sum_{i=1}^k \Tr [\rho \Pi_i]}.
\]
\end{lemma}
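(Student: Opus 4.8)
The plan is to reduce the mixed-state statement to the pure-state inequality of Lemma~\ref{lem:key} by spectrally decomposing $\rho$. Write $\rho = \sum_j \lambda_j \ketbra{v_j}$ where $\{\ket{v_j}\}$ is an orthonormal eigenbasis, $\lambda_j \geq 0$, and $\sum_j \lambda_j = \Tr\rho \leq 1$. Set $\Pi' := \Pi'_k \cdots \Pi'_1$. Then
\[
\Tr[\Pi' \rho \Pi'^\dagger] = \sum_j \lambda_j \elltwo{\Pi' \ket{v_j}}^2,
\qquad
\Tr\rho - \Tr[\Pi'\rho\Pi'^\dagger] = \sum_j \lambda_j \left(\elltwo{v_j}^2 - \elltwo{\Pi' v_j}^2\right),
\]
using $\elltwo{v_j}^2 = 1$. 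The first step is to bound each term $\elltwo{v_j}^2 - \elltwo{\Pi' v_j}^2$. Since $\Pi'$ is a product of orthogonal projectors it is a contraction, so $\elltwo{v_j} \geq \elltwo{\Pi' v_j}$ and hence $\elltwo{v_j}^2 - \elltwo{\Pi' v_j}^2 = (\elltwo{v_j} - \elltwo{\Pi' v_j})(\elltwo{v_j} + \elltwo{\Pi' v_j}) \leq 2(\elltwo{v_j} - \elltwo{\Pi' v_j}) \leq 2\elltwo{v_j - \Pi' v_j}$, the last inequality being the reverse triangle inequality. Now Lemma~\ref{lem:key} applied to the unit vector $v_j$ gives $\elltwo{v_j - \Pi' v_j}^2 \leq \sum_{i=1}^k \elltwo{\Pi_i v_j}^2$, so $\elltwo{v_j - \Pi' v_j} \leq \sqrt{\sum_{i=1}^k \elltwo{\Pi_i v_j}^2}$.

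The second step is to combine these bounds and apply concavity of the square root. We get
\[
\Tr\rho - \Tr[\Pi'\rho\Pi'^\dagger]
\;\leq\; 2\sum_j \lambda_j \sqrt{\sum_{i=1}^k \elltwo{\Pi_i v_j}^2}
\;\leq\; 2\sqrt{\sum_j \lambda_j \sum_{i=1}^k \elltwo{\Pi_i v_j}^2},
\]
where the second inequality is Jensen's inequality for the concave function $t \mapsto \sqrt{t}$ with respect to the sub-probability weights $\lambda_j$ (here one uses $\sum_j \lambda_j \leq 1$, so that no normalising factor larger than $1$ is introduced). Finally, interchanging the sums, $\sum_j \lambda_j \elltwo{\Pi_i v_j}^2 = \sum_j \lambda_j \Tr[\Pi_i \ketbra{v_j}] = \Tr[\Pi_i \rho]$, so the quantity under the root is exactly $\sum_{i=1}^k \Tr[\rho \Pi_i]$. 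Rearranging yields the claimed inequality.

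The only place needing care is the application of Jensen's inequality when $\Tr\rho < 1$ rather than $= 1$: one should note that replacing the weights $\{\lambda_j\}$ by $\{\lambda_j/\Tr\rho\}$ makes them a genuine probability distribution, and since $\Tr\rho \leq 1$ the factor $\Tr\rho \cdot \sqrt{\Tr\rho} \leq 1$ that arises only strengthens the bound; alternatively one can simply invoke concavity in the form $\sum_j \lambda_j\sqrt{a_j} \leq \sqrt{(\sum_j \lambda_j)(\sum_j \lambda_j a_j)} \leq \sqrt{\sum_j \lambda_j a_j}$. I expect no real obstacle here — the content is entirely in Lemma~\ref{lem:key}, and this corollary is just the standard passage from pure to mixed states via the eigendecomposition plus concavity of $\sqrt{\cdot}$.
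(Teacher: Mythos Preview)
Your proof is correct and follows essentially the same approach as the paper's: spectral decomposition of $\rho$, Lemma~\ref{lem:key} applied to each eigenvector, the reverse triangle inequality to pass from $\elltwo{v_j}-\elltwo{\Pi' v_j}$ to $\elltwo{v_j-\Pi' v_j}$, and concavity of the square root to recombine. The only cosmetic difference is that the paper first reduces to the rank-one case and then proves the pure-state inequality, whereas you carry the eigenvalue-weighted sum throughout and apply Jensen at the end; the content is identical.
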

\begin{proof}
By expressing $\rho$ in its eigenbasis and using the concavity of
the square root function, we can assume without loss
of generality for the purpose of this proof that $\rho$ has rank one i.e.
$\rho = \ketbra{v}$ for some vector $v$, $\elltwo{v} \leq 1$. 
That is, we just have to show
\[
\elltwo{\Pi'_k \cdots \Pi'_1 v}^2  \geq \elltwo{v}^2 - 
2 \sqrt{\sum_{i=1}^k \elltwo{\Pi_i v}^2}.
\]
By Lemma~\ref{lem:key},
\begin{eqnarray*}
\sqrt{\sum_{i=1}^k \elltwo{\Pi_i v}^2}
& \geq & \elltwo{v - \Pi'_k \cdots \Pi'_1 v} \\
& \geq & \elltwo{v} - \elltwo{\Pi'_k \cdots \Pi'_1 v},
\end{eqnarray*}
leading to
\begin{eqnarray*}
\elltwo{\Pi'_k \cdots \Pi'_1 v}^2  
& \geq & \elltwo{v}^2 - 2 \elltwo{v} \sqrt{\sum_{i=1}^k \elltwo{\Pi_i v}^2}
         + \sum_{i=1}^k \elltwo{\Pi_i v}^2 \\
& \geq & \elltwo{v}^2 - 2 \sqrt{\sum_{i=1}^k \elltwo{\Pi_i v}^2}.
\end{eqnarray*}
This completes the proof.
\end{proof}

Lemma~\ref{lem:seq} suggests the following natural sequential decoding 
paradigm for a channel. Let $\Pi_1, \ldots, \Pi_k, \ldots$ be 
orthogonal projectors. Intuitively, if the $i$th message were sent,
the output state $\rho_i$ of the channel would have a large projection
onto $\Pi_i$ i.e. $\Tr \Pi_i \rho_i \approx 1$. 
Supppose now that the $k$th message is sent resulting in output
state $\rho_k$ of the channel. The decoder tries to project
$\rho_k$ onto $\Pi_1$. If he fails, he then tries to project the collapsed
state onto $\Pi_2$. The decoder continues in this fashion. If he succeeds
in projecting onto $\Pi_i$ for some $i$, he declares $i$ to be his 
guess for the sent
message. If he fails to project onto any $\Pi_i$, he declares failure.
Using Lemma~\ref{lem:seq}, we can upper bound the decoding error 
probability $\err(k)$ of this strategy by 
\[
\err(k) \leq 2 \sqrt{\sum_{i = 1}^{k-1} \Tr [\rho_k \Pi_i] + 1
                     - \Tr [\rho_k \Pi_k]}.
\]
Our quantum sequential decoding paradigm is very analogous 
to the natural classical sequential decoding paradigm, and gives low
decoding error probability for any set of messages $\{1, 2, \ldots\}$
with channel output states $\{\rho_1, \rho_2, \ldots\}$ if we can
find, for example, 
a set of decoding orthogonal projectors $\{\Pi_1, \Pi_2, \ldots\}$
with the property that for each $k$, $\Tr [\rho_k \Pi_k] \approx 1$ and
$\sum_{i: i \neq k} \Tr [\rho_k \Pi_i] \approx 0$. In contrast,
the sequential decoding approach of Winter~\cite{winter:strongconverse}
requires extra conditions on the decoding projectors that involve
requiring a sequential ordering of the messages, which leads to 
difficulties in applying it to channels with multiple independent senders.
The sequential decoding approach of Giovannetti, Lloyd and 
Maccone~\cite{lloyd:seq} is heavily tied to the setting of the cq-channel
and requires independent choice of codewords for different
messages in the error analysis, which again leads to problems in applying
it to channels with multiple independent senders.
Moreover, they also need to use the projector onto the typical subspace
of the average channel state is also required besides the conditionally
typical projectors for each message.

Note that a similar upper bound on the decoding error probability can be 
obtained
for a decoder using a pretty-good-measurement (PGM) strategy, also
known as the square-root measurement strategy. Consider the 
PGM defined by the orthogonal projectors $\Pi_1, \ldots, \Pi_k, \ldots$
as follows:
\[
\Sigma := \sum_i \Pi_i, ~~~
\Upsilon_i := \Sigma^{-1/2} \Pi_i \Sigma^{-1/2},
\]
where $\Sigma^{-1/2}$ is defined in the natural fashion on the support
of $\Sigma$ and zero on the orthogonal complement.
Suppose the $k$th message is sent. The probability $\err'(k)$ of wrongly
decoding the message is $\Tr [(\id - \Upsilon_k) \rho_k]$, which
by the Hayashi-Nagaoka inequality~\cite{hayashinagaoka} is upper bounded
by
\[
\err'(k) \leq 2 (1 - \Tr [\rho_k \Pi_k]) + 
              4 \sum_{i: i \neq k} \Tr [\rho_k \Pi_i].
\]
In this sense, the sequential decoding strategy suggested by
Lemma~\ref{lem:seq} cannot give anything fundamentally 
different from known decoder construction techniques. 
The advantage of our sequential approach is that it gives a potentially
more efficient decoder than the pretty good measurement.
Also, the sequential decoding strategy is closer to the natural
classical intuition and arguably sheds more light on the interaction 
between the various projectors $\Pi_i$ in the decoding process.

In order to construct a simultaneous sequential decoder for the cq-MAC,
we will need the following geometric fact about how a pair of subspaces 
interact. This fact lies at the heart of several beautiful results in
quantum complexity theory \cite{szegedy:quantumwalk, watrous:qma, 
watrous:zk}.
\begin{fact}
\label{fact:chordal}
Let $A$, $B$ be subspaces of a Hilbert space $\cH$. Then there is a
decomposition of $\cH$ as an orthogonal direct sum of the 
following types of spaces:
\begin{enumerate}
\item One dimensional spaces orthogonal to both $A$ and $B$,
\item One dimensional spaces contained in both $A$ and $B$,
\item One dimensional spaces contained in $A$ but not in $B$,
\item One dimensional spaces contained in $B$ but not in $A$,
\item Two dimensional spaces intersecting $A$, $B$ each in one 
dimensional spaces.
\end{enumerate}
Moreover, the one dimensional spaces in (2) and (3) above and the one 
dimensional
intersections with $A$ of the two dimensional spaces in (5) above form
an orthonormal basis of $A$. A similar statement holds for $B$.
\end{fact}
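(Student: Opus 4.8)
The plan is to prove this by induction on $\dim \cH$ using the selfadjoint operator $T := P_A + P_B$, where $P_A, P_B$ are the orthogonal projectors onto $A$ and $B$. First I would observe that $T$ commutes with the operator $S := P_A P_B P_A + P_B P_A P_B$, or more simply, I would directly study the eigenspaces of $M := P_A P_B P_A$, a positive operator with $0 \le M \le \id$. Since $M$ is selfadjoint it has an orthonormal eigenbasis; decompose $\cH = \bigoplus_\lambda \cH_\lambda$ into its eigenspaces for eigenvalues $\lambda \in [0,1]$. The key structural claim to establish is that $A$, $B$, and $P_A P_B P_A$ all respect a common fine decomposition, and this is cleanest to extract eigenspace by eigenspace.

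The main case analysis runs as follows. For the eigenvalue $\lambda = 1$: one shows $M v = v$ forces $P_B P_A v = v$ and hence $v \in A \cap B$ — these vectors span the spaces of type (2). For $\lambda = 0$ restricted to $A$: vectors $v \in A$ with $P_B P_A v = 0$ are exactly $A \cap B^\perp$, giving type (3); symmetrically (swapping the roles of $A$ and $B$, i.e. looking at $P_B P_A P_B$) one gets $B \cap A^\perp$, type (4); and the part of $\cH$ orthogonal to both $A$ and $B$ — equivalently $(A+B)^\perp$ — gives type (1). For each eigenvalue $\lambda \in (0,1)$, I would take a unit eigenvector $u \in \cH_\lambda \cap A$ (one argues $\cH_\lambda$ is spanned by such vectors together with their $B$-counterparts), set $w := P_B u / \|P_B u\|$, and check that $\|P_B u\|^2 = \langle u, P_A P_B P_A u\rangle = \lambda \ne 0$ so $w$ is well-defined, that $w \notin A$ and $u \notin B$ (else $\lambda \in \{0,1\}$), and that $\mathrm{span}\{u,w\}$ is a two-dimensional invariant subspace meeting $A$ in $\mathbb{C}u$ and $B$ in $\mathbb{C}w$ — this is type (5). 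Pairing off an orthonormal eigenbasis of $\cH_\lambda \cap A$ with the corresponding $w$'s (and checking these two-dimensional blocks are mutually orthogonal because the underlying $u$'s are orthonormal eigenvectors of $M$ and the $w$'s inherit orthogonality from $\langle w_i, w_j\rangle = \lambda^{-1}\langle u_i, P_B u_j\rangle = \lambda^{-1}\langle u_i, M u_j / \lambda \rangle \cdot \lambda$, which vanishes for $i \ne j$) completes the decomposition of $\cH$. The final ``moreover'' clause about bases of $A$ and $B$ then just collects the type-(2), type-(3) spaces and the $\mathbb{C}u$-parts of the type-(5) blocks, which by construction are an orthonormal eigenbasis of $M\restriction_A$ and hence an orthonormal basis of $A$; the statement for $B$ is symmetric.

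Alternatively, and perhaps more cleanly for writing, I would avoid induction entirely and run the whole argument off the spectral decomposition of $M = P_A P_B P_A$ as sketched above, treating the three ``degenerate'' eigenvalue contributions ($\lambda = 0$ split into the $A$-part, the $B$-part via $P_B P_A P_B$, and the doubly-orthogonal part; and $\lambda = 1$) separately from the generic $\lambda \in (0,1)$ blocks. The one point requiring a little care is to confirm that the pieces I peel off are genuinely orthogonal to each other and jointly exhaust $\cH$ — this follows because $A \cap B$, $A \cap B^\perp$, $B \cap A^\perp$, $(A+B)^\perp$, and the span of all the two-dimensional blocks are mutually orthogonal and their dimensions sum to $\dim\cH$ (a dimension count using $\dim(A+B) + \dim(A\cap B) = \dim A + \dim B$).

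\textbf{Main obstacle.} The routine-looking but genuinely fiddly step is the $\lambda \in (0,1)$ analysis: showing that $\cH_\lambda$ decomposes into mutually orthogonal two-dimensional $\{u, w\}$-blocks, i.e. that one can simultaneously diagonalize $M\restriction_{\cH_\lambda \cap A}$ and have the induced $w$-vectors stay orthonormal and span $\cH_\lambda \cap B$. Everything else is bookkeeping, but this pairing-off is where the real content of the lemma lives, and it is the step I would write out most carefully.
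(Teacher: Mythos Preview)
Your proposal is a correct and essentially standard proof of the two-subspaces (Jordan--Halmos) decomposition via the spectral theory of $M = P_A P_B P_A$; the case analysis on $\lambda \in \{0,1\}$ versus $\lambda \in (0,1)$ and the pairing $u \mapsto w = P_B u / \|P_B u\|$ are exactly the right ingredients, and your orthogonality check for distinct blocks goes through as written.

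However, there is nothing to compare against: the paper does not prove Fact~\ref{fact:chordal}. It is stated as a \emph{Fact} and attributed to the quantum complexity literature (Szegedy's quantum walk paper, Marriott--Watrous on QMA amplification, Watrous on zero-knowledge), with no proof given in the text. So your write-up would be supplying a proof where the paper simply cites one. If that is your intent, your second outline (running everything directly off the spectral decomposition of $M$, without induction) is the cleaner way to present it, and the ``main obstacle'' you flag --- that the $w$-vectors for an orthonormal eigenbasis $\{u_i\}$ of $M\restriction_A$ remain orthonormal and fill out the generic part of $B$ --- is indeed the only place requiring genuine care; your computation $\langle w_i, w_j \rangle = (\lambda_i \lambda_j)^{-1/2} \langle u_i, M u_j \rangle = 0$ handles it.
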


Using Fact~\ref{fact:chordal}, we can prove the following lemma. This
lemma will be useful in designing a sequential simultaneous decoder for a
ccq-MAC.
\begin{lemma}
\label{lem:chordal}
Let $A'$, $B'$ be subspaces of a Hilbert space $\cH$. Let $\Pi_{A'}$, 
$\Pi_{B'}$ 
denote the orthogonal projectors onto $A'$, $B'$. 
Let $0 \leq \epsilon < 1$.
Then there exists a (possibly null) subspace $A$ of $A'$ 
such that for all vectors $v \in A$, 
$
\elltwo{\Pi_{B'} v}^2 \geq (1 - \sqrt{\epsilon}) \elltwo{v}^2.
$
Let $B$ be the (possibly null) subspace of $B'$ spanned by the vectors 
$\Pi_{B'} v$, 
$v \in A$. Let $\Pi_B$ be the orthogonal projector onto $B$. Then,
$\Pi_B \leq (1 -  \sqrt{\epsilon})^{-1} \Pi_{B'} \Pi_{A'} \Pi_{B'}$.
Suppose furthermore there is a positive matrix $\rho$, $\Tr \rho \leq 1$
with support in $A'$ such that 
$\Tr [\rho \Pi_{B'}] \geq 1 - \epsilon$. Then $A$ is a
non-null subspace and 
$\Tr [\rho \Pi_B] \geq 1 - 2 \sqrt{\epsilon}$.
\end{lemma}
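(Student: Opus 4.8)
The plan is to exploit the two-subspace decomposition of Fact~\ref{fact:chordal} applied to the pair $A'$, $B'$. In that decomposition, the one-dimensional blocks of type (2) lie in both $A'$ and $B'$, so a unit vector there has $\elltwo{\Pi_{B'} v}^2 = \elltwo{v}^2 = 1$; the one-dimensional blocks of type (3) lie in $A'$ but are orthogonal to $B'$, so there $\elltwo{\Pi_{B'} v}^2 = 0$. On a two-dimensional block of type (5), write its one-dimensional intersection with $A'$ as the span of a unit vector $a$; then $\elltwo{\Pi_{B'} a}^2 = \cos^2\theta$ for the principal angle $\theta$ of that block. Thus $A'$ has an orthonormal basis consisting of such vectors $a$ (one per type-(2), type-(3), and type-(5) block, by the ``moreover'' clause of Fact~\ref{fact:chordal}), each an eigenvector of $\Pi_{A'}\Pi_{B'}\Pi_{A'}$ restricted to $A'$ with eigenvalue $\elltwo{\Pi_{B'} a}^2 \in [0,1]$. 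I would \emph{define} $A$ to be the span of those basis vectors $a$ whose eigenvalue is $\geq 1 - \sqrt{\epsilon}$. Since every $v \in A$ is a combination of mutually orthogonal such $a$'s, and $\Pi_{B'}$ acts blockwise, $\elltwo{\Pi_{B'} v}^2 = \sum (\text{eigenvalue})\,|\text{coefficient}|^2 \geq (1-\sqrt{\epsilon}) \elltwo{v}^2$, which is the first claim.

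For the operator inequality, let $B := \mathrm{span}\{\Pi_{B'} v : v \in A\}$. Because $\Pi_{B'}$ maps the chosen orthogonal basis vectors $a$ of $A$ to mutually orthogonal vectors in the corresponding blocks (distinct blocks are orthogonal, and within a type-(2) block $\Pi_{B'} a = a$), the vectors $\{\Pi_{B'} a / \elltwo{\Pi_{B'} a}\}$ form an orthonormal basis of $B$, with each $\elltwo{\Pi_{B'} a}^2 \geq 1-\sqrt{\epsilon}$. Now compare $\Pi_B$ and $(1-\sqrt{\epsilon})^{-1} \Pi_{B'}\Pi_{A'}\Pi_{B'}$: both are supported inside the span of the blocks meeting $A$, they commute (being simultaneously block-diagonal), and on each relevant block one checks directly that $\Pi_{B'}\Pi_{A'}\Pi_{B'}$ acts as multiplication by $\elltwo{\Pi_{B'} a}^2 \geq 1-\sqrt{\epsilon}$ on the line spanned by $\Pi_{B'} a$ and by something $\geq 0$ elsewhere, so $(1-\sqrt{\epsilon})^{-1}\Pi_{B'}\Pi_{A'}\Pi_{B'} \geq \Pi_B$ blockwise, hence globally. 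This is essentially the standard ``Jordan's lemma'' spectral calculation and I expect it to be the only place requiring care about which blocks contribute.

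For the last part, suppose $\rho$ is positive with $\Tr\rho \leq 1$, support in $A'$, and $\Tr[\rho \Pi_{B'}] \geq 1 - \epsilon$. Write $\Pi_{A,\mathrm{low}} := \Pi_{A'} - \Pi_A$, the projector onto the span of the basis vectors $a$ with eigenvalue $< 1-\sqrt{\epsilon}$. Since $\rho$ is supported in $A'$, $\Tr[\rho \Pi_{B'}] = \Tr[\rho\, \Pi_{A'}\Pi_{B'}\Pi_{A'}]$, and $\Pi_{A'}\Pi_{B'}\Pi_{A'} \leq \Pi_A + (1-\sqrt{\epsilon})\Pi_{A,\mathrm{low}}$ on $A'$ (eigenvalue $\leq 1$ on $A$, $< 1-\sqrt{\epsilon}$ on the low part). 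Hence $1 - \epsilon \leq \Tr[\rho \Pi_A] + (1-\sqrt{\epsilon})\Tr[\rho\,\Pi_{A,\mathrm{low}}] = \Tr\rho - \sqrt{\epsilon}\,\Tr[\rho\,\Pi_{A,\mathrm{low}}] \leq 1 - \sqrt{\epsilon}\,\Tr[\rho\,\Pi_{A,\mathrm{low}}]$, so $\Tr[\rho\,\Pi_{A,\mathrm{low}}] \leq \sqrt{\epsilon}$, forcing $\Tr[\rho\,\Pi_A] \geq 1 - 2\sqrt{\epsilon} > 0$; in particular $A$ is non-null. Finally, from $\Pi_B \geq (1-\sqrt{\epsilon})\Pi_{A'}\Pi_{B'}\Pi_{A'}$ restricted suitably — more directly, using $\Tr[\rho\Pi_B] \geq \Tr[\rho \Pi_{B'}] - \ellone{\Pi_{B'}\rho\Pi_{B'} - \Pi_B(\cdots)}$ via the gentle measurement lemma (Fact~\ref{fact:gentle}) applied to $M = \Pi_{B'}$ followed by $M = \Pi_B$ — I would argue $\Tr[\rho \Pi_B] \geq 1 - 2\sqrt{\epsilon}$. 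The cleanest route is: the collapsed state $\Pi_{B'}\rho\Pi_{B'}$ is within $\ell_1$-distance $2\sqrt{\epsilon}$ of $\rho$ by Fact~\ref{fact:gentle}, it is supported in $B'$, and the inequality $\Pi_B \leq (1-\sqrt{\epsilon})^{-1}\Pi_{B'}\Pi_{A'}\Pi_{B'}$ together with $\mathrm{supp}(\rho) \subseteq A'$ gives $\Tr[\rho\Pi_B] \geq \Tr[\rho\,\Pi_{A,\mathrm{low}}^{\perp}\text{-part}]$; matching constants yields $1 - 2\sqrt{\epsilon}$. The main obstacle is bookkeeping the block decomposition carefully enough that the operator inequality and the two trace estimates all use the \emph{same} threshold $1-\sqrt{\epsilon}$ consistently; once the blockwise picture is set up, each individual inequality is a one-line eigenvalue comparison.
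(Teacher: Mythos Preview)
Your construction of $A$ via the eigenvalue threshold $1-\sqrt{\epsilon}$, the verification that $\elltwo{\Pi_{B'} v}^2 \geq (1-\sqrt{\epsilon})\elltwo{v}^2$ on $A$, and the blockwise proof of $\Pi_B \leq (1-\sqrt{\epsilon})^{-1}\Pi_{B'}\Pi_{A'}\Pi_{B'}$ are all correct and match the paper exactly. Your Markov/spectral argument that $\Tr[\rho\,\Pi_{A,\mathrm{low}}] \leq \sqrt{\epsilon}$ is also correct; the paper phrases the identical computation probabilistically (as a random process that samples an eigenvector of $\rho$, then a Jordan basis vector $\ket{a'}$, then projects onto $B'$), but the content is the same.

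The gap is in your final step, passing from $\Tr[\rho\,\Pi_A]$ to $\Tr[\rho\,\Pi_B]$. None of the three routes you sketch actually works as stated. The inequality ``$\Pi_B \geq (1-\sqrt{\epsilon})\Pi_{A'}\Pi_{B'}\Pi_{A'}$'' points the wrong way and is false in general. The gentle-measurement route gives you that $\Pi_{B'}\rho\,\Pi_{B'}$ is $\ell_1$-close to $\rho$, but you still need a \emph{lower} bound on $\Tr[(\Pi_{B'}\rho\,\Pi_{B'})\Pi_B]$, and the operator inequality you proved only gives an \emph{upper} bound on $\Pi_B$. What is missing is the observation the paper uses implicitly: for each Jordan basis vector $\ket{a} \in A$ one has $\Pi_B\ket{a} = \Pi_{B'}\ket{a}$ (because $\Pi_{B'}\ket{a} \in B$ by definition and $(\id-\Pi_{B'})\ket{a} \perp B' \supseteq B$), while for $\ket{a'} \in A'\setminus A$ one has $\Pi_B\ket{a'}=0$ (since $\Pi_{B'}\ket{a'}$ lies in a block disjoint from those spanning $B$). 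Hence $\Pi_{A'}\Pi_B\Pi_{A'} = \sum_{a \in A}\elltwo{\Pi_{B'}\ket{a}}^2\,\ketbra{a} \geq (1-\sqrt{\epsilon})\Pi_A$, and since $\rho$ is supported in $A'$,
\[
\Tr[\rho\,\Pi_B] = \Tr[\rho\,\Pi_{A'}\Pi_B\Pi_{A'}] \;\geq\; (1-\sqrt{\epsilon})\,\Tr[\rho\,\Pi_A] \;\geq\; (1-\sqrt{\epsilon})^2 \;\geq\; 1 - 2\sqrt{\epsilon}.
\]
Once you add this one-line eigenvalue comparison (which is of exactly the same blockwise flavor as the rest of your argument), your proof is complete and coincides with the paper's.
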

\begin{proof}
Take the orthogonal direct sum decomposition of $\cH$ with respect
to the subspaces $A'$, $B'$ promised by Fact~\ref{fact:chordal}. 
Consider the basis $\{\ket{a'}\}_{a'}$ of $A'$ 
given by Fact~\ref{fact:chordal}. The important thing about this basis
is that the vectors $\Pi_{B'} \ket{a'}$ are orthogonal (some of them could
be zero). 
Let $\{\ket{a}\}_a$ range over the subset of these basis vectors 
satisfying the property that 
$\elltwo{\Pi_{B'} \ket{a}}^2 \geq 1 - \sqrt{\epsilon}$. Define $A$ to
be the subspace spanned by these basis vectors. Note that $A$ could be
the null space if there are no such basis vectors. Since
the vectors $\Pi_{B'} \ket{a}$ are orthogonal, we have that for
any vector $v \in A$, 
$
\elltwo{\Pi_{B'} v}^2 \geq (1 - \sqrt{\epsilon}) \elltwo{v}^2.
$
Observe that $B$ is spanned by the non-zero orthogonal vectors 
$\Pi_{B'} \ket{a}$. It is now easy to see that
\[
\Pi_B \leq (1 - \sqrt{\epsilon})^{-1} \Pi_{B'} \Pi_A \Pi_{B'} \leq
(1 - \sqrt{\epsilon})^{-1} \Pi_{B'} \Pi_{A'} \Pi_{B'}.
\]

Suppose now we have a positive matrix
$\rho$ with the desired properties. Let 
$\rho = \sum_i q_i \ketbra{\alpha_i}$ be a diagonalisation of $\rho$
into its eigenbasis. 
Therefore $\Tr [\rho \Pi_{B'}]$ can be described
by the success probability of the following random process:
\begin{enumerate}
\item With probability $1 - \Tr \rho$, abort;

\item Choose $\ket{\alpha_i}$ with probability $\frac{q_i}{\Tr \rho}$;

\item Choose $\ket{a'}$ with probability $|\braket{a'}{\alpha_i}|^2$;

\item Try to project $\ket{a'}$ onto $B'$, this succeeds with probability
      $\elltwo{\Pi_{B'} \ket{a'}}^2$.
\end{enumerate}
Since $\Tr [\rho \Pi_{B'}] \geq 1 - \epsilon$, with probability at
least $1 - \sqrt{\epsilon}$ over steps (1), (2) and (3), we get a basis
vector $\ket{a'}$ such that 
$\elltwo{\Pi_{B'} \ket{a'}}^2 \geq 1 - \sqrt{\epsilon}$. Thus, with
probability at least $1 - \sqrt{\epsilon}$ over steps (1), (2) and (3), 
we get a basis vector of subspace $A$. In particular, $A$ is a non-null
subspace and
$\Tr [\rho \Pi_A] \geq 1 - \sqrt{\epsilon}$, which in turn implies that
$\Tr [\rho \Pi_B] \geq (1 - \sqrt{\epsilon})^2 \geq 
 1 - 2 \sqrt{\epsilon}$. 
\end{proof}

\section{Sequential decoding for single sender single receiver channel}
\label{sec:cq}
Let $\chan: x \mapsto \rho_x$ be a channel which takes a classical input 
$x$ and outputs a quantum state $\rho_x$. We use 
$\rho^{X B} := \sum_{x \in \cX} p_X(x) \ketbra{x} \otimes \rho_x$ to 
denote
the joint state of the system $X B$. We will now show how to transmit
classical information over $\chan$ at any rate $R$ strictly less
than the mutual information $I(X : B)_\rho$.

Define $\rho := \sum_{x \in \cX} p_X(x) \rho_x$ to be the average
density matrix of the output system $B$. 
Let $0 < \epsilon, \delta < 1/64$. Define 
$p_{\mathrm{min}} := \min_{x \in \cX, p_X(x) > 0} p_X(x)$,
$q_{\mathrm{min}} := \min_{x \in \cX, p_X(x) > 0} 
 q_{\mathrm{min}}(\rho_x)$.
Let $n \geq 4 \delta^{-2} p_{\mathrm{min}}^{-1} q_{\mathrm{min}}^{-1}
     \log(|\cB| |\cX| / \epsilon)$.
Define $c(\delta) := \delta \log |\cB||\cX| - \delta \log \delta$. 
For $x^n \in T^{X^n}_\delta$, let $\Pi_{x^n}$ denote the conditionally
typical projector $\Pi^{\rho_{x^n}}_\delta$. 
For $x^n \not \in T^{X^n}_\delta$, define $\Pi_{x^n} := 0$. 
Suppose the sender wants
to transmit at rate $R$ using $n$ independent
uses of the channel $\chan$. Let his messages be denoted by $m$, 
$1 \leq m \leq 2^{n R}$. The sender uses the random encoding
strategy of Figure~\ref{fig:cqencoding} to choose a codebook $\codebook$.
\begin{figure}[!ht]
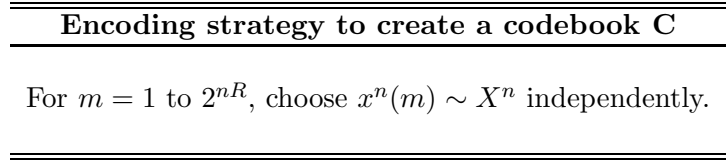

\begin{center}
\begin{tabular}{c}
\hline \hline 
{\bf Encoding strategy to create a codebook C} \\
\hline \\
For $m = 1 \mbox{ to } 2^{nR}$, choose $x^n(m) \sim X^n$ independently.
\\ \\
\hline \hline
\end{tabular}
\end{center}
\caption{Encoding strategy for the cq-channel.}
\label{fig:cqencoding}
\end{figure}

For the codebook $\codebook$, the receiver uses the sequential
decoding strategy of Figure~\ref{fig:cqdecoding}.
As will become clear later, the ordering of messages
is not very important but we impose the normal ordering while describing 
the decoding algorithm for clarity of exposition.
An interesting feature about our decoder is that it only uses projectors
onto conditionally typical subspaces of candidate messages. In particular,
it does not use any projection onto the typical subspace of the 
average message, unlike Giovannetti, Lloyd and Maccone's sequential
decoder~\cite{lloyd:seq} and many other decoders for the cq-channel
described earlier in the literature. The only other example of a
decoder for the cq-channel that solely uses projectors onto 
conditionally typical subspaces of candidate messages that we are aware
of is a pretty good measurement based decoder of 
Wilde~\cite[Exercises~15.5.4 and 19.3.5]{wilde:book}.
\begin{figure}[!ht]
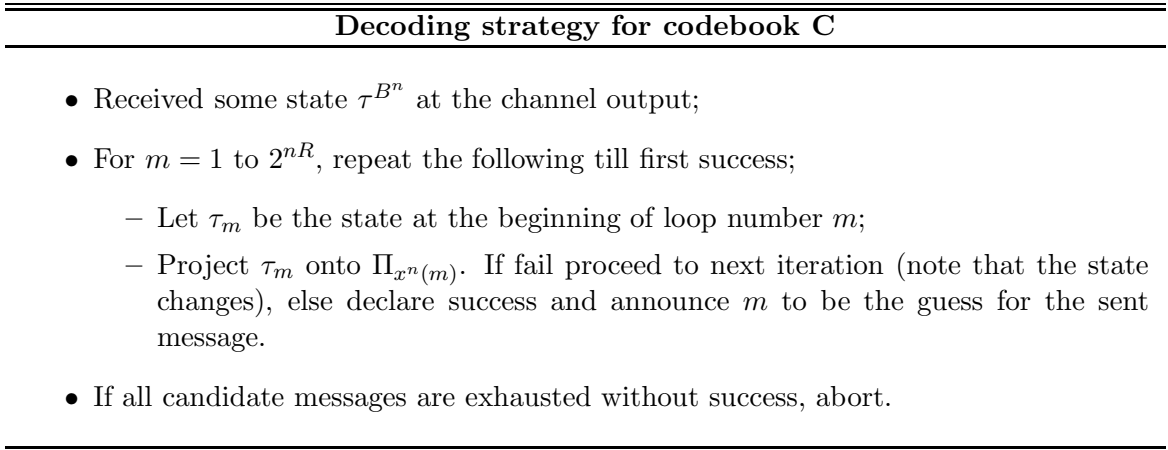

\begin{center}
\begin{tabular}{c}
\hline \hline 
{\bf Decoding strategy for codebook C} \\
\hline \\
\begin{minipage}[t]{15cm}
\begin{itemize}
\item Received some state $\tau^{B^n}$ at the channel output;

\item For $m = 1 \mbox{ to } 2^{nR}$, repeat the following till first
      success; 
      \begin{itemize}
      \item Let $\tau_m$ be the state at the beginning of loop number $m$;
     
      \item Project $\tau_m$ onto $\Pi_{x^n(m)}$. If fail proceed to next
            iteration (note that the state changes), else
            declare success and announce $m$ to be the guess for the sent
            message. 
      \end{itemize}

\item If all candidate messages are exhausted without success, abort.
\end{itemize}
\end{minipage} \\ \\
\hline \hline
\end{tabular}
\end{center}
\caption{Decoding strategy for the cq-channel.}
\label{fig:cqdecoding}
\end{figure}

In order to analyse the success probability of decoding a message
correctly, we first imagine that we are given a slightly 
different channel $\chan'$ working on alphabet $\cX^n$ with
output Hilbert space $\cB^{\otimes n}$.  It is defined as
$\chan': x^n \mapsto \rho'_{x^n}$ where
the states $\{\rho'^{B^n}_{x^n}\}_{x^n \in \cX^n}$
are as provided by Lemma~\ref{lem:asymsmooth} given state 
$\rho^{XB}$. Define the averaged density matrix
$
\rho' := \sum_{x^n \in \cX^n} p_{X^n}(x^n) \rho'_{x^n}.
$
Let $\rho'^{X^n B^n}$ be as defined by Lemma~\ref{lem:asymsmooth}. 
The encoding and decoding strategies used continue to be exactly as in 
Figures~\ref{fig:cqencoding} and \ref{fig:cqdecoding}, even though
the channel we are now working with is a single copy of $\chan'$.

For $1 \leq m \leq 2^{nR}$, define $\hPi_{x^n(m)} := \id - \Pi_{x^n(m)}$
to be the orthogonal projector corresponding to failing to project onto
$ \Pi_{x^n(m)}$.
The success probability of decoding message $m$ correctly for a single
copy of channel $\chan'$ is
\[
\Tr [\Pi_{x^n(m)} \hPi_{x^n(m-1)} \cdots \hPi_{x^n(1)} \rho'_{x^n(m)}
     \hPi_{x^n(1)} \cdots \hPi_{x^n(m-1)} \Pi_{x^n(m)}].
\]
By Lemma~\ref{lem:seq}, it is lower bounded by
\begin{eqnarray*}
\lefteqn{
1 - 2 
\sqrt{\sum_{i=1}^{m-1} \Tr [\Pi_{x^n(i)} \rho'_{x^n(m)}] +
      \Tr [\hPi_{x^n(m)} \rho'_{x^n(m)}]\;} 
}\\
& \geq & 1 - 2 
\sqrt{\sum_{i: i \neq m} \Tr [\Pi_{x^n(i)} \rho'_{x^n(m)}] + 1 -
      \Tr [\Pi_{x^n(m)} \rho'_{x^n(m)}]}.
\end{eqnarray*}
Intuitively, the last message has the largest decoding error.

We shall use the symbol $\E[\cdot]$ to denote expectation 
over the choice of the random codebook $\codebook$. 
We shall also use notation like $\E_{x^n}[\rho_{x^n}]$ to denote
$\sum_{x^n \in \cX^n} p_{X^n}(x^n) \rho_{x^n}$. 
The expected error 
probability $\E[\err(m)]$ of decoding message $m$ is now at most
\begin{eqnarray*}
\lefteqn{\E[\err(m)]} \\
& \leq & 2 \cdot \E\left[\sqrt{\sum_{i: i \neq m} 
                         \Tr [\Pi_{x^n(i)} \rho'_{x^n(m)}] + 1 -
                         \Tr [\Pi_{x^n(m)} \rho'_{x^n(m)}]}\right] \\
& \leq & 2 \sqrt{\sum_{i: i \neq m} 
                 \Tr [\E[\Pi_{x^n(i)} \rho'_{x^n(m)}]] + 1 -
                 \E[\Tr [\Pi_{x^n(m)} \rho'_{x^n(m)}]]} \\
&   =  & 2 \sqrt{(2^{nR} - 1)
                 \Tr [(\E_{x^n}[\Pi_{x^n}]) 
                      (\E_{x^n}[\rho'_{x^n}])] + 1 -
                 \E_{x^n}[\Tr [\Pi_{x^n} \rho'_{x^n}]]} \\
& \leq & 2 \sqrt{2^{nR} \cdot
                 \E_{x^n}[\Tr [\Pi_{x^n} \rho']] + 1 -
                 \E_{x^n}[\Tr [\Pi_{x^n} \rho'_{x^n}]]}. 
\end{eqnarray*}
The second inequality follows from concavity of square root. The 
equality follows from the independent and identical choice of codewords 
for a pair of different messages. 

We now bound the terms on the right
hand side above. For $x^n \in T^{X^n}_\delta$,
\begin{equation}
\label{eq:ccq1}
\begin{array}{c c l}
\Tr [\Pi_{x^n} \rho'] 
& \leq & \ellinfty{\rho'} \cdot \rank{\Pi_{x^n}} \\
& \leq & 4 \cdot 2^{-n(H(B)_\rho - c(2\delta))} \cdot 
         2^{n(H(B|X)_\rho+c(\delta))} \\
& \leq & 4 \cdot 2^{-n(I(X:B)_\rho - 2c(2\delta))},
\end{array}
\end{equation}
where we use Fact~\ref{fact:condtypicalspace} and 
Lemma~\ref{lem:asymsmooth}.
If $x^n \not \in T^{X^n}_\delta$,
$\Pi_{x^n} = 0$ and the upper bound on the trace above holds trivially.
Again, for $x^n \in T^{X^n}_\delta$,
\begin{equation}
\label{eq:ccq2}
\Tr [\Pi_{x^n} \rho'_{x^n}] 
  \geq   \Tr [\Pi_{x^n} \rho_{x^n}] - \ellone{\rho_{x^n} - \rho'_{x^n}}
  \geq   (1-\epsilon) - 11\sqrt{\epsilon} 
  \geq   1 - 12\sqrt{\epsilon},
\end{equation}
where we used Fact~\ref{fact:condtypicalspace} and 
Lemma~\ref{lem:asymsmooth}. This implies that
\[
\E_{x^n}[\Tr [\Pi_{x^n} \rho'_{x^n}]]
  \geq   (1-\epsilon) (1 - 12\sqrt{\epsilon})
  \geq   1 - 13\sqrt{\epsilon},
\]
by Fact~\ref{fact:typicalset}.
Putting all this together, we get
\[
\E[\err(m)]
  \leq   2 \sqrt{
         2^{nR} \cdot 4 \cdot 2^{-n(I(X:B)_\rho - 2c(2\delta))} 
	 + 13 \sqrt{\epsilon}
         }.  
\]
Choosing a rate 
$R = I(X:B)_\rho - 4 c(2\delta)$ gives us
$\E[\err(m)] \leq 8 \epsilon^{1/4} + 4 \cdot 2^{-nc(2\delta)}$,
for which if we take $n = C \log (1/\epsilon) \delta^{-2}$ 
for a constant
$C$ depending only on the state $\rho^{XB}$, we get
$\E[\err(m)] \leq 8 \epsilon^{1/4} + 
  4 \cdot 2^{-C\delta^{-2} c(2\delta) \log(1/\epsilon)}$. 

Since we have shown that $\E[\err(m)]$ is small for all messages
$m$, the expected average error probability 
$
\E[\avgerr_{\chan'}] := 
\E[2^{-nR} \sum_{m} \err_{\chan'}(m)]
$ 
for channel $\chan'$ is also small.
We now indicate why the same is true for $n$ copies of the original
channel $\chan$ using the same encoding and decoding procedures 
described in Figures~\ref{fig:cqencoding} and \ref{fig:cqdecoding}.

Let $m$ range over messages in $[2^{nR}]$.
Fix a codebook $\codebook$. Let $\cD^{\codebook}$ denote the POVM
corresponding to the decoding process for codebook $\codebook$ with
POVM elements $\cD^{\codebook}_{m}$. Define
$\hcD^{\codebook}_{m} := \id - \cD^{\codebook}_{m}$. 
Let $(x^n)^{\codebook}(m)$ denote the
codeword for message $m$ under $\codebook$.
The average error probability of decoding under 
$\codebook$ for channels $\chan$, $\chan'$ is given by
\begin{eqnarray*}
\avgerr^{\codebook}_{\chan} 
& = & 2^{-nR} \sum_{m} \Tr [\hcD^{\codebook}_{m} 
      \rho_{(x^n)^{\codebook}(m)}], \\
\avgerr^{\codebook}_{\chan'} 
& = & 2^{-nR} \sum_{m} \Tr [\hcD^{\codebook}_{m} 
      \rho'_{(x^n)^{\codebook}(m)}].
\end{eqnarray*}
As before we use the symbol $\E[\cdot]$ to denote
expectation over the choice of the random codebook $\codebook$, 
and symbols like 
$\E_{x^n y^n}[\cdot]$ to denote 
$\sum_{x^n, y^n} p_{X^n}(x^n) p_{Y^n}(y^n) (\cdot)$.
The difference between the expected average error probability of 
decoding for channels $\chan$, $\chan'$ is given by
\begin{eqnarray*}
\lefteqn{
|\E[\avgerr^{\codebook}_{\chan}] - \E[\avgerr^{\codebook}_{\chan'}]| 
} \\
&   =  & \left|
         \E\left[
         2^{-nR} \sum_{m} 
         \Tr [\hcD^{\codebook}_{m} 
              \rho_{(x^n)^{\codebook}(m)}] -
         \Tr [\hcD^{\codebook}_{m} 
              \rho'_{(x^n)^{\codebook}(m)}]
         \right]
         \right|\\
& \leq & \E\left[
         2^{-nR} \sum_{m} 
         |\Tr [\hcD^{\codebook}_{m} 
               \rho_{(x^n)^{\codebook}(m)}] -
          \Tr [\hcD^{\codebook}_{m} 
               \rho'_{(x^n)^{\codebook}(m)}]|
         \right]\\
& \leq & \E\left[
         2^{-nR} \sum_{m} 
         \ellone{\rho_{(x^n)^{\codebook}(m)}-
                 \rho'_{(x^n)^{\codebook}(m)}}
         \right] \\
&   =  & 2^{-nR} \sum_{m} 
         \E\left[
         \ellone{\rho_{(x^n)^{\codebook}(m)}-
                 \rho'_{(x^n)^{\codebook}(m)}}
         \right] \\
&   =  & 2^{-nR} \sum_{m} 
         \E_{x^n}[\ellone{\rho_{x^n} - \rho'_{x^n}}] 
\;  = \; \ellone{\rho^{X^n B^n} - \rho'^{X^n B^n}} \\
& \leq & 13\sqrt{\epsilon}.
\end{eqnarray*}
The last equality follows from the definitions of 
$\rho^{X^n B^n}$ and $\rho'^{X^n B^n}$. The last inequality
follows from Lemma~\ref{lem:asymsmooth}.

We have thus shown that with the encoding and decoding procedures
of Figures~\ref{fig:cqencoding} and \ref{fig:cqdecoding}, 
choosing a rate $R = I(X:B)_\rho - 4 c(2\delta)$ 
gives, for $n$ copies of the channel $\chan$, an expected average error
probability of at most
\[
\E[\avgerr^{\codebook}_{\chan}] \leq 
8\epsilon^{1/4} + 4 \cdot 2^{-C\delta^{-2} c(2\delta) \log(1/\epsilon)}
+ 13\sqrt{\epsilon} \leq 
21\epsilon^{1/4} + 4 \cdot 2^{-C\delta^{-2} c(2\delta) \log(1/\epsilon)},
\]
where $C$ is the same constant as above, depending only on the
channel $\chan$. This quantity can be made
arbitrarily small by choosing $\epsilon$ and $\delta$ appropriately.
We can also simultaneously make $c(2\delta)$ arbitrarily small so that
any rate $R$ strictly less than $I(X:B)_\rho$ can be achieved.
We have thus shown the following theorem.
\begin{theorem}
For a single sender single receiver channel $\chan$ with classical input 
and quantum output,
the random encoding and sequential decoding strategies described in
Figures~\ref{fig:cqencoding} and \ref{fig:cqdecoding} can achieve any
rate strictly less than the mutual information with arbitrary small
expected average probability of error in the asymptotic limit
of many independent uses of the channel.
\label{thm:cq}
\end{theorem}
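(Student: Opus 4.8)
The plan is to assemble the estimates already derived above into a single random-coding argument. First I would keep the auxiliary channel $\chan': x^n \mapsto \rho'_{x^n}$ supplied by Lemma~\ref{lem:asymsmooth}, which is engineered precisely so that the decoding projectors behave well on it: each $\Pi_{x^n}$ has rank at most $2^{n(H(B|X)_\rho + c(\delta))}$ by Fact~\ref{fact:condtypicalspace}, while the averaged output $\rho'$ has operator norm at most $4\cdot 2^{-n(H(B)_\rho - c(2\delta))}$ by Lemma~\ref{lem:asymsmooth}. The per-message error bound $\E[\err(m)] \leq 2\sqrt{\sum_{i\neq m}\Tr[\Pi_{x^n(i)}\rho'_{x^n(m)}] + 1 - \Tr[\Pi_{x^n(m)}\rho'_{x^n(m)}]}$ comes from applying Lemma~\ref{lem:seq} to the sequence of orthogonal projectors $\Pi_{x^n(1)},\dots,\Pi_{x^n(m)}$ together with concavity of the square root to pull the codebook expectation inside. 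Using the independence of codewords for distinct messages, the cross term $\E[\Tr[\Pi_{x^n(i)}\rho'_{x^n(m)}]]$ collapses to $\E_{x^n}[\Tr[\Pi_{x^n}\rho']] \leq 4\cdot 2^{-n(I(X:B)_\rho - 2c(2\delta))}$, while the correct term satisfies $\E_{x^n}[\Tr[\Pi_{x^n}\rho'_{x^n}]] \geq 1 - 13\sqrt{\epsilon}$ by Fact~\ref{fact:condtypicalspace}, Lemma~\ref{lem:asymsmooth} and Fact~\ref{fact:typicalset}; hence $\E[\err(m)] \leq 2\sqrt{2^{nR}\cdot 4\cdot 2^{-n(I(X:B)_\rho - 2c(2\delta))} + 13\sqrt{\epsilon}}$ uniformly in $m$.

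Next I would set the rate $R = I(X:B)_\rho - 4c(2\delta)$ and take $n = C\delta^{-2}\log(1/\epsilon)$ for a constant $C = C(\rho^{XB})$ large enough to meet the hypothesis of Lemma~\ref{lem:asymsmooth}. This kills the exponential term and yields $\E[\err(m)] \leq 8\epsilon^{1/4} + 4\cdot 2^{-C\delta^{-2}c(2\delta)\log(1/\epsilon)}$ for every $m$, and therefore the same bound on the expected average error $\E[\avgerr^{\codebook}_{\chan'}]$ for a single use of $\chan'$.

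It then remains to transfer the conclusion from one use of $\chan'$ to $n$ uses of the genuine channel $\chan$ under the \emph{same} encoder and decoder. Fixing a codebook $\codebook$, letting $\cD^{\codebook}$ be the POVM implemented by the sequential decoder of Figure~\ref{fig:cqdecoding}, and bounding $|\E[\avgerr^{\codebook}_{\chan}] - \E[\avgerr^{\codebook}_{\chan'}]|$ term by term gives $\E_{x^n}[\ellone{\rho_{x^n} - \rho'_{x^n}}]$, which after averaging over messages equals $\ellone{\rho^{X^n B^n} - \rho'^{X^n B^n}} \leq 13\sqrt{\epsilon}$ by the last inequality of Lemma~\ref{lem:asymsmooth}. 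Combining, $\E[\avgerr^{\codebook}_{\chan}] \leq 21\epsilon^{1/4} + 4\cdot 2^{-C\delta^{-2}c(2\delta)\log(1/\epsilon)}$. Finally, since $c(2\delta) \to 0$ and $\delta^{-2}c(2\delta) \to \infty$ as $\delta \to 0$, letting $\delta$ and $\epsilon$ tend to zero (with $n$ tied to them as above) drives this expected average error to $0$ while the achievable rate $I(X:B)_\rho - 4c(2\delta)$ approaches $I(X:B)_\rho$; as the bound is an average over codebooks, some codebook achieves it, which is the assertion of the theorem.

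The step I expect to be the real crux — and the genuinely new ingredient — is the reduction to the smoothed channel $\chan'$, so that the conditionally typical projectors $\Pi_{x^n}$ decode $\chan'$ in exactly the form demanded by Lemma~\ref{lem:seq}, together with the observation underlying Lemma~\ref{lem:key} and Lemma~\ref{lem:seq} that for \emph{orthogonal} projectors the failure probabilities themselves add up rather than their square roots. Once these are in place, the remainder is a classical-style union bound transcribed almost verbatim, with no recourse to the pretty good measurement.
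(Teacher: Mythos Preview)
Your proposal is correct and follows the paper's proof essentially step for step: the same smoothed channel $\chan'$ from Lemma~\ref{lem:asymsmooth}, the same invocation of Lemma~\ref{lem:seq} with concavity of the square root to pull the codebook expectation inside, the same bounds on the cross and correct terms with the same constants, the same choice $R = I(X:B)_\rho - 4c(2\delta)$ and $n = C\delta^{-2}\log(1/\epsilon)$, and the same transfer from $\chan'$ to $\chan$ via $\ellone{\rho^{X^n B^n} - \rho'^{X^n B^n}} \leq 13\sqrt{\epsilon}$, arriving at the identical final bound $21\epsilon^{1/4} + 4\cdot 2^{-C\delta^{-2}c(2\delta)\log(1/\epsilon)}$. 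Your identification of the crux --- the asymptotic smoothing plus the additivity of failure probabilities for orthogonal projectors in Lemma~\ref{lem:key}/\ref{lem:seq} --- is also exactly what the paper emphasises.
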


\paragraph{A proof without using asymptotic smoothing:} It is possible 
to construct slightly more complicated
sequential decoders and improve the upper bound on the decoding error
probability that we can prove. For example, the decoder can first try
to project the received state onto the typical projector 
$\Pi := \Pi^{\rho^{\otimes n}}_{2\delta}$. If the attempt fails, then the
decoder aborts. If the attempt succeeds, then the decoder tries projecting
onto $\Pi_{x^n(m)}$ for $m = 1, \ldots, 2^{nR}$ as in 
Figure~\ref{fig:cqdecoding}. A very similar proof as above, using 
Lemma~\ref{lem:seq} with $\rho = \Pi \rho_{x^n(m)} \Pi$, shows
that the expected average decoding error probability is small. This
proof does not use asymptotic smoothing i.e. Lemma~\ref{lem:asymsmooth}.
However, the resulting upper bound on the error probability that we
can prove is not significantly better than the upper bound we prove above.
In the interest of keeping the decoder construction as simple
as possible, we have provided the proof above. 

\section{Jointly typical decoding for a multiple access channel}
\label{sec:MAC}

\subsection{Two sender MAC}
\label{subsec:ccq}
Let $\chan: (x,y) \mapsto \rho_{xy}$ be a channel that takes two classical 
inputs $x$ and $y$ and outputs a quantum state $\rho_{xy}$ in Hilbert
space $\cB$. We use
\begin{equation}
\rho^{XYB} := \sum_{(x,y) \in \cX \times \cY} p_X(x) p_Y(y)
\ketbra{x,y} \otimes \rho_{xy}
\label{eq:rhoXYB}
\end{equation}
to denote the joint state of the system $X Y B$. We will show
how to achieve the following standard inner bound for this ccq-MAC using a
{\em jointly typical} decoder performing a sequence of orthogonal
projections.
\begin{equation}
R_1 < I(X : B | Y)_\rho, ~~~~~
R_2 < I(Y : B | X)_\rho, ~~~~~
R_1 + R_2 < I(XY : B)_\rho.
\label{eq:ccq}
\end{equation}

Define the following averaged density matrices:
\[
\rho_x := \sum_{y \in \cY} p_Y(y) \rho_{xy}, ~~~~~
\rho_y := \sum_{x \in \cX} p_X(x) \rho_{xy}, ~~~~~
\rho   := \sum_{y \in \cY} p_Y(y) \rho_y.
\]
We can now define the $n$-fold tensor product matrices $\rho_{x^n y^n}$, 
$\rho_{x^n}$, $\rho_{y^n}$ for sequences $x^n$, $y^n$ in the 
natural manner. The following equations follow easily.
\begin{eqnarray*}
\forall x^n \in \cX^n: \rho_{x^n} 
& = & \sum_{y^n \in \cY^n} p_{Y^n}(y^n) \rho_{x^n y^n}, \\
\forall y^n \in \cY^n: \rho_{y^n} 
& = & \sum_{x^n \in \cX^n} p_{X^n}(x^n) \rho_{x^n y^n}, \\
\rho 
& = & \sum_{x^n \in \cX^n} p_{X^n}(x^n) \rho_{x^n} \\
& = & \sum_{y^n \in \cY^n} p_{Y^n}(y^n) \rho_{y^n}
  =   \sum_{(x^n,y^n) \in \cX^n \times \cY^n} 
           p_{X^n}(x^n) p_{Y^n}(y^n) \rho_{x^n y^n}.
\end{eqnarray*}

Let $0 < \epsilon, \delta < 1/64$. Define 
\begin{eqnarray*}
p_{\mathrm{min}} 
& := & \min_{(x,y) \in \cX \times \cY, p_X(x) p_Y(y) > 0} p_X(x) p_Y(y),\\
q_{\mathrm{min}} 
& := & \min_{(x,y) \in \cX \times \cY, p_X(x) p_Y(y) > 0} 
       q_{\mathrm{min}}(\rho_{x,y}).
\end{eqnarray*}
Let $n \geq 4 \delta^{-2} p_{\mathrm{min}}^{-1} q_{\mathrm{min}}^{-1}
     \log(|\cB| |\cX| |\cY| / \epsilon)$.
Define $c(\delta) := \delta \log |\cB||\cX||\cY| - \delta \log \delta$. 
Suppose that $x^n y^n \in T^{(XY)^n}_\delta$. Note that this implies that
$x^n \in T^{X^n}_\delta$ and $y^n \in T^{Y^n}_\delta$; however, the 
converse is not true.
Let $\Pi_{x^n y^n}$, $\Pi_{x^n}$, $\Pi_{y^n}$ denote the conditionally
typical projectors $\Pi^{\rho_{x^n y^n}}_\delta$, 
$\Pi^{\rho_{x^n}}_{6\delta}$,
$\Pi^{\rho_{y^n}}_{6\delta}$.
We define a projector $\tPi_{x^n y^n}$ as follows.
Treat the supports of the conditionally typical projectors 
$\Pi_{x^n y^n}$,
$\Pi_{y^n}$ as the subspaces $A'$, $B'$ of Lemma~\ref{lem:chordal}, and 
obtain $\tPi_{x^n y^n}$ as the projector $\Pi_B$
promised by Lemma~\ref{lem:chordal}. If 
$x^n y^n \not \in T^{(XY)^n}_\delta$, define $\tPi_{x^n y^n} := 0$.
The decoding procedure shall use
the projectors $\tPi_{x^n y^n}$ instead of the conditionally
typical projectors $\Pi_{x^n y^n}$ one might naively expect.

Suppose senders 1, 2
want to transmit at rates $R_1$, $R_2$ using $n$ independent
uses of the channel $\chan$. Let their messages be denoted by $m_1$, $m_2$
respectively, where $1 \leq m_i \leq 2^{n R_i}$. The senders use the 
random encoding strategy of Figure~\ref{fig:ccqencoding}
 to choose a codebook $\codebook$.
\begin{figure}[!ht]
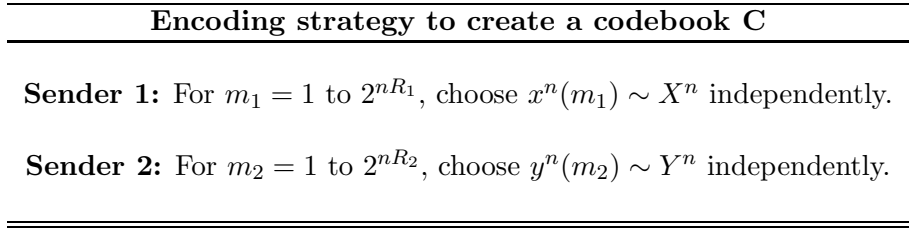

\begin{center}
\begin{tabular}{c}
\hline \hline 
{\bf Encoding strategy to create a codebook C} \\
\hline \\
{\bf Sender 1:}
For $m_1 = 1 \mbox{ to } 2^{nR_1}$, choose $x^n(m_1) \sim X^n$ 
independently.  \\ \\
{\bf Sender 2:}
For $m_2 = 1 \mbox{ to } 2^{nR_2}$, choose $y^n(m_2) \sim Y^n$ 
independently. \\ \\
\hline \hline
\end{tabular}
\end{center}
\caption{Encoding strategy for ccq-MAC.}
\label{fig:ccqencoding}
\end{figure}

For the codebook $\codebook$, the receiver uses the sequential
decoding strategy of Figure~\ref{fig:ccqdecoding}.
As will become clear later, the ordering of messages
is not very important but we impose the lexicographic ordering
looping over $m_2$ first and $m_1$ later, while describing the
decoding algorithm, for clarity of exposition.
\begin{figure}[!ht]
\begin{center}
\begin{tabular}{c}
\hline \hline 
{\bf Decoding strategy for codebook C} \\
\hline \\
\begin{minipage}[t]{15cm}
\begin{itemize}
\item Received some state $\tau^{B^n}$ at the channel output;

\item 
\begin{minipage}[t]{13.5cm}
\begin{tabbing}
For \= $m_1 = 1$ to  $2^{nR_1}$, \\
\> For \= $m_2 = 1$ to $2^{nR_2}$, \\
\> \> Repeat the following till first success; \\
\> \> \begin{minipage}[t]{12.5cm}
      \begin{itemize}
      \item Let $\tau_{m_1 m_2}$ be the state at the beginning of loop 
            number $(m_1, m_2)$;
     
      \item Project $\tau_{m_1 m_2}$ onto $\tPi_{x^n(m_1) y^n(m_2)}$.
            If fail proceed, else
            declare success and announce $(m_1, m_2)$ to be the guess for 
            the sent message pair.
      \end{itemize}
      \end{minipage}
\end{tabbing}
\end{minipage}

\item If all candidate message pairs are exhausted without success, abort.
\end{itemize}
\end{minipage} \\ \\
\hline \hline
\end{tabular}
\end{center}
\caption{Decoding strategy for ccq-MAC.}
\label{fig:ccqdecoding}
\end{figure}

In order to analyse the success probability of decoding a message
correctly, we first imagine that we are given a slightly 
different channel $\chan'$ working on alphabets $\cX^n$, $\cY^n$ with
output Hilbert space $\cB^{\otimes n}$.  It is defined as
$\chan': (x^n, y^n) \mapsto \rho'_{x^n y^n}$ where
the states $\{\rho'^{B^n}_{x^n y^n}\}_{(x^n, y^n) \in \cX^n \times \cY^n}$
are as provided by Lemma~\ref{lem:asymsmooth} given state 
$\rho^{XYB}$. Define the following averaged density matrices:
\[
\rho'_{x^n} := \sum_{y^n \in \cY^n} p_{Y^n}(y^n) \rho'_{x^n y^n}, ~~~~~
\rho'_{y^n} := \sum_{x^n \in \cX^n} p_{X^n}(x^n) \rho'_{x^n y^n}, ~~~~~
\rho'       := \sum_{y^n \in \cY^n} p_{Y^n}(y^n) \rho'_{y^n}.
\]
Let $\rho'^{X^n Y^n B^n}$ be as defined by Lemma~\ref{lem:asymsmooth}. 
The encoding and decoding strategies used continue to be exactly as in 
Figures~\ref{fig:ccqencoding} and \ref{fig:ccqdecoding}, even though
the channel we are now working with is a single copy of $\chan'$.

For $(m_1, m_2) \in [2^{nR_1}] \times [2^{n R_2}]$, define 
\[
\htPi_{x^n(m_1) y^n(m_2)} := 
\id - \tPi_{x^n(m_1) y^n(m_2)}
\]
to be the orthogonal projector corresponding to failing to project onto
$\tPi_{x^n(m_1) y^n(m_2)}$.
Let $\mm(m_1,m_2)$ denote the predecessor of $(m_1,m_2)$ in the 
lexicographic order if it exists. We shall use the notation
$(x^n,y^n)(\mm(m_1,m_2))$
to denote $(x^n(m'_1), y^n(m'_2))$ where $(m'_1, m'_2) := \mm(m_1,m_2)$.

The success probability of decoding $(m_1, m_2)$ correctly
for a single copy of channel $\chan'$ is
\begin{eqnarray*}
\Tr [\tPi_{x^n(m_1) y^n(m_2)} \htPi_{(x^n,y^n)(\mm(m_1,m_2))}
     \cdots \htPi_{x^n(1) y^n(1)} \; \rho'_{x^n(m_1) y^n(m_2)} \\
     \htPi_{x^n(1) y^n(1)} \cdots 
     \htPi_{(x^n,y^n)(\mm(m_1,m_2))} \tPi_{x^n(m_1) y^n(m_2)}
    ].
\end{eqnarray*}
By Lemma~\ref{lem:seq}, it is lower bounded by
\begin{eqnarray*}
\lefteqn{
1 - 2 \cdot 
    \sqrt{\sum_{(i,j)=(1,1)}^{\mm(m_1,m_2)} 
          \Tr [\tPi_{x^n(i) y^n(j)}  \rho'_{x^n(m_1) y^n(m_2)}] +
               \Tr [\htPi_{x^n(m_1) y^n(m_2)} \rho'_{x^n(m_1) y^n(m_2)}]
         } 
}\\
& \geq & 1 -  \\
&      & 2 \cdot
         \sqrt{\sum_{\stackrel{(i,j):}{(i,j) \neq (m_1,m_2)}}
               \Tr [\tPi_{x^n(i) y^n(j)} \rho'_{x^n(m_1) y^n(m_2)}]
               + 1 - 
               \Tr [\tPi_{x^n(m_1) y^n(m_2)} \rho'_{x^n(m_1) y^n(m_2)}]
              }\;.
\end{eqnarray*}

We shall use the symbol $\E[\cdot]$ to denote expectation 
over the choice of the random codebook $\codebook$.
We shall also use notation like 
$\E_{x^n y^n} [\rho'_{x^n y^n}] :=
 \sum_{x^n, y^n} p_{X^n}(x^n) p_{Y^n}(y^n) \rho'_{x^n y^n}$, and
$\E_{y^n}[\rho'_{x^n y^n}] := 
 \sum_{y^n} p_{Y^n}(y^n) \rho'_{x^n y^n}.
$
The expected error 
probability $\E[\err(m_1,m_2)]$ of decoding $(m_1,m_2)$ is now at 
most
\begin{eqnarray*}
\lefteqn{\E[\err(m_1,m_2)]} \\
& \leq & 2 \cdot 
         \E\left[
         \sqrt{\sum_{\stackrel{(i,j):}{(i,j) \neq (m_1,m_2)}}
               \Tr [\tPi_{x^n(i) y^n(j)} \rho'_{x^n(m_1) y^n(m_2)}]
               + 1 - 
               \Tr [\tPi_{x^n(m_1) y^n(m_2)} \rho'_{x^n(m_1) y^n(m_2)}]
              }
           \right] \\
& \leq & \!\!\!\!
         2 \cdot
         \!
         \left(
         \begin{array}{l l}
         \displaystyle{\sum_{i: i \neq m_1}}
         \Tr [\E[\tPi_{x^n(i) y^n(m_2)} \rho'_{x^n(m_1) y^n(m_2)}]] +
         \displaystyle{\sum_{j: j \neq m_2}}
         \Tr [\E[\tPi_{x^n(m_1) y^n(j)} \rho'_{x^n(m_1) y^n(m_2)}]] 
         + {} \\
         \displaystyle{\sum_{\stackrel{(i,j):}{i \neq m_1, j \neq m_2}}}
         \Tr [\E[\tPi_{x^n(i) y^n(j)} \rho'_{x^n(m_1) y^n(m_2)}]] 
         + 1 - 
         \E[\Tr [\tPi_{x^n(m_1) y^n(m_2)} \rho'_{x^n(m_1) y^n(m_2)}]]
         \end{array}
         \right)^{1/2} \\
&   =  & 2 \cdot
         \left(
         \begin{array}{l l}
         (2^{nR_1} - 1)
         \Tr [\E_{y^n}[(\E_{x^n}[\tPi_{x^n y^n}]) 
                       (\E_{x^n}[\rho'_{x^n y^n}])]] + {} \\
         (2^{nR_2} - 1)
         \Tr [\E_{x^n}[(\E_{y^n}[\tPi_{x^n y^n}]) 
                       (\E_{y^n}[\rho'_{x^n y^n}])]] + {} \\ 
         (2^{nR_1} - 1)(2^{n R_2} - 1)
         \Tr [(\E_{x^n y^n}[\tPi_{x^n y^n}])
              (\E_{x^n y^n}[\rho'_{x^n y^n}])] + {} \\
         1 -
         \E_{x^n y^n}[\Tr [\tPi_{x^n y^n} \rho'_{x^n y^n}]]
         \end{array}
         \right)^{1/2} \\
& \leq & 2 \cdot
         \left(
         \begin{array}{l l}
         2^{nR_1} \cdot
         \E_{x^n y^n}[\Tr [\tPi_{x^n y^n} \rho'_{y^n}]] + 
         2^{nR_2} \cdot
         \E_{x^n y^n}[\Tr [\tPi_{x^n y^n} \rho'_{x^n}]] + {} \\ 
         2^{n(R_1+R_2)} \cdot
         \E_{x^n y^n}[\Tr [\tPi_{x^n y^n} \rho']] + 1 -
         \E_{x^n y^n}[\Tr [\tPi_{x^n y^n} \rho'_{x^n y^n}]]
         \end{array}
         \right)^{1/2}.
\end{eqnarray*}
The second inequality follows from concavity of square root. The 
equality follows from the independent and identical choice of codewords 
for a pair of different messages. 

We now bound the terms on the right
hand side above. If $x^n y^n \in T^{(XY)^n}_\delta$, then
\begin{eqnarray*}
\Tr [\tPi_{x^n y^n} \rho'_{y^n}]
& \leq & (1-\sqrt{\epsilon})^{-1}
         \Tr [\Pi_{y^n} \Pi_{x^n y^n} \Pi_{y^n} \rho'_{y^n}] \\
& \leq & 2^{n(H(B|XY)_\rho + c(\delta))} (1-\sqrt{\epsilon})^{-1}
         \Tr [\Pi_{y^n} \rho_{x^n y^n} \Pi_{y^n} \rho'_{y^n}] 
\end{eqnarray*}
where we use the definition of $\tPi_{x^n y^n}$,
Lemma~\ref{lem:chordal} and Fact~\ref{fact:condtypicalspace}.
If $x^n y^n \not \in T^{(XY)^n}_\delta$,
then $\tPi_{x^n y^n} = 0$ and the upper bound on the trace above
holds trivially. Fix $y^n \in T^{Y^n}_\delta$. We get
\begin{eqnarray*}
\E_{x^n}[\Tr [\tPi_{x^n y^n} \rho'_{y^n}]]
& \leq & 2^{n(H(B|XY)_\rho + c(\delta))} (1-\sqrt{\epsilon})^{-1}
         \E_{x^n}[\Tr [\Pi_{y^n} \rho_{x^n y^n} \Pi_{y^n} \rho'_{y^n}]] \\
&   =  & 2^{n(H(B|XY)_\rho + c(\delta))} (1-\sqrt{\epsilon})^{-1}
         \Tr [\Pi_{y^n} (\E_{x^n}[\rho_{x^n y^n}]) \Pi_{y^n} \rho'_{y^n}]\\
&   =  & 2^{n(H(B|XY)_\rho + c(\delta))} (1-\sqrt{\epsilon})^{-1}
         \Tr [\Pi_{y^n} \rho_{y^n} \Pi_{y^n} \rho'_{y^n}] \\
& \leq & 2^{-n(H(B|Y)_\rho - c(6\delta))} \cdot 2^{n(H(B|XY)+c(\delta))}
         (1-\sqrt{\epsilon})^{-1}
         \Tr [\Pi_{y^n} \rho'_{y^n}] \\
& \leq & 2^{-n(I(X:B|Y)_\rho - 2c(6\delta))} (1-\sqrt{\epsilon})^{-1},
\end{eqnarray*}
where we use Fact~\ref{fact:condtypicalspace}. If 
$y^n \not \in T^{Y^n}_\delta$, then for all $x^n \in \cX^n$,
$x^n y^n \not \in T^{(XY)^n}_\delta$ and $\tPi_{x^n y^n} = 0$, so
the above upper bound on $\E_{x^n}[\Tr [\tPi_{x^n y^n} \rho'_{y^n}]]$
holds trivially.
We can now understand better the 
reason why the projectors $\tPi_{x^n y^n}$ are used in the decoding
process instead of the conditionally typical projectors 
$\Pi_{x^n y^n}$ one may naively expect. 
Suppose $x^n y^n \in T^{(XY)^n}_\delta$.
If we had used just 
$\Pi_{x^n y^n}$, we would not have been able to surround it by the
projectors $\Pi_{y^n}$ in the above chain of inequalities and we would
not have been able to prove the desired exponentially small upper bound.
If we had used the positive operator $\Pi_y \Pi_{x^n y^n} \Pi_y$ instead,
we would not have been able to apply Lemma~\ref{lem:seq} which is 
fundamental to our decoding strategy, since in general $\Pi_y$ does not
commute with $\Pi_{x^n y^n}$ and hence $\Pi_y \Pi_{x^n y^n} \Pi_y$
is not a projector. The operator $\tPi_{x^n y^n}$ mimics the
intuitive notion of projecting onto the intersection of the typical
spaces of $\rho_{x^n y^n}$ and $\rho_{y^n}$ 
in the commutative (classical) case. It is a bonafide orthogonal 
projector and smaller
(up to an error factor of $1-\sqrt{\epsilon}$) than the operator
$\Pi_{y^n} \Pi_{x^n y^n} \Pi_{y^n}$. Hence it can be fruitfully used in our
decoding strategy and its error analysis, and we can prove our desired 
exponentially small upper bound on the error probability.

Continuing we get, for $x^n y^n \in T^{(XY)^n}_\delta$,
\begin{eqnarray*}
\Tr [\tPi_{x^n y^n} \rho'_{x^n}]
& \leq & \ellinfty{\rho'_{x^n}} \cdot \Tr [\tPi_{x^n y^n}] 
\;\leq\; 4 \cdot 2^{-n(H(B|X)_\rho - c(6\delta))} \Tr [\Pi_{x^n y^n}] \\
& \leq & 4 \cdot 2^{-n(H(B|X)_\rho - c(6\delta))} \cdot
         2^{n(H(B|XY)_\rho + c(\delta))} 
\;\leq\; 4 \cdot 2^{-n(I(Y:B|X)_\rho - 2c(6\delta))},
\end{eqnarray*}
where we use Lemma~\ref{lem:asymsmooth}, 
Fact~\ref{fact:condtypicalspace} and the facts that 
$\rank{\tPi_{x^n y^n}} \leq \rank{\Pi_{x^n y^n}}$ and
$\rho'_{x^n} \leq \ellinfty{\rho'_{x^n}} \id$.
If $x^n y^n \not \in T^{(XY)^n}_\delta$, then $\tPi_{x^n y^n} = 0$
and the upper bound on the trace above holds trivially.

Similarly, for $x^n y^n \in T^{(XY)^n}_\delta$ we have
\begin{eqnarray*}
\Tr [\tPi_{x^n y^n} \rho']
& \leq & \ellinfty{\rho'} \cdot \Tr [\tPi_{x^n y^n}]
\;\leq\; 4 \cdot 2^{-n(H(B)_\rho - c(2\delta))} \cdot 
         \Tr [\Pi_{x^n y^n}]\\
& \leq & 4 \cdot 2^{-n(H(B)_\rho - c(2\delta))} \cdot
         2^{n(H(B|XY)_\rho + c(\delta))} 
\;\leq\; 4 \cdot 2^{-n(I(XY:B)_\rho - 2c(2\delta))},
\end{eqnarray*}
where we use Lemma~\ref{lem:asymsmooth}, Fact~\ref{fact:condtypicalspace}
and the facts that 
$\rank{\tPi_{x^n y^n}} \leq \rank{\Pi_{x^n y^n}}$ and
$\rho' \leq \ellinfty{\rho'} \id$.
If $x^n y^n \not \in T^{(XY)^n}_\delta$, then $\tPi_{x^n y^n} = 0$
and the upper bound on the trace above holds trivially. 

Finally, for $x^n y^n \in T^{(XY)^n}_\delta$,
\begin{eqnarray*}
\lefteqn{\Tr [\tPi_{x^n y^n} \rho'_{x^n y^n}]}\\
& \geq & \Tr [\tPi_{x^n y^n} \Pi_{x^n y^n} 
              \rho_{x^n y^n} \Pi_{x^n y^n}] - 
         \ellone{\rho_{x^n y^n} - 
                 \Pi_{x^n y^n} \rho_{x^n y^n} \Pi_{x^n y^n}} - 
         \ellone{\rho_{x^n y^n} - \rho'_{x^n y^n}} \\
& \geq & (1-4\sqrt{\epsilon}) - 2\epsilon - 
         11\sqrt{\epsilon}
\;\geq\; 1 - 17\sqrt{\epsilon},
\end{eqnarray*}
where we used Lemma~\ref{lem:asymsmooth}, 
Fact~\ref{fact:condtypicalspace} and the fact that $\Pi_{x^n y^n}$
commutes with $\rho_{x^n y^n}$.
For the
second inequality we also used Lemma~\ref{lem:chordal} to conclude that
$
\Tr [\tPi_{x^n y^n} (\Pi_{x^n y^n} \rho_{x^n y^n}
     \Pi_{x^n y^n})] \geq 1 - 4\sqrt{\epsilon},
$
since
\[
\Tr [\Pi_{y^n} (\Pi_{x^n y^n} \rho_{x^n y^n} \Pi_{x^n y^n})] \geq
\Tr [\Pi_{y^n} \rho_{x^n y^n}] - 
\ellone{\rho_{x^n y^n} - \Pi_{x^n y^n} \rho_{x^n y^n} \Pi_{x^n y^n}} \geq
1 - \epsilon - 2\epsilon = 1 - 3\epsilon,
\]
where we used Facts~\ref{fact:winter}, \ref{fact:condtypicalspace} and 
the fact that $\Pi_{x^n y^n}$ commutes with $\rho_{x^n y^n}$. 
We proceed to get
\[
\E_{x^n y^n}[\Tr [\tPi_{x^n y^n} \rho'_{x^n y^n}]] \geq
(1 - \epsilon) (1 - 17\sqrt{\epsilon}) \geq
1 - 18\sqrt{\epsilon},
\]
using Fact~\ref{fact:typicalset}.

Putting all this together, we get
\begin{eqnarray*}
\lefteqn{\E[\err(m_1,m_2)]} \\
& \leq & 2 \cdot
         \sqrt{
         \begin{array}{l l}
         2^{nR_1} \cdot 2^{-n(I(X:B|Y)_\rho - 2c(6\delta))} 
         (1-\sqrt{\epsilon})^{-1} +
         4 \cdot 2^{nR_2} \cdot 2^{-n(I(Y:B|X)_\rho - 2c(6\delta))} \\
         {} + 4 \cdot 2^{n(R_1+R_2)} \cdot 
              2^{-n(I(XY:B)_\rho - 2c(6\delta))}
         + 18\sqrt{\epsilon}
         \end{array}
         }.
\end{eqnarray*}
Choosing rate pairs satisfying the inequalities
\begin{equation}
R_1 \leq I(X:B|Y)_\rho - 4 c(6\delta),
R_2 \leq I(Y:B|X)_\rho - 4 c(6\delta),
R_1 + R_2 \leq I(XY:B)_\rho - 4 c(6\delta), 
\label{eq:ccqMACprime}
\end{equation}
and using the fact that $\epsilon < 1/64$
gives us
$\E[\err(m_1,m_2)] \leq 10\epsilon^{1/4} + 8 \cdot 2^{-nc(6\delta)}$,
for which if we take $n = C \log (1/\epsilon) \delta^{-2}$ 
for a constant
$C$ depending only on the state $\rho^{XYB}$, we get
$\E[\err(m_1,m_2)] \leq 10\epsilon^{1/4} + 
  8 \cdot 2^{-C\delta^{-2} c(6\delta) \log(1/\epsilon)}$.

Since we have shown that $\E[\err(m_1,m_2)]$ is small for all message pairs
$(m_1,m_2)$, the expected average error 
probability 
$
\E[\avgerr_{\chan'}] := 
\E[2^{-n(R_1+R_2)} \sum_{m_1,m_2} \err_{\chan'}(m_1,m_2)]
$ 
for channel $\chan'$ is also small.
We can now apply an argument similar to that of the proof of
Theorem~\ref{thm:cq} to show that the expected average error probability
is also small for $n$ copies of the original
channel $\chan$ for the same encoding and decoding procedures as
described in Figures~\ref{fig:ccqencoding} and \ref{fig:ccqdecoding}.
Doing so allows us to show that
\[
\E[\avgerr_{\chan}] \leq \E[\avgerr_{\chan'}] + 13\sqrt{\epsilon}
\leq 23\epsilon^{1/4} + 
8 \cdot 2^{-C\delta^{-2} c(6\delta) \log(1/\epsilon)},
\]
where $C$ is the same constant as above, depending only on the channel
$\chan$. This quantity can be made
arbitrarily small by choosing $\epsilon$ and $\delta$ appropriately.
We can also simultaneously make $c(6\delta)$ arbitrarily small so that
the rate pair $(R_1, R_2)$ approaches the boundary of the region
described in Equation~\ref{eq:ccq}.
We have thus shown the following theorem.
\begin{theorem}
For a two sender multiple access channel $\chan$ with classical input and
quantum output,
the random encoding and sequential decoding strategies described in
Figures~\ref{fig:ccqencoding} and \ref{fig:ccqdecoding} can 
achieve any rate in the region described by 
Equation~\ref{eq:ccq} with arbitrary small expected
average probability of error in the asymptotic limit
of many independent uses of the channel.
\label{thm:ccq}
\end{theorem}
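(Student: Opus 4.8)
The plan is to carry out exactly the random coding and sequential decoding analysis set up in the paragraphs preceding the theorem. First I would let the two senders draw codebooks independently as in Figure~\ref{fig:ccqencoding} and let the receiver decode sequentially as in Figure~\ref{fig:ccqdecoding}, using the orthogonal projectors $\tPi_{x^n y^n}$ produced by Lemma~\ref{lem:chordal} from the pair of subspaces $(\mathrm{supp}\,\Pi_{x^n y^n}, \mathrm{supp}\,\Pi_{y^n})$, rather than the conditionally typical projectors $\Pi_{x^n y^n}$ themselves. To analyse the error I would pass to the smoothed channel $\chan' : (x^n, y^n) \mapsto \rho'_{x^n y^n}$ supplied by Lemma~\ref{lem:asymsmooth}, whose averaged states $\rho'_{x^n}$, $\rho'_{y^n}$, $\rho'$ enjoy the sharp $\ellinfty{\cdot}$ bounds recorded there and whose conditional states are $\ellone{\cdot}$-close to the true ones on typical inputs. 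For a single use of $\chan'$, Lemma~\ref{lem:seq} applied to the sequence of projectors $\tPi_{x^n(i) y^n(j)}$ in lexicographic order lower bounds the probability of correctly decoding a fixed pair $(m_1, m_2)$ by $1 - 2\sqrt{S(m_1,m_2)}$, where $S(m_1,m_2)$ is the sum of the ``collision'' traces $\Tr[\tPi_{x^n(i) y^n(j)}\,\rho'_{x^n(m_1) y^n(m_2)}]$ over all pairs $(i,j) \neq (m_1, m_2)$ together with the ``miss'' term $1 - \Tr[\tPi_{x^n(m_1) y^n(m_2)}\,\rho'_{x^n(m_1) y^n(m_2)}]$.

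Next I would average over the random codebook, pulling the expectation inside the square root by concavity and splitting the collision sum according to whether $i \neq m_1$ only, $j \neq m_2$ only, or both. Because codewords of sender~$1$ (resp.\ sender~$2$) indexed by distinct messages are i.i.d., each group of collision terms factors into a trace of a product of an averaged projector and an averaged state. This is the place where the switch to $\tPi_{x^n y^n}$ earns its keep: for an $i \neq m_1$ collision, Lemma~\ref{lem:chordal} lets me replace $\tPi_{x^n y^n}$ by the slightly larger $(1-\sqrt{\epsilon})^{-1}\Pi_{y^n}\Pi_{x^n y^n}\Pi_{y^n}$, so that after averaging over $x^n$ one sees $\E_{x^n}[\rho_{x^n y^n}] = \rho_{y^n}$ and Fact~\ref{fact:condtypicalspace} yields a bound of order $2^{-n(I(X:B|Y)_\rho - O(c(\delta)))}$. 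The other two collision groups are bounded using $\rank{\tPi_{x^n y^n}} \le \rank{\Pi_{x^n y^n}} \le 2^{n(H(B|XY)_\rho + c(\delta))}$ against the $\ellinfty{\cdot}$ bounds on $\rho'_{x^n}$ and $\rho'$ from Lemma~\ref{lem:asymsmooth}, giving $2^{-n(I(Y:B|X)_\rho - O(c(\delta)))}$ and $2^{-n(I(XY:B)_\rho - O(c(\delta)))}$. For the miss term I would first get $\Tr[\Pi_{y^n}\Pi_{x^n y^n}\rho_{x^n y^n}\Pi_{x^n y^n}] \ge 1 - O(\epsilon)$ from Facts~\ref{fact:winter} and \ref{fact:condtypicalspace} and the commutation of $\Pi_{x^n y^n}$ with $\rho_{x^n y^n}$, then invoke Lemma~\ref{lem:chordal} to pass from $\Pi_{y^n}$ to $\tPi_{x^n y^n}$, and finally use $\ellone{\rho_{x^n y^n} - \rho'_{x^n y^n}} \le 11\sqrt{\epsilon}$ from Lemma~\ref{lem:asymsmooth}, obtaining $\E_{x^n y^n}[\Tr[\tPi_{x^n y^n}\rho'_{x^n y^n}]] \ge 1 - O(\sqrt{\epsilon})$. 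Multiplying the three collision bounds by $2^{nR_1}$, $2^{nR_2}$, $2^{n(R_1+R_2)}$ respectively shows that whenever $(R_1, R_2)$ obeys Equation~\ref{eq:ccqMACprime} the expected error of every message pair for $\chan'$ is $O(\epsilon^{1/4}) + O(2^{-nc(6\delta)})$.

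Finally I would transfer back to $n$ copies of $\chan$ exactly as in the proof of Theorem~\ref{thm:cq}: the decoding POVM depends only on the codebook, not on which channel is in use, so the two expected average error probabilities differ by at most $\E_{x^n y^n}[\ellone{\rho_{x^n y^n} - \rho'_{x^n y^n}}] = \ellone{\rho^{X^n Y^n B^n} - \rho'^{X^n Y^n B^n}} \le 13\sqrt{\epsilon}$ by Lemma~\ref{lem:asymsmooth}. Hence $\E[\avgerr_{\chan}]$ is small, so some codebook achieves small average error over $\chan$; sending $\epsilon \to 0$ and then $\delta \to 0$ (which drives $c(6\delta) \to 0$) makes every rate pair strictly inside the region of Equation~\ref{eq:ccq} achievable. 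The only genuinely new step beyond the single-sender case is the use of $\tPi_{x^n y^n}$ in place of $\Pi_{x^n y^n}$, forced by the fact that $\Pi_{x^n y^n}$ and $\Pi_{y^n}$ need not commute; I expect that simultaneously controlling the geometric slack from Lemma~\ref{lem:chordal} and the operator-norm bounds from asymptotic smoothing, while keeping all three rate constraints tight at once, will be the most delicate part of the write-up.
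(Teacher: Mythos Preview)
Your proposal is correct and follows the paper's proof essentially step for step: pass to the smoothed channel via Lemma~\ref{lem:asymsmooth}, apply Lemma~\ref{lem:seq}, split the collision sum into the three groups $i\neq m_1$, $j\neq m_2$, and both, handle the $i\neq m_1$ group via Lemma~\ref{lem:chordal} (sandwich by $\Pi_{y^n}$, use $\Pi_{x^n y^n}\leq 2^{n(H(B|XY)+c(\delta))}\rho_{x^n y^n}$, average over $x^n$ to produce $\rho_{y^n}$, then apply Fact~\ref{fact:condtypicalspace}), handle the other two groups by the rank bound against $\ellinfty{\rho'_{x^n}}$ and $\ellinfty{\rho'}$, bound the miss term via Lemma~\ref{lem:chordal} and the $\ellone{\cdot}$ closeness, and finally transfer back to $\chan$ exactly as in Theorem~\ref{thm:cq}.

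One small correction to your opening summary: Lemma~\ref{lem:asymsmooth} applied to $\rho^{XYB}$ (with $X$ in the role of $X$, $Z$ trivial, $Y$ in the role of $Y$) yields $\ellinfty{\cdot}$ bounds on $\rho'_{x^n}$ and $\rho'$ but \emph{not} on $\rho'_{y^n}$. This asymmetry is exactly why the $i\neq m_1$ collisions cannot be handled by a direct operator-norm argument and instead force the chordal construction $\tPi_{x^n y^n}$; your detailed treatment in the second paragraph already reflects this correctly, so only the parenthetical list needs adjusting.
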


\paragraph{A pretty good measurement decoder:} 
We can construct a jointly typical decoder for the ccq-MAC
using the pretty good measurement with
positive operators $\Pi_{y^n(m_2)} \Pi_{x^n(m_1) y^n(m_2)} 
\Pi_{y^n(m_2)}$. We analyse the decoding error as above, imagining
that the channel we are working with is a single copy of $\chan'$ and
then concluding that the decoding error for $n$ copies of the original
channel $\chan$ is small. The same decoding strategy was independently
discovered by Fawzi et al.~\cite{mcgill:qic}, and by Xu and 
Wilde~\cite{xuwilde:assisted} for the entanglement assisted case.

\subsection{Extension to MACs with three or more senders?}
We have constructed a jointly typical decoder for a two-sender MAC.
A natural question that arises is whether we can get jointly typical
decoders for MACs with three or more senders. However, our method 
runs into problems with three senders. Observe that our bounds on 
$R_2$ and $R_1 + R_2$ used that $\ellinfty{\rho'_{x^n}}$ and 
$\ellinfty{\rho'}$ are small, whereas our bound on 
$R_1$ came from the property that $\tPi_{x^n y^n}$ is roughly
less than $\Pi_{y^n} \Pi_{x^n y^n} \Pi_{y^n}$.
For a three sender MAC we can, for example, get bounds on $R_1$ and
$R_1 + R_2$ by constructing, 
using Lemma~\ref{lem:chordal} twice, an orthogonal projection 
$\tPi_{x^n y^n z^n}$ such
that it is roughly less than 
$\Pi_{z^n} \Pi_{y^n z^n} \Pi_{x^n y^n z^n} \Pi_{y^n z^n} \Pi_{z^n}$.
We can also get the desired bound on $R_2$, $R_2 + R_3$ and 
$R_1 + R_2 + R_3$ by doing asymptotic smoothing using 
Lemma~\ref{lem:asymsmooth} which gives us a state
$\rho'^{X^n Y^n Z^n B^n}$ close to $\rho^{X^n Y^n Z^n B^n}$ such that
$\ellinfty{\rho'_{x^n z^n}}$, $\ellinfty{\rho'_{x^n}}$ and
$\ellinfty{\rho'}$ are small.
However now if we want to get the desired bound on $R_3$
we would need $\ellinfty{\rho'_{x^n y^n}}$ to be small too.
We do not know how to prove the existence of such states
$\rho'^{X^n Y^n Z^n B^n}$. The
asymptotic smoothing methods of Section~\ref{subsec:asymsmooth} seem 
to fail to show this. 

If we assume that some suitable averaged output states commute, then
we can construct a jointly typical decoder for the three sender MAC.
For example, suppose that for all $x$, $y$, $z$, $\rho_{yz}$ commutes
with $\rho_{xy}$ for the same value of
$y$, and also $\rho_y$ commutes with $\rho_z$. Then we can
choose $\tPi_{x^n y^n z^n}$ such that it is roughly less than
$\Pi_{z^n} \Pi_{y^n z^n} \Pi_{x^n y^n z^n} \Pi_{y^n z^n} \Pi_{z^n}$
as well as 
$\Pi_{y^n} \Pi_{x^n y^n} \Pi_{x^n y^n z^n} \Pi_{x^n y^n} \Pi_{y^n}$ by
using Lemma~\ref{lem:chordal} twice 
with the intersections of supports of $\Pi_{x^n y^n}$ and
$\Pi_{y^n z^n}$, and $\Pi_{y^n}$ and $\Pi_{z^n}$.
This allows us to get the desired bounds on $R_1$, $R_1 + R_2$, 
$R_3$ and $R_1 + R_3$. The desired bounds on $R_2$, $R_2 + R_3$ and
$R_1 + R_2 + R_3$ can be obtained using asymptotic smoothing as
before.

Note that we can easily do successive cancellation by our
sequential decoding method and approach any vertex point of the rate region
of a MAC with any number of senders. For example, to approach the vertex
point $(I(X:B), I(Y:B|X), I(Z:B|XY))$ of the rate region of a 
three-sender MAC, we do successive cancellation using the 
decoding method for a cq-channel described in Section~\ref{sec:cq} earlier
as follows. Let the transmitted messages be $m_1$, $m_2$, $m_3$
where $m_i \in 2^{nR_i}$, $R_1 < I(X:B)$, $R_2 < I(Y:B|X)$,
$R_3 < I(Z:B|XY)$.
First decode $x^n(m_1)$ by applying a sequence of projectors
$\Pi_{x^n(i)}$, $i = 1, \ldots, m_1$, then decode $y^n(m_2)$ by applying
a sequence of projectors $\Pi_{x^n(m_1) y^n(j)}$, $j = 1, \ldots, m_2$
and finally decode $z^n(m_3)$ by applying a sequence of projectors
$\Pi_{x^n(m_1) y^n(m_2) z^n(k)}$, $k = 1, \ldots, m_3$.
Time sharing then gives us any
point in the interior of the rate region. However for several 
applications, including the application to the interference channel,
time sharing is not good enough and we seem to require a jointly typical
decoder that can achieve any point in the interior of the rate region
without time sharing and rate splitting. 

In Section~\ref{subsec:cmgmac}, we show that we can nevertheless
get an inner bound with a jointly typical sequential
decoder for a restricted MAC with three senders that we call the
Chong-Motani-Garg MAC. Our inner bound 
is larger than the inner bound proved earlier 
by Chong, Motani and Garg. However, it does not lead to a larger
inner bound for the interference channel as we will explain later in
more detail in Section~\ref{sec:ccqq}.

\subsection{The Chong-Motani-Garg three-sender MAC}
\label{subsec:cmgmac}
As a warm-up for the Chong-Motani-Garg
three-sender MAC, we first study a two-sender MAC where the receiver only
cares to decode correctly the message of the first sender. The senders
are forced to use the random encoding strategy of 
Figure~\ref{fig:cqencoding}. We shall call such a MAC as the 
{\em disinterested 2-MAC}. 
Classically, Costa and El Gamal~\cite{costaelgamal} considered
a disinterested 2-MAC in their work on the interference channel with
strong interference. They gave the following inner bound for the
disinterested 2-MAC.
\[
R_1 < I(X : B | Y), ~~~~~ R_1 + R_2 < I(X Y : B).
\]
Let $m_1$ and $m_2$ be the messages of senders~1 and 2 respectively.
The receiver is only interested in recovering $m_1$ correctly. 
Costa and El Gamal modelled this indifference towards 
$m_2$ by 
requiring the received word to be jointly typical with a pair 
$m_1, m_2$ where only $m_1$ has to be unique. In other words, the decoder
checks if the received word lies in the union, over all $m_2$, of sets of 
output words typical with $(m_1, m_2)$ for a given $m_1$.
If there is exactly one such message $m_1$, the receiver declares
it to be her guess for the message sent by sender~1, or else she declares 
error. This allowed them to `forget' the constraint on $R_2$ that
would normally be there in the inner bound Equation~\ref{eq:ccq}
for a standard 2-MAC.

In the quantum case, a naive analogue would be to loop over all messages
$m_1$ and for a fixed $m_1$, to try and project
the received state onto the span, over all $m_2$, of the conditionally 
typical subspaces of the pairs $(m_1, m_2)$. If the attempt were to 
succeed for a given message $m_1$, the receiver would declare it to be
her guess for the message sent by sender~1. However, it seems difficult 
to prove good upper bounds on the decoding error probability for this naive
quantum strategy. The main problem with this strategy is that in general 
the 
projector onto the span of a set of subspaces is not upper bounded well by
the sum of the projectors onto the individual subspaces, if those
projectors do not commute. This problem does not arise in the classsical 
case. Nevertheless, we can sidestep this problem by using a different
decoding strategy which will let us easily prove the following
inner bound for the disinterested 2-MAC. 
\begin{equation}
\begin{array}{c}
I(X : B)_\rho \leq R_1 < I(X : B | Y)_\rho, ~~~~~ 
R_1 + R_2 < I(XY : B)_\rho, \\
\mbox{OR} \\
R_1 < I(X : B)_\rho. \\
\end{array}
\label{eq:disinterestedccq}
\end{equation}
Above, $\rho^{XYB}$ is the joint state of the system $XYB$ defined
in Equation~\ref{eq:rhoXYB}. 
The above (quantum) inner bound is 
larger than Costa and El Gamal's inner bound. Our decoder is a 
{\em jointly typical}
decoder doing a sequence of orthogonal projections. 
In order to achieve a rate pair $(R_1, R_2)$ where $R_1 < I(X : B)_\rho$,
the decoder just adopts the sequential decoding procedure of
Figure~\ref{fig:cqdecoding}. 
In order to achieve a rate pair $(R_1, R_2)$ where 
$I(X : B)_\rho \leq R_1 < I(X : B | Y)_\rho$,
the decoder just adopts the sequential decoding procedure of
Figure~\ref{fig:ccqdecoding}. This works because now 
$R_2 < I(Y : B | X)$. Note that the above inner bound forms a 
non-convex set.

We now describe the Chong-Motani-Garg three-sender MAC, called CMG-MAC
for short.
Let $\chan: (z,y) \mapsto \rho_{zy}$ be a channel that takes two
classical inputs $z$ and $y$ and outputs a quantum state $\rho_{zy}$.
Let $X$, $Z$, $Y$ be classical random variables such that $Y$ is
independent of $XZ$. We define
\begin{equation}
\rho^{XZYB} := \sum_{(x,z,y) \in \cX \times \cZ \times \cY} 
p_X(x) p_{Z|X}(z|x) p_Y(y)
\ketbra{x,z,y} \otimes \rho_{zy}
\label{eq:rhoXZYB}
\end{equation}
to denote the joint state of the system $X Z Y B$.

In the CMG-MAC
there are three senders and one receiver. Suppose senders~1, 2, 3
want to transmit at rates $R_1$, $R_2$, $R_3$ using $n$ independent
uses of the channel $\chan$. Let their messages be denoted by
$m_1$, $m_2$, $m_3$, where $1 \leq m_i \leq 2^{nR_i}$. The senders are
forced to use the 
random encoding strategy of Figure~\ref{fig:cmgencoding}
 to choose a codebook $\codebook$. Note that the encoding strategies
of senders~1 and 2 are not independent in general.
\begin{figure}[!ht]
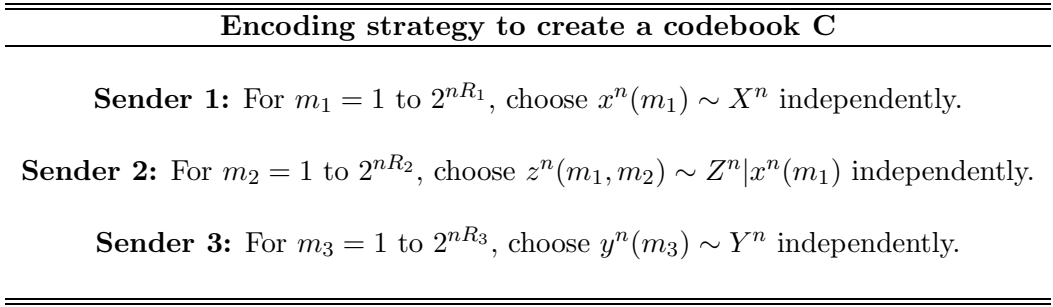

\begin{center}
\begin{tabular}{c}
\hline \hline 
{\bf Encoding strategy to create a codebook C} \\
\hline \\
{\bf Sender 1:}
For $m_1 = 1 \mbox{ to } 2^{nR_1}$, choose $x^n(m_1) \sim X^n$ 
independently.  \\ \\
{\bf Sender 2:}
For $m_2 = 1 \mbox{ to } 2^{nR_2}$, choose 
$z^n(m_1, m_2) \sim Z^n|x^n(m_1)$ independently. \\ \\
{\bf Sender 3:}
For $m_3 = 1 \mbox{ to } 2^{nR_3}$, choose $y^n(m_3) \sim Y^n$ 
independently. \\ \\
\hline \hline
\end{tabular}
\end{center}
\caption{Encoding strategy for CMG-MAC.}
\label{fig:cmgencoding}
\end{figure}

The receiver is only interested in recovering the message pair
$(m_1, m_2)$ correctly. As in our inner bound 
for the disinterested 2-MAC in Equation~\ref{eq:disinterestedccq},
we sidestep the problem of `forgetting' a constraint in the quantum 
setting by adopting one of two different decoding strategies depending upon
where our rate triple $(R_1, R_2, R_3)$ lies.
Our decoder is a {\em jointly typical}
decoder doing a sequence of orthogonal projections.
The inner bound that we end up proving via this two pronged decoding
strategy is larger than 
the inner bound proved
earlier in the classical setting by Chong, Motani and Garg. 
Note that our inner bound is not convex.
\begin{equation}
\label{eq:cmgmac}
\begin{array}{c}
R_3 < I(Y : B | Z)_\rho, ~~~~~
R_2 < I(Z : B | X Y)_\rho, ~~~~~ 
R_1 + R_2 < I(Z : B | Y)_\rho, \\
R_2 + R_3 < I(Z Y : B | X)_\rho, ~~~~~
R_1 + R_2 + R_3 < I(Z Y : B)_\rho, \\
\mbox{OR} \\
R_3 \geq I(Y : B | Z)_\rho, ~~~~~
R_2 < I(Z : B | X)_\rho, ~~~~~ 
R_1 + R_2 < I(Z : B)_\rho. 
\end{array}
\end{equation}
Recall that Chong, Motani and Garg's inner bound is
\[
R_2 < I(Z : B | X Y), ~~~ 
R_1 + R_2 < I(Z : B | Y), ~~~
R_2 + R_3 < I(Z Y : B | X), ~~~
R_1 + R_2 + R_3 < I(Z Y : B).
\]
Our inner bound contains their inner bound because
taking $R_3 \geq I(Y : B | Z)$ in their inner bound implies the
inequalities $R_2 < I(Z : B | X)$ and $R_1 + R_2 < I(Z :B)$. 

Define the following averaged density matrices:
\[
\begin{array}{c}
\rho_z := \sum_{y \in \cY} p_Y(y) \rho_{zy}, ~~~
\rho_{xy} := \sum_{z \in \cZ} p_{Z|X}(z|x) \rho_{zy}, ~~~
\rho_x := \sum_{z \in \cZ} p_{Z|X}(z|x) \rho_z =
          \sum_{y \in \cY} p_Y(y) \rho_{xy} \\
\rho_y := \sum_{z \in \cZ} p_Z(z) \rho_{zy}, ~~~
\rho   := \sum_{y \in \cY} p_Y(y) \rho_y.
\end{array}
\]
We can now define the $n$-fold tensor product matrices $\rho_{z^n y^n}$, 
$\rho_{x^n y^n}$, $\rho_{z^n}$, $\rho_{x^n}$, $\rho_{y^n}$ for sequences 
$x^n$, $z^n$, $y^n$ in the natural manner. 
Note that from the definition of $\rho^{XZYB}$ in
Equation~\ref{eq:rhoXZYB}, 
$
H(B|ZY) = H(B|ZXY), H(B|Z) = H(B|XZ).
$

Let $0 < \epsilon, \delta < 1/64$. Define 
\begin{eqnarray*}
p_{\mathrm{min}} 
& := & \min_{(x,z,y) \in \cX \times \cZ \times \cY, p_{XZ}(x,z) p_Y(y) > 0}
               p_{XZ}(x,z) p_Y(y),\\
q_{\mathrm{min}} 
& := & \min_{(x,z,y) \in \cX \times \cZ \times \cY, p_{XZ}(x,z) p_Y(y) > 0}
               q_{\mathrm{min}}(\rho_{z,y}).
\end{eqnarray*}
Let $n \geq 4 \delta^{-2} p_{\mathrm{min}}^{-1} q_{\mathrm{min}}^{-1}
     \log(|\cB| |\cX| |\cZ| |\cY| / \epsilon)$.
Define $c(\delta) := \delta \log |\cB||\cX||\cZ||\cY| - 
\delta \log \delta$. 
Suppose that $x^n z^n y^n \in T^{(XZY)^n}_\delta$. 
Let $\Pi_{z^n y^n}$, $\Pi_{x^n y^n}$, $\Pi_{y^n}$
denote the conditionally
typical projectors $\Pi^{\rho_{z^n y^n}}_\delta$, 
$\Pi^{\rho_{x^n y^n}}_{6\delta}$, 
$\Pi^{\rho_{y^n}}_{6\delta}$. 
We define a projector $\tPi_{x^n z^n y^n}$ as follows.
Treat the supports of the conditionally typical projectors 
$\Pi_{z^n y^n}$,
$\Pi_{x^n y^n}$ as the subspaces $A'$, $B'$ of Lemma~\ref{lem:chordal}, and
obtain $\tPi'_{x^n z^n y^n}$ as the projector $\Pi_B$
promised by Lemma~\ref{lem:chordal}. Now treat the supports of
$\tPi'_{x^n z^n y^n}$ and $\Pi_{y^n}$
as the subspaces $A'$, $B'$ of Lemma~\ref{lem:chordal}, and
obtain $\tPi_{x^n z^n y^n}$ as the projector $\Pi_B$
promised by Lemma~\ref{lem:chordal}.
If $x^n z^n y^n \not \in T^{(XZY)^n}_\delta$, define 
$\tPi_{x^n z^n y^n} := 0$. Then for all $x^n z^n y^n$,
\[
\tPi_{x^n z^n y^n} \leq 
(1-\sqrt{\epsilon})^{-1} \Pi_{y^n} \tPi'_{x^n z^n y^n} \Pi_{y^n} \leq
(1-\sqrt{\epsilon})^{-2} \Pi_{y^n}\Pi_{x^n y^n} \Pi_{z^n y^n} 
                         \Pi_{x^n y^n} \Pi_{y^n}.
\]
The decoding procedure shall use the projectors $\tPi_{x^n z^n y^n}$.

Depending upon whether
the desired rate triple $(R_1, R_2, R_3)$ falls into the first or
the second region in Equation~\ref{eq:cmgmac}, the receiver adopts
one of two sequential decoding strategies. 
As will become clear later, the ordering of messages
is not very important but we impose the lexicographic ordering
looping over $m_3$ first, then $m_2$ and then $m_1$, while describing the
decoding algorithms, for clarity of exposition. 

We now describe the receiver's decoding strategy
when the rate triple falls into the first region.
Given the codebook $\codebook$, the receiver uses the sequential
decoding strategy of Figure~\ref{fig:cmgdecoding1}.
\begin{figure}[!ht]
\begin{center}
\begin{tabular}{c}
\hline \hline 
{\bf Decoding strategy for codebook C for first region} \\
\hline \\
\begin{minipage}[t]{15cm}
\begin{itemize}
\item Received some state $\tau^{B^n}$ at the channel output;

\item 
\begin{minipage}[t]{13.5cm}
\begin{tabbing}
For \= $m_1 = 1$ to  $2^{nR_1}$, \\
\> For \= $m_2 = 1$ to $2^{nR_2}$, \\
\> \> For \= $m_3 = 1$ to $2^{nR_3}$, \\
\> \> \> Repeat the following till first success; \\
\> \> \> \begin{minipage}[t]{11.9cm}
         \begin{itemize}
         \item Let $\tau_{m_1 m_2 m_3}$ be the state at the beginning of 
               loop number $(m_1, m_2, m_3)$;
     
      \item Project $\tau_{m_1 m_2 m_3}$ onto 
            $\tPi_{x^n(m_1), z^n(m_1,m_2), y^n(m_3)}$.
            If fail proceed, else
            declare success and announce $(m_1, m_2, m_3)$ to be the 
            guess for 
            the messages sent by senders~1, 2 and 3.
      \end{itemize}
      \end{minipage}
\end{tabbing}
\end{minipage}

\item If all candidate message triples are exhausted without success, 
      abort.
\end{itemize}
\end{minipage} \\ \\
\hline \hline
\end{tabular}
\end{center}
\caption{Decoding strategy for CMG-MAC for first region.}
\label{fig:cmgdecoding1}
\end{figure}

In order to analyse the success probability of decoding a message
correctly, we first imagine that we are given a slightly 
different channel $\chan'$ working on alphabets $\cX^n$, $\cZ^n$, 
$\cY^n$ with
output Hilbert space $\cB^{\otimes n}$.  It is defined as
$\chan': (x^n, z^n, y^n) \mapsto \rho'_{x^n z^n y^n}$ where
the states $\{\rho'^{B^n}_{x^n z^n y^n}\}_{(x^n, z^n, y^n) \in 
\cX^n \times \cZ^n \times \cY^n}$
are as provided by Lemma~\ref{lem:asymsmooth} given state 
$\rho^{XZYB}$. 
Let $\rho'^{X^n Z^n Y^n B^n}$ be as defined by Lemma~\ref{lem:asymsmooth}. 
The encoding and decoding strategies used continue to be exactly as in 
Figures~\ref{fig:cmgencoding} and \ref{fig:cmgdecoding1}, even though
the channel we are now working with is a single copy of $\chan'$.

We shall use the notation
$(x^n,z^n,y^n)(m_1,m_2,m_3)$ to denote
$(x^n(m_1), z^n(m_1,m_2), y^n(m_3))$. 
We shall use the symbol $\E[\cdot]$ to denote expectation 
over the choice of the random codebook $\codebook$.
We shall also use notation like 
$\E_{x^n z^n y^n} [\rho'_{x^n z^n y^n}] :=
 \sum_{x^n, z^n, y^n} p_{(XZ)^n}(x^n z^n) p_{Y^n}(y^n) 
     \rho'_{x^n z^n y^n}$, and
$\E_{z^n|x^n}[\rho'_{x^n z^n y^n}] := 
 \sum_{z^n} p_{Z^n|X^n}(z^n|x^n) \rho'_{x^n z^n y^n}.
$
Arguing as in Section~\ref{subsec:ccq}, we see that
the expected error 
probability $\E[\err(m_1,m_2,m_3)]$ of decoding $(m_1,m_2,m_3)$ is at 
most
\begin{eqnarray*}
\lefteqn{\E[\err(m_1,m_2,m_3)]} \\
& \leq & 2 \cdot
         \E\left[
         \left(
         \begin{array}{c}
         \displaystyle{
         \sum_{\stackrel{(i,j,k):}{(i,j,k) \neq (m_1,m_2,m_3)}}
         }
         \Tr [\tPi_{(x^n,z^n,y^n)(i,j,k)} 
              \rho'_{(x^n,z^n,y^n)(m_1,m_2,m_3)}] + 1 \\
         {} - \Tr [\tPi_{(x^n,z^n,y^n)(m_1,m_2,m_3)} 
                   \rho'_{(x^n,z^n,y^n)(m_1,m_2,m_3)}]
         \end{array}
         \right)^{1/2}
         \right] \\
& \leq & 2 \cdot
         \left(
         \begin{array}{c}
         \displaystyle{\sum_{(i,j): i \neq m_1}}
         \Tr [\E[\tPi_{(x^n,z^n,y^n)(i,j,m_3)} 
                 \rho'_{(x^n,z^n,y^n)(m_1,m_2,m_3)}]] + {} \\
         \displaystyle{\sum_{j: j \neq m_2}}
         \Tr [\E[\tPi_{(x^n,z^n,y^n)(m_1,j,m_3)} 
              \rho'_{(x^n,z^n,y^n)(m_1,m_2,m_3)}]] + {} \\
         \displaystyle{\sum_{k: k \neq m_3}}
         \Tr [\E[\tPi_{(x^n,z^n,y^n)(m_1,m_2,k)} 
                 \rho'_{(x^n,z^n,y^n)(m_1,m_2,m_3)}]] + {} \\
         \displaystyle{\sum_{\stackrel{(j,k):}{j \neq m_2, k \neq m_3}}}
         \Tr [\E[\tPi_{(x^n,z^n,y^n)(m_1,j,k)} 
                 \rho'_{(x^n,z^n,y^n)(m_1,m_2,m_3)}]] + {} \\
         \displaystyle{\sum_{\stackrel{(i,j,k):} {i \neq m_1, k \neq m_3}}}
         \Tr [\E[\tPi_{(x^n,z^n,y^n)(i,j,k)} 
                 \rho'_{(x^n,z^n,y^n)(m_1,m_2,m_3)}]] + {} \\
         1 - 
         \E[\Tr [\tPi_{(x^n,z^n,y^n)(m_1,m_2,m_3)} 
                 \rho'_{(x^n,z^n,y^n)(m_1,m_2,m_3)}]]
         \end{array}
         \right)^{1/2} \\
& \leq & 2 \cdot
         \left(
         \begin{array}{l l}
         2^{n(R_1 + R_2)} \cdot
         \E_{x^n z^n y^n}[\Tr [\tPi_{x^n z^n y^n} \rho'_{y^n}]] + 
         2^{n R_2} \cdot
         \E_{x^n z^n y^n}[\Tr [\tPi_{x^n z^n y^n} \rho'_{x^n y^n}]] + {} \\
         2^{n R_3} \cdot
         \E_{x^n z^n y^n}[\Tr [\tPi_{x^n z^n y^n} \rho'_{x^n z^n}]] + 
         2^{n (R_2 + R_3)} \cdot
         \E_{x^n z^n y^n}[\Tr [\tPi_{x^n z^n y^n} \rho'_{x^n}]] + {} \\
         2^{n (R_1 + R_2 + R_3)} \cdot
         \E_{x^n z^n y^n}[\Tr [\tPi_{x^n z^n y^n} \rho']] + 
         1 - 
         \E_{x^n z^n y^n}[\Tr [\tPi_{x^n z^n y^n} \rho'_{x^n z^n y^n}]]
         \end{array}
         \right)^{1/2} \\
\end{eqnarray*}
The second inequality follows from concavity of square root. The
third inequality follows from the independent and identical choice
of codewords for a pair of different messages and from the encoding
strategy used.

We now bound the terms on the right
hand side above. 
Arguing as in Section~\ref{subsec:ccq}, we see that
for any $y^n \in T^{Y^n}_\delta$,
\begin{eqnarray*}
\lefteqn{\E_{x^n z^n}[\Tr [\tPi_{x^n z^n y^n} \rho'_{y^n}]]} \\
& \leq & (1-\sqrt{\epsilon})^{-2}
         \E_{x^n z^n}[
         \Tr [\Pi_{y^n} \Pi_{x^n y^n} \Pi_{z^n y^n} \Pi_{x^n y^n} 
              \Pi_{y^n} \rho'_{y^n}]
         ] \\
& \leq & 2^{n(H(B|ZY)_\rho + c(\delta))} (1-\sqrt{\epsilon})^{-2}
         \E_{x^n z^n}[
         \Tr [\Pi_{y^n} \Pi_{x^n y^n} \rho_{z^n y^n} \Pi_{x^n y^n} 
              \Pi_{y^n} \rho'_{y^n}]
         ] \\
&   =  & 2^{n(H(B|ZY)_\rho + c(\delta))} (1-\sqrt{\epsilon})^{-2} 
         \E_{x^n}[
         \Tr [\Pi_{y^n} \Pi_{x^n y^n} (\E_{z^n|x^n}[\rho_{z^n y^n}])
              \Pi_{x^n y^n} \Pi_{y^n} \rho'_{y^n}]
         ] \\
&   =  & 2^{n(H(B|ZY)_\rho + c(\delta))} (1-\sqrt{\epsilon})^{-2}
         \E_{x^n}[
         \Tr [\Pi_{y^n} \Pi_{x^n y^n} \rho_{x^n y^n} \Pi_{x^n y^n}
              \Pi_{y^n} \rho'_{y^n}]
         ] \\
& \leq & 2^{n(H(B|ZY)_\rho + c(\delta))} (1-\sqrt{\epsilon})^{-2}
         \E_{x^n}[\Tr [\Pi_{y^n} \rho_{x^n y^n} \Pi_{y^n} \rho'_{y^n}]] \\
&   =  & 2^{n(H(B|ZY)_\rho + c(\delta))} (1-\sqrt{\epsilon})^{-2}
         \Tr [\Pi_{y^n} \rho_{y^n} \Pi_{y^n} \rho'_{y^n}] \\
& \leq & 2^{-n(H(B|Y)_\rho - c(6\delta))} \cdot 2^{n(H(B|ZY)+c(\delta))}
         (1-\sqrt{\epsilon})^{-2}
         \Tr [\Pi_{y^n} \rho'_{y^n}] \\
& \leq & 2^{-n(I(Z:B|Y)_\rho - 2c(6\delta))} (1-\sqrt{\epsilon})^{-2}
\;\leq\; 2 \cdot 2^{-n(I(Z:B|Y)_\rho - 2c(6\delta))}.
\end{eqnarray*}
The above upper bound holds trivially if $y^n \not \in T^{Y^n}_\delta$.

Similarly, for any $x^n y^n \in T^{(XY)^n}_\delta$,
\begin{eqnarray*}
\lefteqn{\E_{z^n|x^n}[\Tr [\tPi_{x^n z^n y^n} \rho'_{x^n y^n}]]} \\
& \leq & (1-\sqrt{\epsilon})^{-2}
         \E_{z^n|x^n}[
         \Tr [\Pi_{y^n} \Pi_{x^n y^n} \Pi_{z^n y^n} \Pi_{x^n y^n} 
              \Pi_{y^n} \rho'_{x^n y^n}]
         ] \\
& \leq & 2^{n(H(B|ZY)_\rho + c(\delta))} (1-\sqrt{\epsilon})^{-2}
         \E_{z^n|x^n}[
         \Tr [\Pi_{y^n} \Pi_{x^n y^n} \rho_{z^n y^n} \Pi_{x^n y^n} 
              \Pi_{y^n} \rho'_{x^n y^n}]
         ] \\
&   =  & 2^{n(H(B|ZY)_\rho + c(\delta))} (1-\sqrt{\epsilon})^{-2}
         \Tr [\Pi_{y^n} \Pi_{x^n y^n} \rho_{x^n y^n} \Pi_{x^n y^n} 
              \rho'_{x^n y^n}] \\
& \leq & 2^{-n(H(B|XY)_\rho - c(6\delta))} \cdot 2^{n(H(B|ZY)+c(\delta))}
         (1-\sqrt{\epsilon})^{-2}
         \Tr [\Pi_{y^n} \Pi_{x^n y^n} \Pi_{y^n} \rho'_{x^n y^n}] \\
& \leq & 2^{-n(H(B|XY)_\rho - c(6\delta))} \cdot 2^{n(H(B|ZXY)+c(\delta))}
         (1-\sqrt{\epsilon})^{-2} \\
& \leq & 2 \cdot 2^{-n(I(Z:B|XY)_\rho - 2c(6\delta))}.
\end{eqnarray*}
The above upper bound holds trivially if 
$x^n y^n \not \in T^{(XY)^n}_\delta$.

Next, we see that for any 
$x^n z^n y^n \in \cX^n \times \cZ^n \times \cY^n$,
\begin{eqnarray*}
\Tr [\tPi_{x^n z^n y^n} \rho'_{x^n z^n}]
& \leq & \ellinfty{\rho'_{x^n z^n}} \cdot \Tr [\tPi_{x^n z^n y^n}] \\
& \leq & 4 \cdot 2^{-n(H(B|XZ)_\rho - c(6\delta))} \cdot
         2^{n(H(B|ZY)_\rho + c(\delta))} \\
&   =  & 4 \cdot 2^{-n(H(B|Z)_\rho - c(6\delta))} \cdot
         2^{n(H(B|ZY)_\rho + c(\delta))} 
\;\leq\; 4 \cdot 2^{-n(I(Y:B|Z)_\rho - 2c(6\delta))},
\end{eqnarray*}
where we use Lemma~\ref{lem:asymsmooth}. 

Similarly, for any
$x^n z^n y^n \in \cX^n \times \cZ^n \times \cY^n$,
\begin{eqnarray*}
\Tr [\tPi_{x^n z^n y^n} \rho'_{x^n}]
& \leq & \ellinfty{\rho'_{x^n}} \cdot \Tr [\tPi_{x^n z^n y^n}] \\
& \leq & 4 \cdot 2^{-n(H(B|X)_\rho - c(6\delta))} \cdot
         2^{n(H(B|ZY)_\rho + c(\delta))} \\
&   =  & 4 \cdot 2^{-n(H(B|X)_\rho - c(6\delta))} \cdot
         2^{n(H(B|ZXY)_\rho + c(\delta))} 
\;  = \; 4 \cdot 2^{-n(I(YZ:B|X)_\rho - 2c(6\delta))}.
\end{eqnarray*}

Again, for any
$x^n z^n y^n \in \cX^n \times \cZ^n \times \cY^n$,
\begin{eqnarray*}
\Tr [\tPi_{x^n z^n y^n} \rho']
& \leq & \ellinfty{\rho'} \cdot \Tr [\tPi_{x^n z^n y^n}] \\
& \leq & 4 \cdot 2^{-n(H(B)_\rho - c(2\delta))} \cdot
         2^{n(H(B|ZY)_\rho + c(\delta))} 
\;  = \; 4 \cdot 2^{-n(I(YZ:B)_\rho - 2c(2\delta))}.
\end{eqnarray*}

Finally, for $x^n z^n y^n \in T^{(XZY)^n}_\delta$,
\begin{eqnarray*}
\lefteqn{\Tr [\tPi'_{x^n z^n y^n} \rho'_{x^n z^n y^n}]}\\
& \geq & \Tr [\tPi'_{x^n z^n y^n} \Pi_{z^n y^n} 
              \rho_{z^n y^n} \Pi_{z^n y^n}] - 
         \ellone{\rho_{z^n y^n} - 
                 \Pi_{z^n y^n} \rho_{z^n y^n} \Pi_{z^n y^n}} - 
         \ellone{\rho_{z^n y^n} - \rho'_{x^n z^n y^n}} \\
& \geq & \Tr [\tPi'_{x^n z^n y^n} \Pi_{z^n y^n} 
              \rho_{z^n y^n} \Pi_{z^n y^n}] - 
         2\epsilon - 11\sqrt{\epsilon}
\;\geq\; \Tr [\tPi'_{x^n z^n y^n} \Pi_{z^n y^n} 
              \rho_{z^n y^n} \Pi_{z^n y^n}] - 13 \sqrt{\epsilon},
\end{eqnarray*}
where we used Lemma~\ref{lem:asymsmooth}, 
Fact~\ref{fact:condtypicalspace} and the fact that $\Pi_{z^n y^n}$
commutes with $\rho_{z^n y^n}$. Since
\[
\Tr [\Pi_{x^n y^n} (\Pi_{z^n y^n} \rho_{z^n y^n} \Pi_{z^n y^n})] \geq
\Tr [\Pi_{x^n y^n} \rho_{z^n y^n}] - 
\ellone{\rho_{z^n y^n} - \Pi_{z^n y^n} \rho_{z^n y^n} \Pi_{z^n y^n}} \geq
1 - \epsilon - 2\epsilon = 1 - 3\epsilon,
\]
where we used Facts~\ref{fact:winter}, \ref{fact:condtypicalspace} and 
the fact that $\Pi_{z^n y^n}$ commutes with $\rho_{z^n y^n}$, we see that
\[
\Tr [\tPi'_{x^n z^n y^n} (\Pi_{z^n y^n} \rho_{z^n y^n}
     \Pi_{z^n y^n})] \geq 1 - 4\sqrt{\epsilon},
\]
by Lemma~\ref{lem:chordal}. Hence,
$
\Tr [\tPi'_{x^n z^n y^n} \rho'_{x^n z^n y^n}]
\geq 1 - 17\sqrt{\epsilon}.
$
Now 
\begin{eqnarray*}
\lefteqn{
\Tr [\Pi_{y^n}(\tPi'_{x^n z^n y^n}\rho'_{x^n z^n y^n}\tPi'_{x^n z^n y^n})]
}\\
& \geq & \Tr [\Pi_{y^n} \rho_{z^n y^n}] - 
         \ellone{\rho'_{x^n z^n y^n} - 
                 \tPi'_{x^n z^n y^n} \rho'_{x^n z^n y^n} 
                 \tPi'_{x^n z^n y^n}} - 
         \ellone{\rho_{z^n y^n} - \rho'_{x^n z^n y^n}} \\
& \geq & 1 - \epsilon - 10\epsilon^{1/4} - 11\sqrt{\epsilon} 
\;\geq\; 1 - 22\epsilon^{1/4},
\end{eqnarray*}
where we used Lemma~\ref{lem:asymsmooth} and 
Facts~\ref{fact:condtypicalspace}, \ref{fact:gentle}. By 
Lemma~\ref{lem:chordal},
\[
\Tr [\tPi_{x^n y^n z^n}
     (\tPi'_{x^n z^n y^n}\rho'_{x^n z^n y^n}\tPi'_{x^n z^n y^n})] \geq
1 - 10\epsilon^{1/8},
\]
which implies
\begin{eqnarray*}
\lefteqn{\Tr [\tPi_{x^n y^n z^n} \rho'_{x^n z^n y^n}]} \\
& \geq &
\Tr [\tPi_{x^n y^n z^n}
     (\tPi'_{x^n z^n y^n}\rho'_{x^n z^n y^n}\tPi'_{x^n z^n y^n})] -
\ellone{\rho'_{x^n y^n z^n} -
        \tPi'_{x^n z^n y^n}\rho'_{x^n z^n y^n}\tPi'_{x^n z^n y^n}} \\
& \geq &
1 - 10\epsilon^{1/8} - 10\epsilon^{1/4} 
\;\geq\;
1 - 20\epsilon^{1/8}.
\end{eqnarray*}
We proceed to get
\[
\E_{x^n z^n y^n}[\Tr [\tPi_{x^n z^n y^n} \rho'_{x^n z^n y^n}]] \geq
(1 - \epsilon) (1 - 20\epsilon^{1/4}) \geq
1 - 21\epsilon^{1/8},
\]
using Fact~\ref{fact:typicalset}.

Putting all this together, we get
\begin{eqnarray*}
\lefteqn{\E[\err(m_1,m_2,m_3)]} \\
& \leq & 2 \cdot
         \left(
         \begin{array}{l l}
         2^{n(R_1 + R_2)} \cdot
         2 \cdot 2^{-n(I(Z:B|Y)_\rho - 2c(6\delta))} +
         2^{n R_2} \cdot
         2 \cdot 2^{-n(I(Z:B|XY)_\rho - 2c(6\delta))} + {} \\
         2^{n R_3} \cdot
         4 \cdot 2^{-n(I(Y:B|Z)_\rho - 2c(6\delta))} +
         2^{n (R_2 + R_3)} \cdot
         4 \cdot 2^{-n(I(YZ:B|X)_\rho - 2c(6\delta))} + {} \\
         2^{n (R_1 + R_2 + R_3)} \cdot
         4 \cdot 2^{-n(I(YZ:B)_\rho - 2c(6\delta))} +
         21\epsilon^{1/8}
         \end{array}
         \right)^{1/2}.
\end{eqnarray*}
Choosing rate triples satisfying the inequalities
\begin{equation}
\begin{array}{c}
R_1 + R_2 \leq I(Z:B|Y)_\rho - 4 c(6\delta),
R_2 \leq I(Z:B|XY)_\rho - 4 c(6\delta), 
R_3 \leq I(Y:B|Z)_\rho - 4 c(6\delta), \\
R_2 + R_3 \leq I(YZ:B|X)_\rho - 4 c(6\delta), ~~~
R_1 + R_2 + R_3 \leq I(YZ:B)_\rho - 4 c(6\delta), 
\end{array}
\label{eq:cmgmacprime1}
\end{equation}
gives us
$\E[\err(m_1,m_2,m_3)] \leq 10\epsilon^{1/16} + 8 \cdot 2^{-nc(6\delta)}$,
for which if we take $n = C \log (1/\epsilon) \delta^{-2}$ 
for a constant
$C$ depending only on the state $\rho^{XZYB}$, we get
\[
\E[\err(m_1,m_2)] \leq \E[\err(m_1,m_2,m_3)] \leq 10\epsilon^{1/16} + 
  8 \cdot 2^{-C\delta^{-2} c(6\delta) \log(1/\epsilon)}.
\]

We now describe the receiver's decoding strategy when the rate triple
falls into the second region. 
Define $\tPi_{x^n z^n} := \Pi_{z^n} := \Pi^{\rho_{z^n}}_{\delta}$ if 
$x^n z^n \in T^{(XZ)^n}_\delta$, and $\tPi_{x^n z^n} := 0$ otherwise.
The receiver uses the sequential
decoding strategy of Figure~\ref{fig:cmgdecoding2}. 
\begin{figure}[!ht]
\begin{center}
\begin{tabular}{c}
\hline \hline 
{\bf Decoding strategy for codebook C for second region} \\
\hline \\
\begin{minipage}[t]{15cm}
\begin{itemize}
\item Received some state $\tau^{B^n}$ at the channel output;

\item 
\begin{minipage}[t]{13.5cm}
\begin{tabbing}
For \= $m_1 = 1$ to  $2^{nR_1}$, \\
\> For \= $m_2 = 1$ to $2^{nR_2}$, \\
\> \>  Repeat the following till first success; \\
\> \>  \begin{minipage}[t]{11.9cm}
         \begin{itemize}
         \item Let $\tau_{m_1 m_2}$ be the state at the beginning of 
               loop number $(m_1, m_2)$;
     
      \item Project $\tau_{m_1 m_2}$ onto 
            $\tPi_{(x^n,z^n)(m_1,m_2)}$.
            If fail proceed, else
            declare success and announce $(m_1, m_2)$ to be the 
            guess for 
            the messages sent by senders~1 and 2.
      \end{itemize}
      \end{minipage}
\end{tabbing}
\end{minipage}

\item If all candidate message triples are exhausted without success, 
      abort.
\end{itemize}
\end{minipage} \\ \\
\hline \hline
\end{tabular}
\end{center}
\caption{Decoding strategy for CMG-MAC for second region.}
\label{fig:cmgdecoding2}
\end{figure}

Again, we analyse the decoding error probability for the modified
channel $\chan'$ first.
Arguing as in Section~\ref{subsec:ccq}, we see that the expected error 
probability $\E[\err(m_1,m_2)]$ of decoding $(m_1,m_2)$ is now at 
most
\begin{eqnarray*}
\lefteqn{\E[\err(m_1,m_2)]} \\
& \leq & 2 \cdot
         \E\left[
         \left(
         \begin{array}{c}
         \displaystyle{\sum_{(i,j): (i,j) \neq (m_1,m_2)}}
         \Tr [\tPi_{(x^n,z^n)(i,j)} 
                 \rho'_{(x^n,z^n,y^n)(m_1,m_2,m_3)}] + {} \\
         1 - 
         \Tr [\tPi_{(x^n,z^n)(m_1,m_2)} 
                 \rho'_{(x^n,z^n,y^n)(m_1,m_2,m_3)}]
         \end{array}
         \right)^{1/2} 
         \right] \\
& \leq & 2 \cdot
         \left(
         \begin{array}{c}
         \displaystyle{\sum_{(i,j): i \neq m_1}}
         \Tr [\E[\tPi_{(x^n,z^n)(i,j)} 
                 \rho'_{(x^n,z^n,y^n)(m_1,m_2,m_3)}]] + {} \\
         \displaystyle{\sum_{j: j \neq m_2}}
         \Tr [\E[\tPi_{(x^n,z^n)(m_1,j)} 
              \rho'_{(x^n,z^n,y^n)(m_1,m_2,m_3)}]] + {} \\
         1 - 
         \E[\Tr [\tPi_{(x^n,z^n)(m_1,m_2)} 
                 \rho'_{(x^n,z^n,y^n)(m_1,m_2,m_3)}]]
         \end{array}
         \right)^{1/2} \\
& \leq & 2 \cdot
         \left(
         \begin{array}{l l}
         2^{n(R_1 + R_2)} \cdot
         \E_{x^n z^n}[\Tr [\tPi_{x^n z^n} \rho']] + 
         2^{n R_2} \cdot
         \E_{x^n z^n}[\Tr [\tPi_{x^n z^n} \rho'_{x^n}]] + {} \\
         1 - 
         \E_{x^n z^n}[\Tr [\tPi_{x^n z^n} \rho'_{x^n z^n}]]
         \end{array}
         \right)^{1/2}.
\end{eqnarray*}

For any $(x^n, z^n) \in \cX^n \times \cZ^n$,
\[
\Tr [\tPi_{x^n z^n} \rho']
  \leq   \ellinfty{\rho'} \cdot \Tr [\tPi_{x^n z^n}]
  \leq   2^{-n(H(B)_\rho - c(2\delta)} \cdot 
         2^{n(H(B|Z)_\rho + c(\delta))} 
  \leq   4 \cdot 2^{-n(I(Z:B)_\rho - 2c(2\delta))}.
\]

Also, for any
$(x^n, z^n) \in \cX^n \times \cZ^n$,
\begin{eqnarray*}
\Tr [\tPi_{x^n z^n} \rho'_{x^n}]
& \leq & \ellinfty{\rho'_{x^n}} \cdot \Tr [\tPi_{x^n z^n}] 
\;\leq\; 4 \cdot 2^{-n(H(B|X)_\rho - c(6\delta))} \cdot
         2^{n(H(B|Z)_\rho + c(\delta))} \\
&   =  & 4 \cdot 2^{-n(H(B|X)_\rho - c(6\delta))} \cdot
         2^{n(H(B|ZX)_\rho + c(\delta))}
\;  = \; 4 \cdot 2^{-n(I(Z:B|X)_\rho - 2c(6\delta))}. 
\end{eqnarray*}

Finally, for $x^n z^n \in T^{(XZ)^n}_\delta$,
\begin{eqnarray*}
\lefteqn{\Tr [\tPi_{x^n z^n} \rho'_{x^n z^n}]} \\
& \geq &
\Tr [\Pi_{z^n} \rho_{z^n}] - \ellone{\rho_{z^n} - \rho'_{x^n z^n}} 
\;\geq\;
\Tr [\Pi_{z^n} \rho_{z^n}] - 
\E_{y^n}[\ellone{\rho_{z^n y^n} - \rho'_{x^n z^n y^n}}] \\
& \geq &
1 - \epsilon - 11\sqrt{\epsilon}
\;\geq\;
1 - 12\sqrt{\epsilon},
\end{eqnarray*}
where we used Lemma~\ref{lem:asymsmooth}. Using Fact~\ref{fact:typicalset},
we get
\[
\E_{x^n z^n}[\Tr [\tPi_{x^n z^n} \rho'_{x^n z^n}]] \geq
(1-\epsilon) (1 - 12\sqrt{\epsilon}) \geq
1 - 13\sqrt{\epsilon}.
\]

Putting it together, we get
\begin{eqnarray*}
\lefteqn{\E[\err(m_1,m_2)]} \\
& \leq & 2 \cdot
         \left(
         2^{n(R_1 + R_2)} \cdot
         4 \cdot 2^{-n(I(Z:B)_\rho - 2c(6\delta))} +
         2^{n R_2} \cdot
         4 \cdot 2^{-n(I(Z:B|X)_\rho - 2c(6\delta))} +
         13\sqrt{\epsilon}
         \right)^{1/2}.
\end{eqnarray*}
Choosing rate triples satisfying the inequalities
\begin{equation}
R_1 + R_2 \leq I(Z:B)_\rho - 4 c(6\delta), ~~~
R_2 \leq I(Z:B|X)_\rho - 4 c(6\delta), 
\label{eq:cmgmacprime2}
\end{equation}
gives us
$\E[\err(m_1,m_2)] \leq 8\epsilon^{1/4} + 8 \cdot 2^{-nc(6\delta)}$,
for which if we take $n = C \log (1/\epsilon) \delta^{-2}$ 
for a constant
$C$ depending only on the state $\rho^{XZYB}$, we get
\[
\E[\err(m_1,m_2)] \leq 8\epsilon^{1/4} + 
  8 \cdot 2^{-C\delta^{-2} c(6\delta) \log(1/\epsilon)}.
\]

Note that in the second decoding strategy, the rate $R_3$ plays no
role. However, if $R_3 < I(Y:B|Z)_\rho$, the rate triple falls into
region~1. Hence, we impose the condition $R_3 \geq I(Y:B|Z)_\rho$ while
describing region~2.

Since we have shown that $\E[\err(m_1,m_2)]$ is small for all 
message pairs $(m_1,m_2)$ for both region~1 and 2, the expected average 
error probability 
$
\E[\avgerr_{\chan'}] := 
\E[2^{-n(R_1+R_2)} \sum_{m_1,m_2} \err_{\chan'}(m_1,m_2)]
$ 
for channel $\chan'$ is also small.
We can now apply an argument similar to that of the proof of
Theorem~\ref{thm:cq} to show that the expected average error probability
is also small for $n$ copies of the original
channel $\chan$ for the same encoding and decoding procedures as
described in Figures~\ref{fig:cmgencoding}, \ref{fig:cmgdecoding1},
\ref{fig:cmgdecoding2}. Doing so allows us to show that
\[
\E[\avgerr_{\chan}] \leq \E[\avgerr_{\chan'}] + 13\sqrt{\epsilon}
\leq 23\epsilon^{1/16} + 
8 \cdot 2^{-C\delta^{-2} c(6\delta) \log(1/\epsilon)},
\]
where $C$ is the same constant as above, depending only on the channel
$\chan$. This quantity can be made
arbitrarily small by choosing $\epsilon$ and $\delta$ appropriately.
We can also simultaneously make $c(6\delta)$ arbitrarily small so that
the rate pair $(R_1, R_2)$ approaches the boundary of the region
described in Equation~\ref{eq:cmgmac}.
We have thus shown the following theorem.
\begin{theorem}
For the CMG-MAC with classical input and
quantum output,
the random encoding and sequential decoding strategies described in
Figures~\ref{fig:cmgencoding}, and Figures~\ref{fig:cmgdecoding1} and
\ref{fig:cmgdecoding2} can 
achieve any rate in the region described by 
Equation~\ref{eq:cmgmac} with arbitrary small expected
average probability of error in the asymptotic limit
of many independent uses of the channel.
\label{thm:cmgmac}
\end{theorem}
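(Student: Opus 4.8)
The plan is to assemble the two-case analysis carried out above into a single limiting statement. First observe that the region of Equation~\ref{eq:cmgmac} is the union of two (overlapping) subregions, and that the decoders of Figures~\ref{fig:cmgdecoding1} and~\ref{fig:cmgdecoding2} were tailored so that the former handles rate triples in the first subregion and the latter handles triples in the second. So I would fix an arbitrary rate triple $(R_1,R_2,R_3)$ strictly inside the region of Equation~\ref{eq:cmgmac}; it then lies strictly inside at least one of the two subregions, and I argue in that case.

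Suppose the triple lies strictly inside the first subregion. Since $c(6\delta)\to 0$ as $\delta\to 0$, I may fix $\delta$ small enough that the triple still satisfies the shrunk constraints of Equation~\ref{eq:cmgmacprime1}. The discussion preceding the theorem --- passing to the auxiliary channel $\chan'$ of Lemma~\ref{lem:asymsmooth}, lower bounding each per-message success probability with Lemma~\ref{lem:seq}, and controlling the five families of cross terms using Lemma~\ref{lem:chordal} (for the surrounding-projector bound on $\tPi_{x^n z^n y^n}$) together with the bounds on $\ellinfty{\rho'^{B^n}_{x^n z^n}}$, $\ellinfty{\rho'^{B^n}_{x^n}}$, $\ellinfty{\rho'^{B^n}}$ from Lemma~\ref{lem:asymsmooth} and with Facts~\ref{fact:condtypicalspace} and~\ref{fact:winter} --- already shows $\E[\err_{\chan'}(m_1,m_2)]\le 10\epsilon^{1/16}+8\cdot 2^{-nc(6\delta)}$, hence the same bound for $\E[\avgerr_{\chan'}]$. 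I then transfer back to $n$ copies of $\chan$ exactly as in the proof of Theorem~\ref{thm:cq}: the decoding POVM produced by Figure~\ref{fig:cmgdecoding1} for a fixed codebook does not depend on which channel we imagine, so $|\E[\avgerr_{\chan}]-\E[\avgerr_{\chan'}]|\le \ellone{\rho^{X^n Z^n Y^n B^n}-\rho'^{X^n Z^n Y^n B^n}}\le 13\sqrt{\epsilon}$ by the last clause of Lemma~\ref{lem:asymsmooth}. If instead the triple lies strictly inside the second subregion, I run the identical argument with the decoder of Figure~\ref{fig:cmgdecoding2}, with Equation~\ref{eq:cmgmacprime2} replacing Equation~\ref{eq:cmgmacprime1} and the simpler two-term cross bound; here the rate $R_3$ plays no role, which is why the hypothesis $R_3\ge I(Y:B|Z)_\rho$ is imposed in that subregion (if $R_3<I(Y:B|Z)_\rho$ the triple already lies in the first subregion).

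In either case, setting $n=C\delta^{-2}\log(1/\epsilon)$ for the channel-dependent constant $C$ and letting $\epsilon\to 0$ (so $n\to\infty$) drives both the power-of-$\epsilon$ term and $2^{-C\delta^{-2}c(6\delta)\log(1/\epsilon)}$ to zero; since $\delta$ may be taken arbitrarily small, the shrinkage $c(6\delta)$ in Equations~\ref{eq:cmgmacprime1} and~\ref{eq:cmgmacprime2} is negligible, and every interior point of the region of Equation~\ref{eq:cmgmac} is achieved with vanishing expected average error. A standard expurgation of the worst half of the codewords would upgrade this to a deterministic codebook with small average error, although the theorem only claims the expected statement.

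The substantive work --- producing $\rho'$, building the modified projectors $\tPi'_{x^n z^n y^n}$ and $\tPi_{x^n z^n y^n}$ through two applications of Lemma~\ref{lem:chordal}, and forcing the surrounding-projector trick through every cross term --- has already been done above, so what remains is bookkeeping. The one point needing genuine care is the interface between the two subregions: one must confirm that their union covers a neighbourhood of every interior point of the full region even after the $c(6\delta)$ shrinkage, in particular for triples with $R_3$ near $I(Y:B|Z)_\rho$; this follows from the observation just noted, that $R_3<I(Y:B|Z)_\rho$ already places a triple in the first subregion, so no interior point is stranded between the two cases. The other thing to keep straight is the order of limits --- $n\to\infty$ first at fixed $\epsilon,\delta$, then $\epsilon\to 0$ and $\delta\to 0$ --- which is precisely the pattern used for Theorems~\ref{thm:cq} and~\ref{thm:ccq}.
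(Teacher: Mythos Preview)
Your proposal is correct and follows essentially the same approach as the paper: the paper carries out exactly the two-case analysis you describe (auxiliary channel $\chan'$ via Lemma~\ref{lem:asymsmooth}, sequential error bound via Lemma~\ref{lem:seq}, cross-term control via Lemma~\ref{lem:chordal} and the $\ell_\infty$ bounds, then transfer back to $\chan$ by the $13\sqrt{\epsilon}$ trace-distance argument of Theorem~\ref{thm:cq}) in the text preceding the theorem, and then simply states ``We have thus shown the following theorem.'' Your added remarks on the interface between the two subregions and the order of limits are sound and match the paper's treatment; the expurgation comment is extra but, as you note, not needed for the claim as stated.
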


\paragraph{A pretty good measurement decoder:}
Region~1 of our inner bound for the CMG-MAC can also be achieved by
using the pretty good measurement with positive operators 
\[
\Pi_{y^n(m_3)} \Pi_{x^n(m_1) y^n(m_3)} \Pi_{z^n(m_1, m_2) y^n(m_3)} 
\Pi_{x^n(m_1) y^n(m_3)} \Pi_{y^n(m_3)}
\]
for typical $x^n(m_1) z^n(m_1, m_2) y^n(m_3)$ and operator $0$ for
atypical $x^n(m_1) z^n(m_1, m_2) y^n(m_3)$.
Region~2 of our inner bound for the CMG-MAC can also be achieved by
using the pretty good measurement with positive operators 
$\Pi_{z^n(m_1, m_2)}$ for typical $x^n(m_1) z^n(m_1, m_2)$ and 
operator $0$ for atypical $x^n(m_1) z^n(m_1, m_2)$.
We analyse the decoding error as above, imagining
that the channel we are working with is a single copy of $\chan'$ and
then concluding that the decoding error for $n$ copies of the original
channel $\chan$ is small. 

\section{Sequential decoding for a two user interference channel}
\label{sec:ccqq}
The best known inner bound for a general two-sender two-receiver
interference channel in the classical setting is the Han-Kobayashi
rate region. To achieve a rate pair in this region requires one
to construct a jointly typical decoder for a three sender MAC. 
Fawzi et al.~\cite{mcgill:qic} showed that the same
technique would extend to a two-sender two-receiver interference
channel with classical inputs and quantum outputs (ccqq-IC), if we 
had a jointly typical decoder for a three-sender
MAC with classical inputs and quantum output. They conjectured the 
existence of such a jointly typical decoder and proved it for a special
case where some averaged output states commute. They also proved some
other variants of the conjecture. However they left open
the general case of the conjecture. In this work, we make some progress
on the conjecture by proving it with reduced commutativity requirements,
but we have not yet been able to prove the general case. Nevertheless,
we have constructed a jointly typical decoder for the CMG-MAC
in Section~\ref{subsec:cmgmac}. Using this decoder,
we prove the Chong-Motani-Garg inner bound
for the ccqq-IC in this section. The Chong-Motani-Garg inner bound
is known to be equivalent to the Han-Kobayashi inner bound under the
addition of a time-sharing public random variable and going through
all possible joint probability distributions of the time sharing variable
and the input random variables~\cite{cmgelgamal}.

Let $\chan: (x, y) \mapsto \rho_{x,y}$ be a 
ccqq-IC i.e. channel $\chan$ takes two classical inputs $x$, $y$ and
gives a quantum output $\rho_{x,y}$ in the Hilbert space 
$\cB_1 \otimes \cB_2$ of the two receivers. Let
$X$, $Y$ be the input random variables of the channel. Like
Chong, Motani and Garg, we introduce three additional random variables: 
a `time-sharing'
random variable $Q$ which is public to all senders and receivers, and
random variables $U$, $V$ which represent the `common' parts of 
messages of senders 1 and 2. In other words, we consider a joint
probability distribution on 
$\cQ \times \cU \times \cX \times \cV \times \cY$ that factors as
$p_{QUXVY}(q, u, x, v, y) = p_Q(q) p_{UX|Q}(u, x | q) p{VY|Q}(v, y|q)$. 
Let the joint state of the system $QUXVY B_1 B_2$ be
\begin{eqnarray*}
\rho^{QUXVYB_1B_2} :=  
\sum_{(q,u,x,v,y) \in \cQ \times \cU \times \cX \times \cV \times \cY} 
&
p_Q(q) p_{UX|Q}(u,x|q) p_{VY|Q}(v,y|q) \\
& 
\ketbra{q,u,x,v,y} \otimes \rho^{B_1 B_2}_{x,y}.
\end{eqnarray*}

Suppose senders 1 and 2 want to transmit to receivers 1 and 2 at rates
$R_1$ and $R_2$ using $n$ independent uses of the channel $\chan$. 
Sender 1 splits her rate $R_1$ into $R_{1c}$ for the `common' part
and $R_{1p}$ for the `personal' part. In other words, her messages
come from the set $2^{nR_{1c}} \times 2^{nR_{1p}}$ where 
$R_{1c} + R_{1p} = R_1$. Let $m_{1c}$, $m_{1p}$ denote the two parts
of the message of sender 1. A similar `rate-splitting' strategy is 
adopted by sender 2. Receiver~1 wants to decode the complete message
of Sender~1, and for some rate points, also the common message of
Sender~2. Receiver~2 also has a similar aim.
The senders use the random encoding strategy of
Figure~\ref{fig:ccqqencoding} in order to transmit information over the 
channel $\chan$.
\begin{figure}[!ht]
\begin{center}
\begin{tabular}{l l}
\hline \hline 
\multicolumn{2}{c}{{\bf Encoding strategy to create a codebook C}} \\
\hline \\
{\bf Public coin:} &
Choose $q^n \sim Q^n$. \\ \\
{\bf Sender 1:} &
For $m_{1c} = 1 \mbox{ to } 2^{nR_{1c}}$, choose 
$u^n(m_{1c}) \sim U^n|q^n$ independently.  \\ 
& 
For $m_{1p} = 1 \mbox{ to } 2^{nR_{1p}}$, choose 
$x^n(m_{1c},m_{1p}) \sim X^n|(q^n,u^n(m_{1c}))$ independently.  \\
& Feed $x^n(m_{1c},m_{1p})$ as input to the channel. 
\\ \\
{\bf Sender 2:} &
For $m_{2c} = 1 \mbox{ to } 2^{nR_{2c}}$, choose 
$v^n(m_{2c}) \sim V^n|q^n$ independently.  \\ 
&
For $m_{2p} = 1 \mbox{ to } 2^{nR_{2p}}$, choose 
$y^n(m_{2c},m_{1p}) \sim Y^n|(q^n,v^n(m_{2c}))$ independently. \\
& Feed $y^n(m_{2c},m_{2p})$ as input to the channel. 
\\ \\
\hline \hline
\end{tabular}
\end{center}
\caption{Encoding strategy for the ccqq-IC.}
\label{fig:ccqqencoding}
\end{figure}

Fix a typical $q^n \in T^{Q^n}_\delta$. Fix a symbol $q' \in \cQ$. 
Consider the copies $i$ of the channel $\chan$ with $q^n(i) = q'$.
From the typicality of $q^n$, the number of such copies satisfies
$N(q'|q^n) \in n p(q) (1 \pm \delta)$.
For those copies, from the point of view of each receiver, the channel 
behaves as a CMG-MAC with joint state 
\[
\rho^{UXV B_1}(q') := 
\sum_{(u,x,v) \in \cU \times \cX \times \cV} 
p_{UX|Q}(u,x|q') p_{V|Q}(v|q') \ketbra{u,x,v} \otimes 
\rho^{B_1}_{x,v},
\]
where
$
\rho^{B_1}_{x,v} :=
\sum_{y \in \cY} p_{Y|VQ}(y|v,q') \rho^{B_1}_{x,y}.
$
Using the decoding strategy of Section~\ref{subsec:cmgmac}, receiver~1
can achieve any rate triple in the following region:
\[
\begin{array}{c}
R_{2c}(q') < I(V : B_1 | X,q')_\rho, ~~~
R_{1p}(q') < I(X : B_1 | U V,q')_\rho, ~~~ 
R_{1c}(q') + R_{1p}(q') < I(X : B_1 | V,q')_\rho, \\
R_{1p}(q') + R_{2c}(q') < I(X V : B_1 | U,q')_\rho, ~~~~~
R_{1c}(q') + R_{1p}(q') + R_{2c}(q') < I(X V : B_1|q')_\rho, \\
\mbox{OR} \\
R_{2c}(q') \geq I(V : B_1 | X,q')_\rho, ~~~~~
R_{1p}(q') < I(V : B_1 | U,q')_\rho, ~~~~~ 
R_{1c}(q') + R_{1p}(q') < I(X : B_1 | q')_\rho. 
\end{array}
\]
Running over all possible symbols $q' \in q^n$ and appealing to the
typicality of $q^n$, we see that receiver~1 can decode with low
error probability for any point
in the following rate region:
\begin{equation}
\begin{array}{c}
R_{2c} < I(V : B_1 | X Q)_\rho, ~~~~~
R_{1p} < I(X : B_1 | U V Q)_\rho, ~~~~~ 
R_{1c} + R_{1p} < I(X : B_1 | V Q)_\rho, \\
R_{1p} + R_{2c} < I(X V : B_1 | U Q)_\rho, ~~~~~
R_{1c} + R_{1p} + R_{2c} < I(X V : B_1| Q)_\rho, \\
\mbox{OR} \\
R_{2c} \geq I(V : B_1 | X Q)_\rho, ~~~~~
R_{1p} < I(V : B_1 | U Q)_\rho, ~~~~~ 
R_{1c} + R_{1p} < I(X : B_1 |  Q)_\rho. 
\end{array}
\label{eq:cmg-ic1}
\end{equation}
Since the decoding error probability of receiver~1 is small, an
application of Fact~\ref{fact:gentle} allows Receiver~2 to pretend
as if he is the sole receiver, and use the same CMG-MAC decoding strategy.
Hence, he can decode with low error probability for any point in the
following rate region:
\begin{equation}
\begin{array}{c}
R_{1c} < I(U : B_2 | Y Q)_\rho, ~~~~~
R_{2p} < I(Y : B_2 | U V Q)_\rho, ~~~~~ 
R_{2c} + R_{2p} < I(Y : B_2 | U Q)_\rho, \\
R_{2p} + R_{1c} < I(Y U : B_2 | V Q)_\rho, ~~~~~
R_{2c} + R_{2p} + R_{1c} < I(Y U : B_2| Q)_\rho, \\
\mbox{OR} \\
R_{1c} \geq I(U : B_2 | Y Q)_\rho, ~~~~~
R_{2p} < I(U : B_2 | V Q)_\rho, ~~~~~ 
R_{2c} + R_{2p} < I(Y : B_2 |  Q)_\rho. 
\end{array}
\label{eq:cmg-ic2}
\end{equation}
We have thus shown that any rate quadruple 
$(R_{1c}, R_{1p}, R_{2c}, R_{2p})$ satisfying the inequalities
in inequalities~\ref{eq:cmg-ic1} and \ref{eq:cmg-ic2} above
gives us an achievable rate pair $(R_1, R_2)$ for the ccqq-IC, where
$R_1 = R_{1c} + R_{1p}$ and $R_2 = R_{2c} + R_{2p}$. 

Since our inner bound for the CMG-MAC is larger than the inner bound 
known previously in the classical setting, our inner bound for the 
interference channel is at least as large as the Chong-Motani-Garg
inner bound. One may wonder whether our inner bound is actually larger.
We now prove that in fact it is not so; for the interference channel
we get exactly the same inner bound as Chong, Motani and Garg. We are 
grateful to Omar Fawzi~\cite{fawzi:cmg} for this observation and include 
it here with his permission. 

Consider a rate quadruple $(R_{1c}, R_{1p}, R_{2c}, R_{2p})$ where
$(R_{1c}, R_{1p}, R_{2c})$ lies in the second part of the region in
inequalities~\ref{eq:cmg-ic1}. 
Recall that in this case Receiver~1 
ignores the common message of Sender~2 and decodes the 
complete message of Sender~1 by using projectors depending only on
the marginal distribution on $UX$. Consider now a different encoding 
strategy where the common message of Sender~2 is set to a fixed value,
that is, $V$ is now fixed,
$Y$ is distributed with the same marginal distribution as before and
$UX$ also have the same marginal distribution as before. Receiver~1
can decode with the same rate pair $(R_{1c}, R_{1p})$ as before using
the same strategy. Observe now that Receiver~2 can decode with the
new rate pair 
$(R'_{1c}, R'_{2c}, R'_{2p}) = (R_{1c}, 0, R_{2c} + R_{2p})$ since in
the inequalities~\ref{eq:cmg-ic2}, the constraints on $R_{1c}$,
$R'_{2c} + R'_{2p}$
and $R'_{2c} + R'_{2p} + R'_{1c}$ are unaffected by fixing $V$, and
the constraints on $R'_{2p}$ and $R'_{2p} + R_{1c}$ become equal to
those on $R'_{2c} + R'_{2p}$ and $R'_{2c} + R'_{2p} + R'_{1c}$
respectively after fixing $V$. Thus, the rates $R_1 = R_{1c} + R_{1p}$
and $R_2 = R_{2c} + R_{2p}$ are unaltered by the new encoding and
decoding strategy. In the new strategy the rate triple
$(R_{1c}, R_{1p}, 0)$ lies in the first part of the region
defined by the inequalities~\ref{eq:cmg-ic1}, whereas the rate triple
$(0, R_{2c} + R_{2p}, R_{1c})$ lies in the same part of the region
defined by the inequalities~\ref{eq:cmg-ic2} as before since
$R_{1c}$ is unaffected. We can repeat the argument with Receiver~2
allowing us to conclude that for any feasible rate quadruple
$(R_{1c}, R_{1p}, R_{2c}, R_{2p})$, there is a another feasible
rate quadruple $(R''_{1c}, R''_{1p}, R''_{2c}, R''_{2p})$ such that 
$R''_1 := R''_{1c} + R''_{1p} = R_{1c} + R_{1p} =: R_1$, 
$R''_2 := R''_{2c} + R''_{2p} = R_{2c} + R_{2p} =: R_2$, and 
$(R''_{1c}, R''_{1p}, R''_{2c})$  and $(R''_{1c},  R''_{2c}, R''_{2p})$
lie in 
the first parts of the regions defined by inequalities~\ref{eq:cmg-ic1}
and \ref{eq:cmg-ic2} respectively. Since the first parts of our regions
are contained in the Chong-Motani-Garg region, we conclude that we
get exactly the same inner bound for the interference channel as 
Chong, Motani and Garg.
In fact, what we have actually shown is
that the first parts of the regions defined 
by inequalities~\ref{eq:cmg-ic1} and \ref{eq:cmg-ic2} suffice to get the 
Chong-Motani-Garg inner bound for the interference channel.

\section{Discussion and open problems}
\label{sec:discussion}
In this paper, we have made several contributions to quantum Shannon
theory. We have constructed a general paradigm for sequentially decoding
channels with classical input and quantum output which directly mimics 
the classical intuition.
We have applied this paradigm successfully to several channels culminating
in a proof of the achievability of the Chong-Motani-Garg inner bound for
the two user interference channel. In order to do this, we discovered
a new kind of conditionally typical projector that mimics the operation
of intersection of two typical sets in classical information theory. 

Though we have proved the Chong-Motani-Garg inner bound for the ccqq-IC
by first proving an inner bound for the CMG-MAC as in the classical
approach of Chong, Motani and Garg, it is actually possible to
obtain the Chong-Motani-Garg inner bound using just our simultaneous
decoder for the ccq-MAC by combining it with geometric arguments from
\c{S}a\c{s}o\u{g}lu's paper~\cite{sasoglu}. Details will appear in the
full version of the paper. We followed the conventional route and
proved our inner bound for the CMG-MAC
in order to illustrate the reach and limitation of our current techniques.

The main problem left open in this work is the existence of a 
simultaneous decoder
for the multiple access channel with three or more senders. Such a 
decoder may pave the way for porting many results from classical network
information theory to the quantum setting.

\section*{Acknowledgements}
We are grateful to Omar Fawzi for allowing his observation
that the larger inner bound on the CMG-MAC does not lead to an
enlargement of the Chong-Motani-Garg inner bound for the interference
channel, to be included in the paper.
We thank Mark Wilde, Patrick Hayden and Ivan Savov
for going through preliminary drafts of this paper and giving 
valuable feedback. 

\bibliography{sequential}

\end{document}